\providecommand{\tabularnewline}{\\}
\providecommand{\algorithmname}{Algorithm}
\newtheorem{thm}{Theorem}
\newdefinition{defn}{Definition}
\newdefinition{problem}{Problem}
\newtheorem{lem}{Lemma}
\newdefinition{prop}{Proposition}
\newdefinition{example}{Example}
\newdefinition{rem}{Remark}
\newdefinition{fact}{Assumption}
\journal{...}
\begin{document}
	
	\begin{frontmatter}
\title{Periodic event-triggered output regulation for linear multi-agent systems\\  \textcolor{blue}{Extended version (Supplementary file)}}

\author[hz,aa]{Shiqi Zheng\corref{cor1}}



\address[hz]{School of Automation, China University of Geosciences,Wuhan 430074, China}
\address[aa]{Hubei key Laboratory of Advanced Control and Intelligent Automation for Complex Systems, Wuhan 430074, China}

\author[l2]{Peng Shi}
\address[l2]{School of Electrical and Electronic Engineering, The University of Adelaide, Adelaide, SA 5005, Australia}

\author[l3]{Ramesh K. Agarwal}
\address[l3]{Department of Mechanical Engineering, Washington University in St Louis Campus, Box 1185, MO 63130, USA}

\author[l4]{Chee Peng Lim}
\address[l4]{Institute for Intelligent Systems Research and Innovation, Deakin University, Australia}

\begin{abstract}
This study considers the problem of periodic event-triggered (PET)
cooperative output regulation for a class of linear multi-agent systems.
The advantage of the PET output regulation is that the data transmission
and triggered condition are only needed to be monitored at discrete
sampling instants. It is assumed that only a small number of agents
can have access to the system matrix and states of the leader. Meanwhile,
the PET mechanism is considered not only in the communication between
various agents, but also in the sensor-to-controller and controller-to-actuator
transmission channels for each agent. The above problem set-up will
bring some challenges to the controller design and stability analysis.
Based on a novel PET distributed observer, a PET dynamic output feedback
control method is developed for each follower. Compared with the existing
works, our method can naturally exclude the Zeno behavior, and the
inter-event time becomes multiples of the sampling period. Furthermore,
for every follower, the minimum inter-event time can be determined
\textit{a prior}, and computed directly without the knowledge of the
leader information. An example is given to verify and illustrate the
effectiveness of the new design scheme. 
\end{abstract}

\begin{keyword}
	
Multi-agent systems, periodic event-triggered condition, output regulation

\end{keyword}

	\end{frontmatter}


\section{Introduction}

Cooperative control for multi-agent systems has attracted extensive
attention because of its potential applications \citep{key-3,key-b2,key-6aa,key-icl, key-b33, key-17}
in multi-vehicle formation, wireless sensor network, electrical power
systems $etc$. The cooperative control problem includes leaderless
and leader-following consensus, containment, rendezvous, formation
$etc$. Various control strategies have been utilized for multi-agent
systems, such as adaptive control \citep{key-li-2,key-b2,key-15,key-inn},
sliding mode control \citep{key-2a} and model predictive control.

The output regulation problem for multi-agent systems has recently
drawn much interest from researchers. The purpose of the regulation
problem is to make the output of each follower track a class of reference
input and simultaneously handle the external disturbance \citep{key-14a}.
The reference input and disturbance signals are generated by the exosystem
or leader. In this sense, the output regulation problem is more general
than the standard tracking or stabilization problem \citep{key-9-2,key-li-1,key-7a,key-16a}.
Until now, many excellent results have been proposed in this field
\citep{key-13,key-ad1,key-14a,key-6aaa,key-3}. For instance, in \cite{key-14aa},
based on a new adaptive distributed observer, the cooperative output
regulation problem for linear multi-agent systems was solved.  \cite{key-2}
studied the semi-global output feedback regulation problem for a class
of nonlinear multi-agent systems with heterogeneous relative degrees.

Most of the above works attempt to solve the output regulation problem
under the assumption that all the states can be transmitted continuously.
However, continuous transmission can entail high communication cost
and energy consumption. As a solution to this issue, event-triggered
control strategies have been presented \citep{key-14aa-1,key-5,key-6,key-3}.
The idea of event-triggered control is to transmit the data according
to a well-defined triggered condition. In this way, the communication
burden can be reduced considerably. 

Different types of event-triggered mechanisms \citep{key-2a-1} have
been proposed such as continuous-time event-triggered control, self-triggered
control, dynamic event-triggered control $etc$. More recently, the periodic
event-triggered (PET) control strategy has become a hot topic \citep{key-8,key-11,key-7,key-4}.
Compared with other event-triggered mechanisms, the key feature of
PET control is that the data transmission and the triggered condition
are only needed to be monitored at discrete sampling instants. This
feature benefits control systems in the following aspects: 1) It naturally
rules out the Zeno behavior; 2) The inter-event times become multiples
of the sampling period. This can be very helpful for digital implementation,
and scheduling of  many applications on a shared communication medium;
3) It is more practical in some engineering situations where the states
measurements are only available at periodic intervals due to the constraints
on sensors and network. 4) It can reduce the energy consumption for
evaluating triggered conditions in contrast with continuous-time event-triggered
control. \textit{Nevertheless,} \textit{to the best of our knowledge,
	no works have ever considered the periodic event-triggered output regulation
	for multi-agent systems. }

Motivated by the aforementioned idea, this paper will consider the
PET cooperative output regulation problem for a class of linear multi-agent
systems. The problem is challenging due to the following reasons:
1) Only some of the followers have access to the system matrix or
the states information of the leader; 2) The PET mechanism is considered
not only in the communication between various agents, but also in
the sensor-to-controller and controller-to-actuator transmission channels
for each agent; 3) Only the output information of the followers is
available for the controller design. 

The above problem set up makes the existing output regulation methods
\citep{key-14a,key-9} infeasible. Moreover, directly extending the
distributed observer method \citep{key-16,key-12,key-14aa,key-10}
to our case is not easy because of the PET mechanism. In fact, PET
control is more general than sampled data control. However, the sampled data output regulation problem has not been fully
 investigated so far, not
to mention PET output regulation. This research gap makes stability
analysis challenging.

Our work provides the following main contributions:
\begin{itemize}
	\item Novel PET distributed observers are formulated to estimate the system
	matrix and state information of the leader;
	\item Using the estimated leader information, a new PET dynamic output feedback
	controller is designed for each follower;
	\item Based on the skillful use of some matrix norm and Gronwall's inequalities,
	we prove that the cooperative output regulation problem is solvable
	by the proposed method. 
	\item \textcolor{black}{For each follower, the minimum inter-event time can
		be determined }\textit{\textcolor{black}{a prior}}\textcolor{black}{{}
		and computed directly without the knowledge of the leader information.
		Moreover, by decreasing the gains of the distributed observer, the
		minimum inter-event time for the communication between various agents
		can be made arbitrary long.}
\end{itemize}

The organization of the paper is as follows. Problem formulation and
preliminaries are given in Section 2. The proposed PET distributed
observer and output feedback controller are presented in Section 3.
Simulations are conducted and presented in Section 4. Section 5 concludes the paper. \underline{}

\textit{Notations.} Given a matrix $X_{i}\in\mathbb{R}^{n_{i}\times m}(i=1,2,...,N)$,
$\mathrm{col}(X_{1},X_{2},...,X_{N})=[X_{1}^{\mathrm{T}}\thinspace X_{2}^{\mathrm{T}}\thinspace...X_{N}^{\mathrm{T}}]^{\mathrm{T}}$.
For $A\in\mathbb{R}^{n\times m}$, $\mathrm{vec}(A)=\mathrm{col}(A_{1},A_{2},...,A_{m})$
where $A_{i}\in\mathbb{R}^{n}$ denotes the $i$th column of $A$.
$||A||,||A||_{F}$ are the $2$-norm and Frobenius-norm of matrix
$A$.

\section{Problem formulation and preliminaries}

\subsection{Problem formulation}

Consider a multi-agent system consisting of $N$ followers and $1$
leader. The dynamic of the leader is given by:
\begin{align}
\dot{v} & =Sv\label{eq:1}
\end{align}
where $v\in\mathbb{R}^{n_{v}}$ is the reference input and/or external
disturbance with a positive integer $n_{v}$. $S$ is a given system
matrix. 

The followers are given by the following linear system:
\begin{align}
\dot{x}_{i} & =A_{i}x_{i}+B_{i}u_{i}+E_{i}v,\label{eq:2-1}\\
e_{i} & =C_{i}x_{i}+D_{i}u_{i}+F_{i}v,\label{eq:2}\\
y_{mi} & =C_{mi}x_{i}+D_{mi}u_{i}+F_{mi}v\label{eq:2-2}
\end{align}
where $i\in\{1,2,...,N\}$. $x_{i}\in\mathbb{R}^{n_{i}}$, $u_{i}\in\mathbb{R}^{n_{ui}}$,
$e_{i}\in\mathbb{R}^{n_{ei}}$, $y_{mi}\in\mathbb{R}^{n_{yi}}$ are
the system states, control effort, consensus error and measurement
output respectively with positive integers $n_{i},n_{ui},n_{ei},n_{yi}$.
$A_{i},B_{i},C_{i},D_{i},E_{i},F_{i},C_{mi},D_{mi},F_{mi}$ are the
given system matrices.

The communication for the multi-agent systems is represented by a
directed graph $\mathcal{G}$. Let $\mathcal{G}=(\mathcal{V},\mathcal{E})$
where $\mathcal{V}=\{1,2,...,N\}$ denotes the set of vertices, $\mathcal{E}\subseteq\mathcal{V}\times\mathcal{V}$
the set of edges. Let $\mathcal{N}_{i}$ represents the neighbors
of agent $i$, $i.e.$, $\mathcal{N}_{i}=\{j|j\in\mathcal{V}|(j,i)\in\mathcal{E}\}$.
Define matrix $\mathcal{W}=[a_{ij}]\in\mathbb{R}^{N\times N}$ such
that if $(j,i)\in\mathcal{E}$ then $a_{ij}=1$, otherwise $a_{ij}=0$.
Self-loop is not allowed, $i.e.$, $a_{ii}=0$ for $i\in\mathcal{V}$.
Define Laplacian matrix as $\mathcal{L}=\mathcal{D}-\mathcal{\mathcal{W}}$
with $\mathcal{D}=\mathrm{diag}(d_{1},d_{2},...,d_{N})$ and $d_{i}=\sum_{j\in\mathcal{N}_{i}}a_{ij}(i\in\mathcal{V})$.
For the information transmission between the leader and followers,
define $a_{i0}$ such that if the followers are connected to the leader,
then $a_{i0}=1$; otherwise $a_{i0}=0$. Also let $a_{0i}=0$. Note
that only a small portion of followers have access to the leader.

Based on the above analysis, \textcolor{black}{the cooperative output
	regulation problem} is formulated as follows:
\begin{problem}
	\label{prob:Given-a-multi-agent}Given a multi-agent system (\ref{eq:1})-(\ref{eq:2})
	with its corresponding graph $\mathcal{G}$, develop a PET distributed
	control law for each follower such that 
	
	1) All the closed loop signals are bounded for all $t\in[0,+\infty)$;
	and,
	
2) The output regulation error satisfies $\underset{t\rightarrow+\infty}{\lim}||e_{i}(t)||=0$
or \textcolor{blue}{$\underset{t\rightarrow+\infty}{\lim}||e_{i}(t)||\leq\varLambda_{i}$
	for $i\in\{1,2,...,N\}$ where $\varLambda_{i}$ is a small
	positive constant.}
\end{problem}
\begin{rem}
	As we will see in Section 3, according to whether  the PET mechanism
	is adopted for the controller-to-actuator channel, the\textcolor{black}{{}
		cooperative output regulation error} will be regulated to exact zero
	or a small neighborhood around the origin. In addition, when the limitation of $||e_{i}(t)||$ does not exist, $\underset{t\rightarrow+\infty}{\lim}||e_{i}(t)||$ should be understood as $\underset{t\rightarrow+\infty}{\lim\sup}||e_{i}(t)||$.
\end{rem}
\begin{rem}
	\noindent \textcolor{black}{The regulation error in (\ref{eq:2}) can
		be seen as a generalization of the consensus error defined in many
		literatures \citep{key-9}. For instance, suppose the output of the
		followers is $y_{i}=C_{i}x_{i}$. Then if one wants the followers
		to track the leader, the consensus error may be defined as $y_{i}-v=C_{i}x_{i}-v$.
		This is equivalent to let $D_{i}=0,F_{i}=-I$ in (3). In addition,
		note that the measurement output $y_{mi}$ in (\ref{eq:2-2}) may
		not equal to the real output of the followers. For example, if the
		real output of the followers is $y_{i}=C_{mi}x_{i}+D_{mi}u_{i}$.,
		then the measurement output $y_{mi}$ in (\ref{eq:2-2}) indicates
		that the real output $y_{i}$ may be influenced by an external disturbance
		$F_{mi}v$ where $v$ is generated by the exosystem $v=Sv$.}
\end{rem}

\subsection{Preliminaries}

In this subsection, we will introduce some basic assumptions and results
for the cooperative output regulation problem. It is divided into
three parts.

\textit{1) Graph and leader}

For the communication graph, we assume that:
\begin{fact}
	\label{fact:The-graph-containing}The graph containing the leader
	and $N$ followers has a directed spanning tree with the leader as
	the root. 
\end{fact}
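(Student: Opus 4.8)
The plan is to first recognize that this statement is declared through the \texttt{fact} environment, i.e. it is a standing structural hypothesis (Assumption \ref{fact:The-graph-containing}) on the communication topology rather than a conclusion to be deduced from earlier material; there is no ambient premise from which to derive it. What one \emph{can} do, and what I would do, is pin down the assumption graph-theoretically and describe how it is established for any concrete topology. To that end I would work with the augmented digraph $\bar{\mathcal{G}}$ on the vertex set $\{0,1,\dots,N\}$, where node $0$ is the leader (\ref{eq:1}), the follower-to-follower edges are exactly those encoded by $\mathcal{W}=[a_{ij}]$ through the set $\mathcal{E}$, and an edge from $0$ to follower $i$ is present precisely when $a_{i0}=1$. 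The phrase ``directed spanning tree with the leader as the root'' then means that $\bar{\mathcal{G}}$ contains a rooted arborescence at node $0$: a subgraph with exactly $N$ edges in which every follower has a single incoming edge and lies on a directed path originating at the leader.

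The key step is to reduce the existence of such an arborescence to a pure reachability condition, namely that every follower $i$ admits at least one directed path from the root $0$ in $\bar{\mathcal{G}}$. I would argue the equivalence in both directions. If a spanning arborescence exists, reachability of every node is immediate by following the unique tree path from the root. Conversely, if every follower is reachable from $0$, one constructs an arborescence by running a breadth-first (or depth-first) traversal from node $0$ and retaining, for each discovered follower, the single edge through which it was first reached; this yields exactly $N$ retained edges, forbids cycles, and respects orientation, hence is a rooted arborescence. Operationally, the reachability condition can be checked by forming the adjacency matrix of $\bar{\mathcal{G}}$, call it $\bar{\mathcal{A}}$, and verifying that the row of $\sum_{k=1}^{N}\bar{\mathcal{A}}^{k}$ indexed by the root has strictly positive entries in every follower column, since on $N+1$ vertices any reachable node is reached by a directed path of length at most $N$.

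The main obstacle here is conceptual rather than computational: because $\mathcal{G}$ is directed, reachability is not symmetric, so it does not suffice that the leader and followers be connected in the undirected sense---the directed paths must emanate \emph{outward} from the root. In particular the condition is strictly stronger than connectivity of the underlying undirected graph and strictly weaker than strong connectivity, and the only real care required is to keep edge orientations fixed throughout the traversal and in the powers of $\bar{\mathcal{A}}$. Finally, I would note that for the specific network used later in the paper Assumption \ref{fact:The-graph-containing} is discharged simply by exhibiting one explicit directed path from the leader to each follower, after which the rooted spanning tree follows by the construction above.
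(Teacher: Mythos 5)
You are correct: this statement is declared in the paper's \texttt{fact} environment (typeset as an Assumption), so the paper supplies no proof, and your recognition that it is a standing hypothesis on the communication topology---rather than a deducible claim---matches the paper's treatment exactly. Your supplementary reduction of the assumption to reachability of every follower from the leader, together with the breadth-first construction of the rooted arborescence and the check via powers of the augmented adjacency matrix, is mathematically sound and consistent with how such an assumption is discharged for the concrete graph in the simulation section.
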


Then we have the following result.
\begin{lem}
	\citep{key-14aa}\label{lem:Under-Assumption-,} Under Assumption \ref{fact:The-graph-containing},
	$-\mathcal{H}$ is a Hurwitz matrix with $\mathcal{H}\triangleq\mathcal{L}+\mathcal{B}$
	and $\mathcal{B}\triangleq\mathrm{diag}(a_{10},a_{20},...,a_{N0})$.
\end{lem}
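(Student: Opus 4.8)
The plan is to show that $\mathcal{H}=\mathcal{L}+\mathcal{B}$ is positively stable, i.e.\ every eigenvalue has strictly positive real part, which is equivalent to $-\mathcal{H}$ being Hurwitz. The starting observation is that $\mathcal{H}$ is a \emph{Z-matrix}: its off-diagonal entries are $-a_{ij}\le 0$, its $i$th diagonal entry equals $d_i+a_{i0}\ge 0$, and since each row of $\mathcal{L}$ sums to zero, row $i$ of $\mathcal{H}$ sums to $a_{i0}$. First I would localize the spectrum with Gershgorin's disc theorem: the $i$th disc is centered at $d_i+a_{i0}$ on the positive real axis with radius $\sum_{j\ne i}|a_{ij}|=d_i$. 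A short computation shows each such disc lies in the closed right half-plane and meets the imaginary axis only when $a_{i0}=0$, and then only at the origin. Consequently the sole candidate for an eigenvalue on the imaginary axis is $\lambda=0$.

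It then remains to rule out $\lambda=0$, i.e.\ to prove $\mathcal{H}$ is nonsingular; this is exactly where Assumption \ref{fact:The-graph-containing} enters and is the crux of the argument. I would argue by contradiction: suppose $\mathcal{H}x=0$ with $x\ne 0$, and pick an index $k$ with $|x_k|=\max_i|x_i|>0$. Reading off the $k$th equation gives $(d_k+a_{k0})x_k=\sum_{j\in\mathcal{N}_k}a_{kj}x_j$, and the triangle inequality yields $(d_k+a_{k0})|x_k|\le d_k|x_k|$. This forces $a_{k0}=0$ together with $|x_j|=|x_k|$ for every in-neighbor $j$ of $k$; hence the set $M$ of indices attaining the maximum modulus contains all in-neighbors of each of its members and consists only of followers with $a_{i0}=0$.

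Finally I would exploit the spanning-tree structure to derive a contradiction. Since the augmented graph has a directed spanning tree rooted at the leader, there is a directed path $0\to i_1\to\cdots\to i_m=k$ from the leader to $k$. Because $M$ is closed under taking in-neighbors, walking this path backwards from $k$ places every $i_p$ in $M$, in particular $i_1\in M$, so $a_{i_1 0}=0$; but the edge $(0,i_1)$ means follower $i_1$ is connected to the leader, i.e.\ $a_{i_1 0}=1$, a contradiction. Therefore $\mathcal{H}$ is nonsingular, no eigenvalue lies on the imaginary axis, and combined with the Gershgorin localization all eigenvalues have strictly positive real part, so $-\mathcal{H}$ is Hurwitz. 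I expect the main obstacle to be the nonsingularity step: the Gershgorin bound alone only gives $\mathrm{Re}(\lambda)\ge 0$, and it is the maximum-modulus propagation together with the rooted spanning tree that upgrades this to the strict inequality by excluding $\lambda=0$.
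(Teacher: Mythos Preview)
Your argument is correct and self-contained. Note, however, that the paper does not actually prove this lemma: it is stated with a citation to Cai, Lewis, Hu, \& Huang (2017) and used as a known fact, so there is no ``paper's own proof'' to compare against. What you have supplied is a standard and clean justification: Gershgorin localizes the spectrum of $\mathcal{H}$ in the closed right half-plane with the imaginary axis touched only at the origin, and the maximum-modulus propagation along the directed spanning tree rooted at the leader rules out the eigenvalue $0$. This is essentially the classical route (equivalently phrased via nonsingular M-matrix / diagonal dominance arguments) that underlies the cited result, so your proposal is appropriate to include if a proof is desired.
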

For the leader (\ref{eq:1}), we assume
\begin{fact}
	\label{fact:4}The leader system is neutrally stable, 
	\emph{i.e.}, the eigenvalues
	of $S$ are semi-simple with zero real parts.
\end{fact}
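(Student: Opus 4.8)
The content requiring proof in this assumption is the equivalence asserted by the ``\emph{i.e.}'': the leader $(\ref{eq:1})$ is neutrally stable---every trajectory bounded on $[0,+\infty)$ and none decaying to the origin---\emph{if and only if} every eigenvalue of $S$ lies on the imaginary axis and is semi-simple. Since the solution of $(\ref{eq:1})$ is $v(t)=e^{St}v(0)$, the plan is to read off this dichotomy from the Jordan canonical form of $S$ together with the explicit structure of the matrix exponential $e^{St}$.

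First I would pass to the Jordan form. Writing $S=PJP^{-1}$ with $J=\mathrm{diag}(J_{1},\dots,J_{p})$, where $J_{k}$ is a block of size $m_{k}$ attached to the eigenvalue $\lambda_{k}=\sigma_{k}+i\omega_{k}$, one has $e^{St}=P\,\mathrm{diag}(e^{J_{1}t},\dots,e^{J_{p}t})\,P^{-1}$ and, for each block,
\begin{align}
e^{J_{k}t}=e^{\lambda_{k}t}\sum_{j=0}^{m_{k}-1}\frac{t^{j}}{j!}N_{k}^{\,j},\label{eq:expjordan}
\end{align}
with $N_{k}$ the nilpotent shift. The moduli of the entries of $(\ref{eq:expjordan})$ grow like $e^{\sigma_{k}t}\,t^{j}$ for $0\le j\le m_{k}-1$, so the boundedness of $e^{St}$ is controlled blockwise by the sign of $\sigma_{k}$ and the size $m_{k}$. (Since $S$ is real its complex eigenvalues occur in conjugate pairs, and these growth estimates transfer verbatim to the real trajectories $v(t)$.)

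For sufficiency ($\Leftarrow$), if $\sigma_{k}=0$ and $m_{k}=1$ for every $k$, then each $e^{J_{k}t}=e^{i\omega_{k}t}$ has unit modulus, so $\|e^{St}\|\le\|P\|\,\|P^{-1}\|$ is uniformly bounded and no mode decays, which is precisely neutral stability. For necessity ($\Rightarrow$) I would argue by contraposition using $(\ref{eq:expjordan})$: a block with $\sigma_{k}>0$ makes its mode grow exponentially; a block with $\sigma_{k}=0$ and $m_{k}\ge2$ produces polynomial-in-$t$ growth despite $|e^{\lambda_{k}t}|=1$; and a block with $\sigma_{k}<0$ yields a decaying mode, which contradicts the persistent (non-asymptotic) character of a neutrally stable exosystem. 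In every case neutral stability fails, giving the converse.

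The step I expect to be the main obstacle is the semi-simplicity half of the necessity argument: when $\sigma_{k}=0$ the scalar factor $e^{\lambda_{k}t}$ has \emph{constant} modulus, so one must show that a non-trivial Jordan block still forces unboundedness. I would handle this by choosing the initial condition along a generalized eigenvector at the top of the Jordan chain, so that the coordinate carrying the coefficient $t^{m_{k}-1}/(m_{k}-1)!$ is non-zero; this yields a concrete lower bound $\|e^{J_{k}t}\|\ge c\,t^{m_{k}-1}\to+\infty$ and rules out the offending block. The remaining sign-of-$\sigma_{k}$ cases are routine once $(\ref{eq:expjordan})$ is available.
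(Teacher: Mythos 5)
This statement is an \emph{assumption} (the paper's ``fact'' environment is typeset as ``Assumption''), not a theorem: the paper imposes neutral stability of the leader as a hypothesis on the class of systems considered, and the clause after ``\emph{i.e.}'' serves as the \emph{definition} of neutral stability being used (the standard one in the output-regulation literature). Consequently the paper contains no proof of this statement, and none is required; its only role downstream is to guarantee (as noted in Remark 3) that $v(t)$ stays bounded for bounded $v(0)$ and that $S$ may be taken skew-symmetric without loss of generality. Your proposal misreads the ``\emph{i.e.}'' as asserting an equivalence between a dynamical property and a spectral property, and then proves that equivalence. That required you to supply your own dynamical definition of neutral stability (``every trajectory bounded on $[0,+\infty)$ and none decaying to the origin''), which appears nowhere in the paper --- so what you proved, while meaningful, is not a statement the paper makes or relies on.

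On its own terms your argument is essentially sound and standard: the Jordan-block formula $e^{J_{k}t}=e^{\lambda_{k}t}\sum_{j=0}^{m_{k}-1}\frac{t^{j}}{j!}N_{k}^{\,j}$ does give boundedness of $\|e^{St}\|$ exactly when every eigenvalue on the imaginary axis is semi-simple and no eigenvalue has positive real part, and your device of starting from the top generalized eigenvector correctly yields the lower bound $\|e^{J_{k}t}v_{0}\|\geq c\,t^{m_{k}-1}$ that excludes non-trivial blocks with $\sigma_{k}=0$; the ``no decaying mode'' clause is likewise what rules out $\sigma_{k}<0$. So as a piece of linear systems theory the write-up would pass. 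The gap is one of category, not of computation: you should have recognized that an assumption admits no proof, stated that the eigenvalue condition is definitional here, and at most recorded your equivalence as a remark explaining \emph{why} this spectral condition captures the intuitive notion of a persistently oscillating, non-decaying exosystem.
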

\begin{rem}
	Under the above assumption, we know that as long as the initial value
	$v(0)$ is bounded, $v(t)$ is bounded on $[0,+\infty)$. Meanwhile,
	a wide class of signals, such as sine and step signals, can be generated
	by the leader system (\ref{eq:1}). In addition, from \cite{key-12},
	without loss of generality $S$ can be selected to be a skew-symmetric
	matrix such that $S^{T}=-S$. 
\end{rem}
\textit{2) Followers}

For linear system (\ref{eq:2}), we make the following assumptions.
\begin{fact}
	For $i=1,2,...,N$, the system matrices satisfy:
	
	1) $(A_{i},B_{i})$ are stabilizable;
	
	2) $(C_{mi},A_{i})$ are detectable;
	
	3) The following linear matrix equations admit a solution $(X_{i},U_{i})$
	\begin{align}
	X_{i}S & =A_{i}X_{i}+B_{i}U_{i}+E_{i},\nonumber \\
	0 & =C_{i}X_{i}+D_{i}U_{i}+F_{i}.\label{eq:4-1}
	\end{align}
\end{fact}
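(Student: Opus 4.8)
The statement is a standing Assumption, not a derived result, so in the strict sense there is nothing to prove: items 1)--3) are the classical solvability hypotheses of linear output regulation and are \emph{imposed} on the data rather than deduced. What I would instead present is a verification plan that turns each abstract condition into a finite, decidable test on the given matrices, and that isolates which of the three carries the real content.

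For 1) and 2) the plan is to replace the definitions by the Popov--Belevitch--Hautus rank tests. Stabilizability of $(A_i,B_i)$ is equivalent to $\mathrm{rank}[\,sI-A_i\ \ B_i\,]=n_i$ for every $s\in\mathbb{C}$ with $\mathrm{Re}(s)\geq 0$, and detectability of $(C_{mi},A_i)$ is equivalent to $\mathrm{rank}\,\mathrm{col}(sI-A_i,\,C_{mi})=n_i$ for every such $s$. Because only the finitely many closed-right-half-plane modes can violate these, both are checkable from $(A_i,B_i)$ and $(C_{mi},A_i)$ alone, with no reference to the leader.

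The substantive item is 3). Here I would invoke the standard result of linear output regulation (Francis; Huang) that the regulator equations \eqref{eq:4-1} are solvable in $(X_i,U_i)$ for \emph{every} pair $E_i,F_i$ if and only if the non-resonance (transmission-zero) condition
\[
\mathrm{rank}\begin{bmatrix} A_i-\lambda I & B_i\\ C_i & D_i\end{bmatrix}=n_i+n_{ei}
\]
holds at every eigenvalue $\lambda$ of $S$, i.e.\ no eigenvalue of $S$ is a transmission zero of $(A_i,B_i,C_i,D_i)$. The route is to vectorise \eqref{eq:4-1}: using $\mathrm{vec}(\cdot)$ and $\mathrm{vec}(X_iS)=(S^{T}\!\otimes I)\mathrm{vec}(X_i)$, the two matrix equations collapse into one linear system in $\mathrm{col}(\mathrm{vec}(X_i),\mathrm{vec}(U_i))$ whose coefficient matrix is assembled from $A_i,B_i,C_i,D_i$ and $S$. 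Solvability for all right-hand sides is then a rank statement that, after decoupling across $\sigma(S)$, reduces to the displayed pencil condition evaluated at each $\lambda\in\sigma(S)$.

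The step I expect to demand the most care is precisely that Kronecker/eigenvalue bookkeeping in 3): one must block-triangularise the action of $S$ so that the Sylvester coupling $X_iS-A_iX_i$ decouples eigenvalue by eigenvalue, and then show the per-$\lambda$ subsystems are \emph{simultaneously} solvable exactly under the pencil rank condition. Assumption \ref{fact:4} is what makes this clean, since it places $\sigma(S)$ on the imaginary axis (and, by the remark, allows $S^{T}=-S$), so the finitely many test frequencies $\lambda$ are the critical modes and 3) connects back to the stabilizability in 1).
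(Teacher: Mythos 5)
You correctly recognize that this statement is a standing assumption rather than a derived result --- the paper likewise imposes it without proof, remarking only that these conditions are standard in output regulation theory. Moreover, your verification plan for item 3) via vectorization with Kronecker products is precisely the reformulation $\mathcal{A}_i\chi_i=\beta_i$ that the paper itself introduces immediately after the assumption in order to solve $(X_i,U_i)$ adaptively, so your treatment is fully consistent with the paper's.
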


The above assumptions are standard in output regulation theory. Meanwhile,
from \cite{key-14aa}, we know the solution $(X_{i},U_{i})$ can
be solved adaptively. We briefly explain the idea as follows. Let
$\chi_{i}=\mathrm{vec}(\mathrm{col}(X_{i},U_{i}))$, $\beta_{i}=\mathrm{vec}(\mathrm{col}(E_{i},F_{i}))$,
\[
\mathcal{A}_{i}=S^{T}\varotimes\left[\begin{array}{cc}
I & 0\\
0 & 0
\end{array}\right]-I\varotimes\left[\begin{array}{cc}
A_{i} & B_{i}\\
C_{i} & D_{i}
\end{array}\right],
\]
\[
\hat{\mathcal{A}}_{i}(t)=\hat{S}_{i}^{T}(t)\varotimes\left[\begin{array}{cc}
I & 0\\
0 & 0
\end{array}\right]-I\varotimes\left[\begin{array}{cc}
A_{i} & B_{i}\\
C_{i} & D_{i}
\end{array}\right]
\]
where $\hat{\mathcal{A}}_{i}(t),\hat{S}_{i}(t)$ are time-varying
matrices.

Then (\ref{eq:4-1}) can be written as:
\[
\mathcal{A}_{i}\chi_{i}=\beta_{i}.
\]

Define $\hat{\chi}_{i}$ with the adaptive law
\begin{equation}
\dot{\hat{\chi}}_{i}=-\kappa\hat{\mathcal{A}}_{i}^{T}(\hat{\mathcal{A}_{i}}\hat{\chi}_{i}-\beta_{i})\label{eq:5}
\end{equation}
where $\kappa>0$ is a positive design parameter. 

Meanwhile, define the adaptive solution $\hat{X}_{i},\hat{U}_{i}$
such that they have the same dimensions as $X_{i},U_{i}$ and 
\begin{equation}
\mathrm{vec}(\mathrm{col}(\hat{X}_{i},\hat{U}_{i}))=\hat{\chi}_{i}.\label{eq:6}
\end{equation}

Then we have the following result \citep{key-14aa}.
\begin{lem}
	\label{lem:If--converges}If $S-\hat{S}_{i}$ converges to zero exponentially,
	$\chi_{i}-\hat{\chi}_{i}$ and $X_{i}-\hat{X}_{i},U_{i}-\hat{U}_{i}$
	will all converge to zero exponentially.
\end{lem}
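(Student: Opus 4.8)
\noindent The plan is to reduce everything to a single exponential bound on $\tilde\chi_i\triangleq\hat\chi_i-\chi_i$. Because $\mathrm{vec}$ is a linear isometry for the Frobenius norm and \eqref{eq:6} gives $\mathrm{vec}(\mathrm{col}(\hat X_i-X_i,\hat U_i-U_i))=\hat\chi_i-\chi_i$, we immediately obtain $\|\hat X_i-X_i\|_F\le\|\tilde\chi_i\|$ and $\|\hat U_i-U_i\|_F\le\|\tilde\chi_i\|$. Hence exponential decay of $\tilde\chi_i$ transfers verbatim to $X_i-\hat X_i$ and $U_i-\hat U_i$, and it suffices to treat $\tilde\chi_i$.

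First I would derive the error dynamics. Since $(X_i,U_i)$ is a fixed solution of \eqref{eq:4-1}, the vector $\chi_i$ is constant and $\mathcal A_i\chi_i=\beta_i$. Writing $\tilde{\mathcal A}_i\triangleq\hat{\mathcal A}_i-\mathcal A_i=(\hat S_i^{\mathrm T}-S^{\mathrm T})\otimes\left[\begin{smallmatrix}I&0\\0&0\end{smallmatrix}\right]$ and substituting $\beta_i=\mathcal A_i\chi_i$ into \eqref{eq:5}, a one-line manipulation gives $\hat{\mathcal A}_i\hat\chi_i-\beta_i=\hat{\mathcal A}_i\tilde\chi_i+\tilde{\mathcal A}_i\chi_i$, so that
\[
\dot{\tilde\chi}_i=-\kappa\hat{\mathcal A}_i^{\mathrm T}\hat{\mathcal A}_i\tilde\chi_i-\kappa\hat{\mathcal A}_i^{\mathrm T}\tilde{\mathcal A}_i\chi_i .
\]
The first term is the nominal gradient-descent part, and the second is a forcing term governed entirely by $\tilde{\mathcal A}_i$.

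The main step is a Lyapunov/comparison estimate on $V=\tfrac12\|\tilde\chi_i\|^2$. Differentiating along the error dynamics and applying Young's inequality to the cross term yields
\[
\dot V\le-\tfrac{\kappa}{2}\|\hat{\mathcal A}_i\tilde\chi_i\|^2+\tfrac{\kappa}{2}\|\tilde{\mathcal A}_i\|^2\|\chi_i\|^2 .
\]
Two observations then close the argument. Since the second Kronecker factor has unit spectral norm, $\|\tilde{\mathcal A}_i\|=\|\hat S_i-S\|$, so the forcing term decays exponentially by hypothesis; and because $\hat{\mathcal A}_i\to\mathcal A_i$ exponentially, continuity of eigenvalues gives $\lambda_{\min}(\hat{\mathcal A}_i^{\mathrm T}\hat{\mathcal A}_i)\ge\tfrac12\lambda_{\min}(\mathcal A_i^{\mathrm T}\mathcal A_i)>0$ for all $t$ past some finite $T$, whence $\|\hat{\mathcal A}_i\tilde\chi_i\|^2\ge\tfrac12\lambda_{\min}(\mathcal A_i^{\mathrm T}\mathcal A_i)\|\tilde\chi_i\|^2$. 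Thus for $t\ge T$ we reach $\dot V\le-\mu V+g(t)$ with $\mu>0$ constant and $g(t)$ exponentially decaying; the comparison lemma (equivalently Gronwall's inequality) then forces $V(t)\to0$ exponentially, and boundedness of $V$ on the compact interval $[0,T]$ extends the estimate to $[0,+\infty)$.

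\noindent The crux, and the only place a structural hypothesis is genuinely needed, is the coercivity $\mathcal A_i^{\mathrm T}\mathcal A_i>0$, i.e. that $\mathcal A_i$ has full column rank: without it the gradient flow would drive $\tilde\chi_i$ only into the kernel of $\mathcal A_i$ rather than to the origin, and $\hat\chi_i$ could settle at a different solution of \eqref{eq:4-1}. I would secure this from the standard non-resonance condition of output regulation (no transmission zero of the follower coincides with an eigenvalue of $S$), under which the solvability condition on \eqref{eq:4-1} yields a \emph{unique} pair $(X_i,U_i)$ together with a uniformly positive definite $\mathcal A_i^{\mathrm T}\mathcal A_i$. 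Everything else is a routine norm estimate.
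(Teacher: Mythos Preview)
The paper does not actually prove this lemma: it is quoted verbatim from \cite{key-14aa} (Cai, Lewis, Hu \& Huang, 2017), and no argument is given beyond the citation. Your proposal is a correct self-contained proof and is, in essence, the same argument that appears in that reference: derive the error dynamics for $\tilde\chi_i$ from the gradient law \eqref{eq:5}, use the quadratic Lyapunov function, bound the cross term via Young's inequality, and invoke the eventual uniform coercivity of $\hat{\mathcal A}_i^{\mathrm T}\hat{\mathcal A}_i$ once $\hat S_i$ is close enough to $S$.

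Two minor remarks. First, your identification of the full-column-rank condition $\mathcal A_i^{\mathrm T}\mathcal A_i>0$ as the structural hypothesis is exactly right; in the cited reference this is guaranteed by the standard non-resonance assumption (no eigenvalue of $S$ is a transmission zero of the $i$th follower), which the present paper leaves implicit in its Assumption~3. Second, the boundedness of $V$ on $[0,T]$ that you invoke is immediate from $\dot V\le\tfrac{\kappa}{2}\|\tilde{\mathcal A}_i\|^2\|\chi_i\|^2$, so no finite-escape issue arises. Altogether your write-up is sound and matches the approach of the original source.
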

\textit{3) Useful inequalities}

Finally, we introduce some inequalities which will be used in the
stability analysis.
\begin{lem}
	\label{lem:(Matrix-norm-inequalities)}(Matrix norm inequalities)
	Given matrices $A,B\in\mathbb{R}^{n\times n}$, we have
	
	1) $||\mathrm{e}^{A+B}-\mathrm{e}^{A}||\leq\mathrm{e}^{||A||+||B||}||B||$;
	
	2) $||A||\leq||A||_{F}\leq\sqrt{n}||A||$; 
	
	3) $||\mathrm{e}^{A}||\leq\mathrm{e}^{||A||}$.
\end{lem}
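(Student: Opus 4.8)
The plan is to handle the three inequalities in order of difficulty, treating 2) and 3) quickly as classical facts and reserving the effort for 1), which carries the real content.

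For 3) I would expand $\mathrm{e}^{A}$ in its defining power series $\sum_{k\geq 0}A^{k}/k!$ and apply the triangle inequality together with submultiplicativity of the induced 2-norm, $||A^{k}||\leq||A||^{k}$; the bound $||\mathrm{e}^{A}||\leq\sum_{k\geq 0}||A||^{k}/k!=\mathrm{e}^{||A||}$ is then immediate. For 2) I would pass to the singular value decomposition: with $\sigma_{1}\geq\cdots\geq\sigma_{n}\geq 0$ the singular values of $A$, one has $||A||=\sigma_{1}$ and $||A||_{F}=\sqrt{\sum_{k}\sigma_{k}^{2}}$, so that the elementary chain $\sigma_{1}^{2}\leq\sum_{k}\sigma_{k}^{2}\leq n\sigma_{1}^{2}$ yields both halves at once.

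The inequality 1) is where I expect the difficulty, since cancelling $\mathrm{e}^{A}$ from $\mathrm{e}^{A+B}$ breaks the direct series estimate and one must be careful to recover exactly the constant $\mathrm{e}^{||A||+||B||}||B||$ rather than a looser bound. My preferred route starts from the telescoping identity
\[
(A+B)^{k}-A^{k}=\sum_{j=0}^{k-1}(A+B)^{j}\,B\,A^{k-1-j},\qquad k\geq 1.
\]
Taking norms, using submultiplicativity and $||A+B||\leq||A||+||B||$, each of the $k$ terms is at most $||B||\,(||A||+||B||)^{k-1}$, so that $||(A+B)^{k}-A^{k}||\leq k\,||B||\,(||A||+||B||)^{k-1}$. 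Inserting this into the series $\mathrm{e}^{A+B}-\mathrm{e}^{A}=\sum_{k\geq 1}\bigl((A+B)^{k}-A^{k}\bigr)/k!$ and re-indexing then produces
\[
||\mathrm{e}^{A+B}-\mathrm{e}^{A}||\leq||B||\sum_{k=1}^{\infty}\frac{(||A||+||B||)^{k-1}}{(k-1)!}=\mathrm{e}^{||A||+||B||}||B||.
\]

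An alternative I could use is the variation-of-constants identity $\mathrm{e}^{A+B}-\mathrm{e}^{A}=\int_{0}^{1}\mathrm{e}^{(1-s)A}B\,\mathrm{e}^{s(A+B)}\,ds$, obtained by differentiating $\mathrm{e}^{(1-s)A}\mathrm{e}^{s(A+B)}$ in $s$; bounding the integrand via part 3) gives the slightly sharper estimate $\mathrm{e}^{||A||}(\mathrm{e}^{||B||}-1)$, which implies the stated form since $\mathrm{e}^{||B||}-1\leq||B||\,\mathrm{e}^{||B||}$. In either route the only genuinely delicate point is bookkeeping the exponents so that the final constant matches the form invoked later in the Gronwall-type stability estimates, rather than any conceptual obstruction.
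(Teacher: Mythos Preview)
Your proof is correct and complete. The paper itself does not actually supply an argument: it attributes part 1) to Lemma~1 of Yang and Liberzon (2018) and dismisses 2) and 3) as ``basic matrix theory.'' Your self-contained treatment---power series and submultiplicativity for 3), singular values for 2), and the telescoping identity $(A+B)^{k}-A^{k}=\sum_{j=0}^{k-1}(A+B)^{j}BA^{k-1-j}$ (or the integral formula) for 1)---is therefore strictly more informative than what the paper records, and the constants you obtain match exactly those used downstream in the stability estimates.
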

\begin{proof}
	1) is from Lemma 1 in \cite{key-14}. 2) and 3) can be proved using
	basic matrix theory.
\end{proof}
\begin{lem}
	\label{lem:1)-(Gronwall's-inequality)}(Gronwall's inequality) Given
	a real-valued function $w(t):[0,+\infty)\rightarrow\mathbb{R}$ such
	that
	\[
	w(t)\leq\alpha+\int_{t_{0}}^{t}\beta w(\tau)d\tau
	\]
	for $\forall t\in[t_{0},+\infty)$ where $\alpha,\beta,t_{0}>0$ are
	positive constants. Then 
	\[
	w(t)\leq\alpha\mathrm{e}^{\beta(t-t_{0})}.
	\]
\end{lem}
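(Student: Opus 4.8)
The plan is to reduce the integral inequality to a differential one by introducing the auxiliary function
\[
U(t)=\alpha+\int_{t_{0}}^{t}\beta w(\tau)\,d\tau,
\]
which is exactly the right-hand side of the hypothesis. By construction $w(t)\le U(t)$ for all $t\ge t_{0}$, and $U(t_{0})=\alpha$. First I would observe that, by the fundamental theorem of calculus, $U$ is differentiable with $\dot{U}(t)=\beta w(t)$; substituting the bound $w(t)\le U(t)$ and using $\beta>0$ yields the differential inequality $\dot{U}(t)\le\beta U(t)$.

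Next I would integrate this differential inequality via the standard integrating-factor trick. Multiplying through by $\mathrm{e}^{-\beta(t-t_{0})}>0$ gives
\[
\frac{d}{dt}\Big(U(t)\,\mathrm{e}^{-\beta(t-t_{0})}\Big)=\big(\dot{U}(t)-\beta U(t)\big)\,\mathrm{e}^{-\beta(t-t_{0})}\le 0,
\]
so the map $t\mapsto U(t)\,\mathrm{e}^{-\beta(t-t_{0})}$ is non-increasing on $[t_{0},+\infty)$. Evaluating at $t=t_{0}$ gives $U(t_{0})\,\mathrm{e}^{0}=\alpha$, whence $U(t)\,\mathrm{e}^{-\beta(t-t_{0})}\le\alpha$, i.e. $U(t)\le\alpha\,\mathrm{e}^{\beta(t-t_{0})}$. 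Combining this with $w(t)\le U(t)$ from the first step immediately yields the claimed bound $w(t)\le\alpha\,\mathrm{e}^{\beta(t-t_{0})}$ and closes the argument.

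The only delicate point is the regularity required to differentiate $U$: the fundamental theorem of calculus needs $w$ to be at least locally integrable (continuity is more than enough), which is implicit in the statement, since the integral $\int_{t_{0}}^{t}\beta w(\tau)\,d\tau$ is assumed to be well defined. I expect this to be the main (and essentially only) obstacle; the remaining steps are routine calculus. If one wished to dispense with differentiability entirely, an alternative is a bootstrapping argument in which the bound is substituted into itself repeatedly, generating the partial sums of the series for $\alpha\,\mathrm{e}^{\beta(t-t_{0})}$; but under the standing assumptions the integrating-factor route is the cleanest and is the one I would present.
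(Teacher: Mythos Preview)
Your argument is the standard integrating-factor proof of Gronwall's inequality and is correct. The paper itself does not prove this lemma; it is simply stated as a preliminary result, so there is nothing to compare against.
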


\section{Main results}

In this section, we will discuss the output regulation problem for
linear multi-agent systems by (\ref{eq:1})-(\ref{eq:2}). The control
scheme is shown in Fig. \ref{fig:3-1}. For the communication between
various agents, the controller of agent $i$ will send/receive the
information to/from its neighbors based on the PET Mechanism A (PETM-A).
For the sensor-to-controller channel, the sensor will sample the output
information from the plant and transmit it to the controller by the
PET Mechanism B (PETM-B). For the controller-to-actuator channel,
two different situations will be considered. We will first consider
the situation where the transmission is continuous, $i.e.$, the switch
in Fig. \ref{fig:3-1} is on node 1. Then, we will consider the case
when the switch is on node 2, that is the control signal will be transmitted
to the actuator based on the PET Mechanism C (PETM-C). We can see
that the PET mechanisms are not only used for the communication between
various agents, but also for the sensor-to-controller and controller-to-actuator
channel in each agent. 

The proposed controller is composed of two parts: a PET distributed
observer and a PET control law. The PET distributed observer is used
to estimate the system matrix $S$ and $v$ of the leader based on
PETM-A. The control law will use the estimated information to generate
the control signal according to PETM-B and PETM-C.

Next, we will explain these two parts respectively.
\begin{figure}
	\begin{centering}
		\includegraphics[scale=0.9]{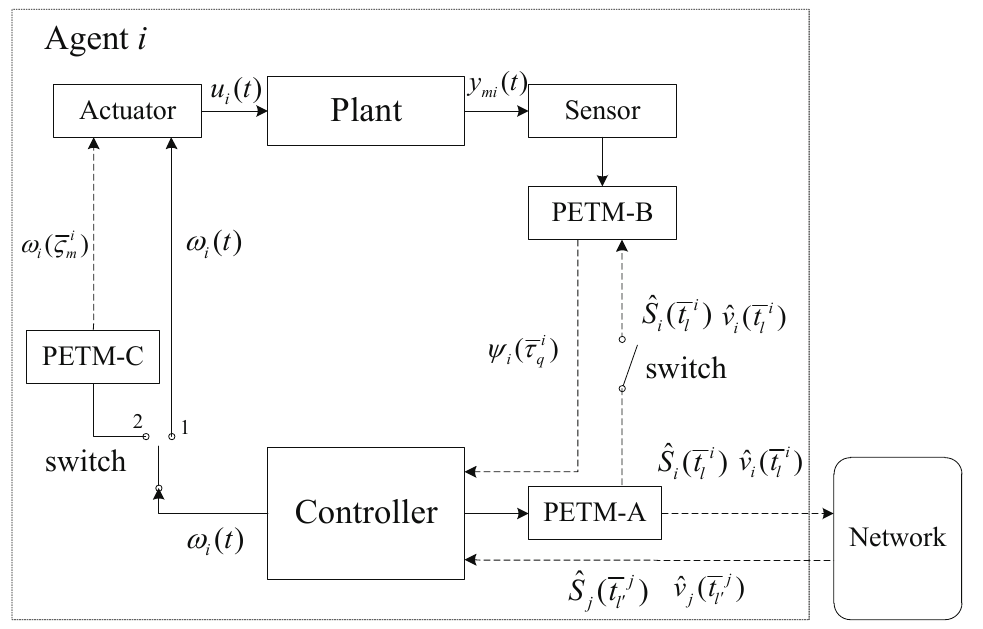}
		\par\end{centering}
	\caption{\label{fig:3-1}Control scheme.}
\end{figure}

\begin{rem}
	\textcolor{black}{Notably the above control scheme indicates that the PET
		mechanism is considered not only in the communication between various
		agents, but also in the sensor-to-controller and controller-to-actuator
		transmission channels for each agent. The motivation for considering
		this control scheme is that in some situations the control of a single
		agent may require the network communication between the controller
		and the sensor/actuator. A number of applications may involve the
		formation control of unmanned automobiles where control of each automobile
		is based on network \citep{key-14-1}, the networked control of a group
		of UAVs based on remote controllers/ground bases \citep{key-10-1,key-10aa},
		the cooperative control of robot manipulators $etc$.}
\end{rem}

\subsection{PET distributed observer}

Let $0=t_{0}<t_{1}<\cdots<t_{k}<\cdots$ denote the sampling time
instants for the multi-agent systems where $t_{k}\triangleq kT$ and
$T>0$ is the sampling period. Also define set $\Omega_{T}=\{t_{0},t_{1},...,t_{k},...\}.$
On each time interval $[t_{k},t_{k+1})$, the distributed observer
for agent $i(i=1,2,...,N)$ is designed as:
\begin{align}
\dot{\hat{S}}_{i} & =\mu_{1}\sum_{j=0}^{N}a_{ij}(\hat{S}_{j}(\overline{t}_{l'}^{j})-\hat{S}_{i}(\overline{t}_{l}^{i})),\label{eq:5-1-1}\\
\dot{\hat{v}}_{i} & =\hat{S}_{i}(\overline{t}_{l}^{i})\hat{v}_{i}(t)+\mu_{2}\sum_{j=0}^{N}a_{ij}(\overline{v}_{j}(t,\overline{t}_{l'}^{j})-\overline{v}_{i}(t,\overline{t}_{l}^{i}))\label{eq:6-1}
\end{align}
where $\mu_{1},\mu_{2}>0$ are two positive parameters. $\hat{S}_{0}(t)\equiv S$,
\begin{equation}
\overline{v}_{i}(t,\overline{t}_{l}^{i})=\mathrm{e}^{\hat{S}_{i}(\overline{t}_{l}^{i})(t-\overline{t}_{l}^{i})}\hat{v}_{i}(\overline{t}_{l}^{i})(i=1,...,N)\label{eq:7-2-1}
\end{equation}
\textcolor{black}{and $\hat{v}_{0}(t)\triangleq v$, $\overline{v}_{0}(t,t_{l}^{0})\triangleq\mathrm{e}^{S(t-t_{l}^{0})}\hat{v}_{0}(t_{l}^{0})= v(t)$. }

Note that $0=\overline{t}_{0}^{i}<\overline{t}_{1}^{i}<\cdots<\overline{t}_{l}^{i}<\cdots$
denote the event-triggered time instants. On time instant $\overline{t}_{l}^{i}$,
agent $i$ will send $\hat{S}_{i}(\overline{t}_{l}^{i})$ and $\hat{v}_{i}(\overline{t}_{l}^{i})$
to its neighbors. The event-triggered time instants are determined
by  PETM-A in Fig. \ref{fig:3-1} which is given by:
\begin{equation}
\overline{t}_{l+1}^{i}=\mathrm{inf}\{\tau>\overline{t}_{l}^{i}|\tau\in\Omega_{T},f_{S}^{i}(\cdot)>0,f_{v}^{i}(\cdot)>0\}\label{eq:26}
\end{equation}
where 
\begin{equation}
f_{S}^{i}(\tau,\overline{t}_{l}^{i})=||\hat{S}_{i}(\tau)-\hat{S}_{i}(\overline{t}_{l}^{i})||_{F}-\iota_{S}\mathrm{e}^{-\gamma_{S}\tau},\label{eq:9}
\end{equation}
\begin{equation}
f_{v}^{i}(\tau,\overline{t}_{l}^{i})=||\hat{v}_{i}(\tau)-\overline{v}_{i}(\tau,\overline{t}_{l}^{i})||-\iota_{v}\mathrm{e}^{-\gamma_{v}\tau}\label{eq:10-3}
\end{equation}
with positive constants $\iota_{S},\iota_{v},\gamma_{S},\gamma_{v}>0$.

It can be seen that the set $\Omega_{ET}^{i}\triangleq\{\overline{t}_{0}^{i},\overline{t}_{1}^{i},...,\overline{t}_{l}^{i},...\}\subseteq\Omega_{T}.$
Meanwhile, in the observer (\ref{eq:5-1-1}) and (\ref{eq:6-1}),
with a little abuse of notation, $\overline{t}_{l}^{i}$ and $\overline{t}_{l'}^{j}$
denote the latest event-triggered time instants for agent $i$ and
$j$ on $[t_{k},t_{k+1})$.

Then, we have the following result.
\begin{thm}
	\label{thm:1}Given a multi-agent system with the leader (\ref{eq:1}),
	then there exists a PET distributed observer in the form of (\ref{eq:5-1-1})-(\ref{eq:26})
	such that $\tilde{S}_{i}\triangleq\hat{S}_{i}-S$ and $\tilde{v}_{i}\triangleq\hat{v}_{i}-v(i=1,2,...,N)$
	converge to zero exponentially. 
\end{thm}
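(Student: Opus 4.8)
The plan is to analyze the two estimation errors $\tilde S_i$ and $\tilde v_i$ separately, since the dynamics of $\hat S_i$ in~(\ref{eq:5-1-1}) are autonomous with respect to $\hat v_i$. I would first stack the matrix estimation errors as $\tilde S = \mathrm{col}(\tilde S_1,\dots,\tilde S_N)$ (or equivalently work with $\mathrm{vec}(\tilde S_i)$) and rewrite the observer~(\ref{eq:5-1-1}) in the aggregated form governed by $-\mu_1(\mathcal H\otimes I)$. Under Assumption~\ref{fact:The-graph-containing}, Lemma~\ref{lem:Under-Assumption-,} guarantees $-\mathcal H$ is Hurwitz, so the nominal (continuously communicating) system is exponentially stable. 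The PET mechanism, however, replaces the true neighbor values by their sampled-and-held versions $\hat S_j(\overline t_{l'}^{j})$, so the actual dynamics read $\dot{\tilde S}_i = -\mu_1\sum_j \mathcal H_{ij}\tilde S_j + d_i^S(t)$, where $d_i^S(t)$ is an injection error equal to $\mu_1\sum_j a_{ij}\bigl[(\hat S_j(\overline t_{l'}^{j})-\hat S_j(t))-(\hat S_i(\overline t_l^{i})-\hat S_i(t))\bigr]$. The key observation is that the triggering rule~(\ref{eq:9}) forces $\|\hat S_i(\tau)-\hat S_i(\overline t_l^{i})\|_F \le \iota_S \mathrm{e}^{-\gamma_S\tau}$ for all sampling instants, so each such hold error is bounded by a decaying exponential $\iota_S\mathrm{e}^{-\gamma_S t}$; hence $\|d_i^S(t)\|\le c\,\mathrm{e}^{-\gamma_S t}$ for a constant $c$ depending on $\mu_1$ and the graph weights.

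Given this, the $\tilde S$ subsystem is a Hurwitz linear system driven by an exponentially decaying disturbance. I would then conclude $\tilde S_i(t)\to 0$ exponentially by a standard variation-of-constants estimate: bounding $\|\mathrm{e}^{-\mu_1(\mathcal H\otimes I)t}\|\le M\mathrm{e}^{-\lambda t}$ and convolving with $c\,\mathrm{e}^{-\gamma_S t}$ yields a bound of the form $(\text{const})\cdot\mathrm{e}^{-\min(\lambda,\gamma_S)t}$ (with an extra polynomial factor in the resonant case $\lambda=\gamma_S$). This is precisely the hypothesis ``$S-\hat S_i$ converges to zero exponentially'' required to invoke Lemma~\ref{lem:If--converges}, and it also ensures $\hat S_i(t)$ stays uniformly bounded, which is needed to control the matrix exponentials appearing in~(\ref{eq:7-2-1}).

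Next I would treat the $\tilde v_i$ subsystem. Writing $\dot v = Sv$ and subtracting from~(\ref{eq:6-1}), the error dynamics take the form $\dot{\tilde v}_i = S\tilde v_i + \tilde S_i\hat v_i + \mu_2\sum_j \mathcal H_{ij}(\cdots) + (\text{hold terms})$. Here two perturbations feed in: the matrix estimation error $\tilde S_i\hat v_i$, which is exponentially small by the previous step (using boundedness of $\hat v_i$, established via a bootstrapping/Gronwall argument), and the PET hold error $\|\hat v_i(\tau)-\overline v_i(\tau,\overline t_l^{i})\|\le\iota_v\mathrm{e}^{-\gamma_v\tau}$ enforced by~(\ref{eq:10-3}). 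Because $S$ is only neutrally stable (Assumption~\ref{fact:4}), the unperturbed $\tilde v$ dynamics are not themselves contractive; the needed stabilization must come from the $-\mu_2(\mathcal H\otimes I)$ coupling. I would therefore aggregate the $\tilde v_i$ equations and use the skew-symmetry of $S$ (permissible by the remark following Assumption~\ref{fact:4}) together with the Hurwitz property of $-\mathcal H$ to build a Lyapunov/matrix-measure estimate for the homogeneous part, then apply Lemma~\ref{lem:(Matrix-norm-inequalities)} to bound the $\overline v_i$ terms in~(\ref{eq:7-2-1}) and Lemma~\ref{lem:1)-(Gronwall's-inequality)} to close the loop against the exponentially decaying inputs.

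The main obstacle I anticipate is the $\tilde v_i$ step, for two reasons. First, the system matrix inside the exponential~(\ref{eq:7-2-1}) is the \emph{estimated} $\hat S_i(\overline t_l^{i})$ rather than the true $S$, so bounding $\|\overline v_i(t,\overline t_l^i)-v(t)\|$ requires carefully comparing $\mathrm{e}^{\hat S_i(\overline t_l^i)(t-\overline t_l^i)}$ with $\mathrm{e}^{S(t-\overline t_l^i)}$; this is exactly where inequality~1) of Lemma~\ref{lem:(Matrix-norm-inequalities)}, $\|\mathrm{e}^{A+B}-\mathrm{e}^A\|\le \mathrm{e}^{\|A\|+\|B\|}\|B\|$, is essential, with $B=\hat S_i-S=\tilde S_i$ decaying exponentially. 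Second, the interval length $t-\overline t_l^i$ over which the exponential acts is only bounded by the (finite but not a priori small) inter-event time, so I must confirm the product $\mathrm{e}^{\|\cdot\|}\|\tilde S_i\|$ stays small uniformly — which follows once I pin down, via the triggering conditions and the sampling structure $\Omega_{ET}^i\subseteq\Omega_T$, that inter-event times remain bounded and $\hat S_i$ remains bounded. Establishing that the cross term $\tilde S_i\hat v_i$ does not destabilize the neutrally-stable $v$-dynamics — i.e.\ that the exponential decay of $\tilde S_i$ dominates before $\hat v_i$ can grow — is the delicate coupling I would expect to consume most of the effort, and it is where a simultaneous Gronwall argument on the combined $(\tilde S,\tilde v)$ state is likely the cleanest route.
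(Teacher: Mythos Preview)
Your high-level decomposition matches the paper's: first close $\tilde S$, then show $\hat v$ does not blow up in finite time, then close $\tilde v$ using skew-symmetry of $S$ and the Hurwitz property of $-\mathcal H$. The gap is in how you bound the PET hold error. You note that the triggering rule~(\ref{eq:9}) gives $\|\hat S_i(\tau)-\hat S_i(\overline t_l^{i})\|_F\le\iota_S\mathrm{e}^{-\gamma_S\tau}$ \emph{at sampling instants} and then conclude $\|d_i^S(t)\|\le c\,\mathrm{e}^{-\gamma_S t}$ for all $t$. This step fails. On $[t_k,t_{k+1})$ the right-hand side of~(\ref{eq:5-1-1}) is the constant $\mu_1\sum_ja_{ij}\bigl(\hat S_j(\overline t_{l'}^j)-\hat S_i(\overline t_l^i)\bigr)$, so the inter-sample drift $\hat S_i(t)-\hat S_i(t_k)$ equals $(t-t_k)$ times that constant, which (up to the threshold) is proportional to $\tilde S_j-\tilde S_i$. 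Hence $d_i^S(t)$ is not a purely exogenous decaying input: it contains a feedback term of size roughly $\mu_1T\|\mathcal H\|\,\|\tilde S\|$. A plain ``Hurwitz plus decaying disturbance'' variation-of-constants argument cannot close this loop; a small-gain step is needed, and that is exactly what forces the sampling-period restriction~(\ref{eq:10-1}) on $T$, which your argument never produces. The paper's Step~1 handles this by first proving $\|\hat S(t)-\hat S(t_k)\|\le\delta_S(T)\|\tilde S\|+c_1\mathrm{e}^{-\gamma_St_k}$ with $\delta_S(T)=\mu_1T\|\mathcal H\|/(1-\mu_1T\|\mathcal H\|)$, and only then running a Lyapunov argument in which the contraction condition $\delta_S(T)<1/\|P\mathcal H\|$ appears.

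A related issue arises in your $\tilde v$ plan: you write that you will ``pin down\ldots\ that inter-event times remain bounded,'' but with exponentially shrinking thresholds the gaps $\overline t_{l+1}^i-\overline t_l^i$ can grow without bound, so the exponent $t-\overline t_l^i$ in~(\ref{eq:7-2-1}) is not uniformly controlled. The paper never uses bounded inter-event times; instead (Step~3) it introduces the auxiliary predictor $\overline v_i^{\,*}(t,t_k,\overline t_l^{\,i})=\mathrm{e}^{\hat S_i(\overline t_l^{\,i})(t-t_k)}\hat v_i(t_k)$, whose matrix exponential acts only over $t-t_k\le T$, and bounds the remainder $\overline v_i-\overline v_i^{\,*}$ via the triggering threshold evaluated at the sampling instant $t_k$, where it legitimately applies. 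Without this splitting, your Lemma~\ref{lem:(Matrix-norm-inequalities)} estimate would carry the factor $\mathrm{e}^{\|\hat S_i\|(t-\overline t_l^i)}$ with an unbounded exponent and the argument would not go through.
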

\begin{proof}
	The proof is presented in Appendix A.
\end{proof}
\begin{rem}
	\textcolor{black}{\label{rem:According-to-Appendix-1}(\ref{eq:5-1-1})
		can be expressed as
		\begin{align*}
		\dot{\hat{S}}_{i} & =\mu_{1}\sum_{j=1}^{N}a_{ij}(\hat{S}_{j}(\overline{t}_{l'}^{j})-\hat{S}_{i}(\overline{t}_{l}^{i}))+a_{i0}(\hat{S}_{0}(\overline{t}_{l'}^{0})-\hat{S}_{i}(\overline{t}_{l}^{i})).
		\end{align*}
		For those followers that have access to the leader, we have $a_{i0}=1$.
		Then we define $\hat{S}_{0}(t)\triangleq S$. For those followers
		that do not have access to the leader, $a_{i0}=0$. Therefore, the
		term $a_{i0}(\hat{S}_{0}(\overline{t}_{l'}^{j})-\hat{S}_{i}(\overline{t}_{l}^{i}))$
		in the above equation vanishes, which means the information of the
		leader is not used. Therefore, only a small number of followers have
		access to the system matrix $S$ of the leader. A similar idea can
		be found for (\ref{eq:6-1}).}
	
	\textcolor{black}{It should be noted that the followers know their
		own system matrices. Specifically, (\ref{eq:2-1})-(\ref{eq:2-2})
		are regarded as the system model of the followers. Therefore, each
		follower knows its own system matrices  $E_{i},F_{i}$ and $F_{mi}$.}
	  \textcolor{black}{Also note that in some situations, }    {\textcolor{black}{$E_{i},F_{i}$ and $F_{mi}$}}\textcolor{black}{{}
		can be regarded as the system matrices of the leader. In these situations,
		we can revise the distributed observer (\ref{eq:5-1-1}) to estimate
	}{\textcolor{black}{$E_{i},F_{i}$ and $F_{mi}$}}\textcolor{black}{{}
		similarly.}
\end{rem}
\begin{rem}
	\label{rem:According-to-Appendix}According to Appendix A, one possible
	choice of the sampling period $T$ is to satisfy
	\begin{equation}
	0<T<\frac{1}{\mu_{\max}\left(||P\mathcal{H}||+1\right)||\mathcal{H}||}\label{eq:10-1}
	\end{equation}
	where $\mu_{\max}=\max\{\mu_{1},\mu_{2}\}$. $P$ is design matrix
	such that $P\mathcal{H}+\mathcal{H}^{T}P=2I$. Note that $P$ always
	exists due to $-\mathcal{H}$ is Hurwitz. It is also noted that the
	selection of $T$ is only dependent on the graph information not on
	the matrix $S$. \textcolor{black}{Moreover, we can see that when $\mu_{1}$ and $\mu_{2}$
		are small enough, the sampling period $T$ can be arbitrary long.
		This implies that by decreasing the values of $\mu_{1}$ and $\mu_{2}$,
		the communication burden between various agents can be reduced considerably. }
	
	\textcolor{black}{Note that for PET control, the minimum inter-event
		time is equal to the sampling period. Hence, it can be determined
		explicitly by (\ref{eq:10-1}).}
\end{rem}
\begin{rem}
	\textcolor{black}{\label{rem:Note-that-theoretically,}Note that theoretically
		if $\mu_{1}$ and $\mu_{2}$ and $T$ satisfy (\ref{eq:10-1}), Theorem \ref{thm:1}
		will hold. However, different values of $\mu_{1},\mu_{2}$ can result
		in different control performance. Herein, we give some guidelines
		for the selection of $\mu_{1}$ and $\mu_{2}$. Basically, a larger $\mu_{1}$ and $\mu_{2}$
		will result in a quicker estimation of the leader information $S$ and $v$.
		This may lead to a faster convergence rate for the multi-agent systems.
		However, $\mu_{1}$ and $\mu_{2}$ cannot be selected to be too large mainly
		because of two factors. First, as stated in Remark \ref{rem:According-to-Appendix},
		small $\mu_{1}$ and $\mu_{2}$ can reduce the communication burden and energy
		consumption. Second, the convergence rate of the multi-agent systems
		will not increase too much when $\mu_{1}$ and $\mu_{2}$ are sufficiently
		large. Moreover, when $\mu_{1}$ and $\mu_{2}$ are large, the injection
		terms on the right hand side of the distributed observer (\ref{eq:5-1-1}),
		(\ref{eq:6-1}) may become large and oscillate. This implies that
		more energy will be needed to realize the distributed observer. }
\end{rem}

\subsection{PET control law when PETM-C is not invoked}

In this section, we will design an output feedback controller for
each follower $i\in\{1,2,...,N\}$. We assume that the data transmission
in the controller-to-actuator channel is continuous, $i.e.$, PETM-C is
not invoked. 

The sampling instants for the output information are denoted as $0=\tau_{0}^{i}<\tau_{1}^{i}<\cdots<\tau_{p}^{i}<\cdots$
with sampling period $\mathcal{T}^{i}=\tau_{p+1}^{i}-\tau_{p}^{i}$.
Note that $\tau_{p}^{i}$ can be different with the communication
sampling instants $t_{k}$. Then during time interval $[\tau_{p}^{i},\tau_{p+1}^{i})$,
the output feedback controller is given by:
\begin{align}
u_{i}= & \omega_{i}(t)\label{eq:10}\\
\omega_{i}= & K_{i}\hat{x}_{i}+(\hat{U}_{i}-K_{i}\hat{X}_{i})\hat{v}_{i},\label{eq:12-2}\\
\dot{\hat{x}}_{i}= & A_{i}\hat{x}_{i}+B_{i}\omega_{i}+E_{i}\hat{v}_{i}\nonumber \\
& +L_{i}(C_{mi}\hat{x}_{i}(\tau_{p}^{i})+D_{mi}\omega_{i}(\tau_{p}^{i}))\nonumber \\
& +L_{i}(\rho_{i}F_{i}\overline{v}_{i}(\tau_{p}^{i},\overline{t}_{l}^{i})+(1-\sigma_{i})F_{mi}\overline{v}_{i}(\tau_{p}^{i},\overline{t}_{l}^{i}))\nonumber \\
& +L_{i}\psi_{i}(\overline{\tau}_{q}^{i})\label{eq:11}
\end{align}
where
\begin{align}
\psi_{i}(t)= & \sigma_{i}F_{mi}\overline{v}_{i}(t,\overline{t}_{l}^{i})-\rho_{i}F_{i}\overline{v}_{i}(t,\overline{t}_{l}^{i})-y_{mi}(t),\label{eq:12}\\
\overline{v}_{i}(t,\overline{t}_{l}^{i})= & \mathrm{e}^{\hat{S}_{i}(\overline{t}_{l}^{i})(t-\overline{t}_{l}^{i})}\hat{v}_{i}(\overline{t}_{l}^{i}),
\end{align}
$\hat{U}_{i},\hat{X}_{i}$ are the adaptive solution for the regulator
equation (\ref{eq:4-1}). It is determined by (\ref{eq:5})-(\ref{eq:6})
where $\hat{S}_{i}$ is obtained by the distributed observer. $K_{i},L_{i}$
are selected such that $A_{i}+B_{i}K_{i}$ and $A_{i}+L_{i}C_{mi}$
are both Hurwitz. $\sigma_{i},\rho_{i}$ are two non-negative design
parameters.

$0=\overline{\tau}_{0}^{i}<\overline{\tau}_{1}^{i}<\cdots<\overline{\tau}_{q}^{i}<\cdots$
are the PET instants for agent $i$. On time instant $\overline{\tau}_{q}^{i}$,
the sensor will transmit $\psi_{i}(\overline{\tau}_{q}^{i})$ to the
controller. They are determined by  PETM-B in Fig. \ref{fig:3-1}
which is described as:
\begin{equation}
\overline{\tau}_{p+1}^{i}=\mathrm{inf}\{\tau>\overline{\tau}_{q}^{i}|\tau\in\Omega_{T},f_{\psi}^{i}(\cdot)>0\}\label{eq:26-1-1}
\end{equation}
where 
\begin{equation}
f_{\psi}^{i}(\tau,\overline{\tau}_{q}^{i})=||\psi_{i}(\tau)-\psi_{i}(\overline{\tau}_{q}^{i})||-\iota_{\psi}\mathrm{e}^{-\gamma_{\psi}\tau}\label{eq:17-2}
\end{equation}
with positive constants $\iota_{\psi},\gamma_{\psi}>0$. 

\textcolor{black}{Note that the designed controller (\ref{eq:10})-(\ref{eq:11})
	only uses the estimated information $\hat{S}_{i},\hat{v}_{i}$ from
	the distributed observer. This implies that the proposed control scheme
	satisfies the condition that only a small number of followers have
	access to the leader information.}

\textcolor{black}{Based on the above analysis, we have the following
	result.}
\begin{thm}
	\label{thm:1-1}Given a multi-agent system (\ref{eq:1})-(\ref{eq:2}),
	then there exists a PET output feedback control law in the form of
	(\ref{eq:10})-(\ref{eq:26-1-1}) with PET distributed observer (\ref{eq:5-1-1})-(\ref{eq:26})
	such that Problem \ref{prob:Given-a-multi-agent} is solved with $\underset{t\rightarrow+\infty}{\lim}||e_{i}(t)||=0$.
\end{thm}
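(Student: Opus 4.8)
The plan is to reduce the closed loop, after a change of coordinates, to a cascade of two exponentially stable subsystems driven by perturbations that either decay exponentially or are multiplicative in the state, and then to close the argument with the matrix-norm bounds of Lemma~\ref{lem:(Matrix-norm-inequalities)} and Gronwall's inequality (Lemma~\ref{lem:1)-(Gronwall's-inequality)}). First I would introduce the error variables $\bar{x}_i=x_i-X_iv$ (deviation from the steady-state manifold) and $\tilde{x}_i=\hat{x}_i-x_i$ (observer error), together with the estimation errors $\tilde{v}_i=\hat{v}_i-v$, $\tilde{X}_i=\hat{X}_i-X_i$, $\tilde{U}_i=\hat{U}_i-U_i$; by Theorem~\ref{thm:1} and Lemma~\ref{lem:If--converges} these last three converge to zero exponentially and are bounded, since $v$ is bounded under Assumption~\ref{fact:4}. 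Using $u_i=\omega_i$ together with (\ref{eq:12}), the key preliminary observation is that the $\sigma_i,\rho_i$ terms injected explicitly in (\ref{eq:11}) cancel exactly against those inside $\psi_i$, so that in the idealized continuous/exact limit the controller collapses to the standard Luenberger-plus-feedforward form $\dot{\hat{x}}_i=A_i\hat{x}_i+B_i\omega_i+E_i\hat{v}_i+L_i(C_{mi}\hat{x}_i+D_{mi}\omega_i+F_{mi}\hat{v}_i-y_{mi})$.

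With this cancellation in hand, and using the regulator equations (\ref{eq:4-1}), I expect the error dynamics to take the cascade form $\dot{\bar{x}}_i=(A_i+B_iK_i)\bar{x}_i+B_iK_i\tilde{x}_i+B_i\delta^u_i$ and $\dot{\tilde{x}}_i=(A_i+L_iC_{mi})\tilde{x}_i+(E_i+L_iF_{mi})\tilde{v}_i+L_i\delta^x_i$, where $\delta^u_i$ is a sum of products of the decaying errors $\tilde{v}_i,\tilde{X}_i,\tilde{U}_i$ with the bounded signal $v$ (hence $\delta^u_i\to0$ exponentially) and $\delta^x_i$ collects the sampling and event-triggering mismatches. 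Since $K_i,L_i$ are chosen so that $A_i+B_iK_i$ and $A_i+L_iC_{mi}$ are Hurwitz, the stacked state $z_i=\mathrm{col}(\bar{x}_i,\tilde{x}_i)$ satisfies $\dot{z}_i=M_iz_i+g_i(t)$ with $M_i$ Hurwitz, and the regulation error reads $e_i=(C_i+D_iK_i)\bar{x}_i+D_iK_i\tilde{x}_i+\delta^e_i$ for an exponentially vanishing $\delta^e_i$, so everything hinges on showing $z_i\to0$.

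The main obstacle is to bound $\delta^x_i$, and hence $g_i(t)$, in the presence of the three asynchronous sampling/triggering schemes. I would split every held quantity into its current value minus a mismatch, e.g. $\hat{x}_i(\tau^i_p)=\hat{x}_i(t)-\eta^x_i(t)$ with $||\eta^x_i(t)||\le\int_{\tau^i_p}^{t}||\dot{\hat{x}}_i||\,ds\le\mathcal{T}^i\sup||\dot{\hat{x}}_i||$, and treat the event-generated terms with the triggering rules (\ref{eq:9}), (\ref{eq:10-3}), (\ref{eq:17-2}), which guarantee (up to one sampling period of overshoot) that $||\hat{v}_i-\overline{v}_i||\le\iota_v\mathrm{e}^{-\gamma_v t}$ and $||\psi_i(t)-\psi_i(\overline{\tau}^i_q)||\le\iota_\psi\mathrm{e}^{-\gamma_\psi t}$. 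This decomposes $g_i(t)$ into an additive part that decays exponentially (estimation errors plus the decaying thresholds) and a part that is multiplicative in $z_i$ with gain of order $\mathcal{T}^i$, because $||\dot{\hat{x}}_i||$ is bounded by an affine function of $||z_i||$ plus bounded terms. Bounding $||\mathrm{e}^{M_i t}||$ via Lemma~\ref{lem:(Matrix-norm-inequalities)}, writing $z_i$ through variation of constants, and applying Gronwall's inequality then yields first a uniform a~priori bound---settling the boundedness requirement 1) of Problem~\ref{prob:Given-a-multi-agent}---and, once the sampling periods $T,\mathcal{T}^i$ are taken small enough (by a smallness condition analogous to (\ref{eq:10-1})) so that the effective exponent $-\alpha+O(\mathcal{T}^i)$ stays negative, the exponential convergence $z_i\to0$.

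Finally, since $\delta^e_i\to0$ and $\bar{x}_i,\tilde{x}_i\to0$, I would conclude $\lim_{t\to+\infty}||e_i(t)||=0$, while the a~priori bound gives boundedness of $\bar{x}_i,\tilde{x}_i$ and therefore of $x_i,\hat{x}_i,u_i$ and $e_i$, completing requirement 1). It is worth stressing that the residual $O(\mathcal{T}^i)$ perturbation is multiplicative in $z_i$ rather than additive, so it vanishes as $z_i\to0$; this is precisely what allows exact regulation here, in contrast with the merely practical bound $\varLambda_i$ that will arise once PETM-C introduces a persistent hold error on $u_i$.
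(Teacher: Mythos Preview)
There is a genuine gap in how you bound the sampling mismatch. You propose to write $\hat{x}_i(\tau_p^i)=\hat{x}_i(t)-\eta_i^x(t)$ and bound $\|\eta_i^x\|\le\mathcal{T}^i\sup\|\dot{\hat{x}}_i\|$, then note that $\|\dot{\hat{x}}_i\|$ is affine in $\|z_i\|$ ``plus bounded terms.'' But those bounded terms are real and nonvanishing: since $\hat{x}_i=\bar{x}_i+X_iv+\tilde{x}_i$ and $\hat{v}_i=v+\tilde{v}_i$, the right-hand side of (\ref{eq:11}) contains $A_iX_iv$, $E_iv$, $B_i(U_i-K_iX_i)v$, etc. Hence your bound is $\|\eta_i^x\|\le\mathcal{T}^i(c_1\|z_i\|+c_2)$ with $c_2>0$, and the $\mathcal{T}^ic_2$ piece enters $g_i(t)$ as a \emph{persistent} additive constant. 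After variation of constants this yields only $\|z_i\|\to O(\mathcal{T}^i)$, i.e.\ practical regulation. Your concluding claim that ``the residual $O(\mathcal{T}^i)$ perturbation is multiplicative in $z_i$ rather than additive'' does not follow from the bound you wrote.

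The paper sidesteps this by noticing that, once the $\sigma_i,\rho_i$ cancellation is performed at $\tau_p^i$ and $y_{mi}(\tau_p^i)$ is substituted, the held term that survives in the $\tilde{x}_i$ dynamics is $L_iC_{mi}\tilde{x}_i(\tau_p^i)$, not $L_iC_{mi}\hat{x}_i(\tau_p^i)$; see (\ref{eq:14}). The relevant mismatch is therefore $\tilde{x}_i(t)-\tilde{x}_i(\tau_p^i)$, and the $\tilde{x}_i$ equation contains only $\tilde{x}_i$ and exponentially decaying forcing---all $v$-dependent bounded pieces have cancelled. Integrating (\ref{eq:14}) over $[\tau_p^i,t)$ and applying Gronwall gives $\|\tilde{x}_i-\tilde{x}_i(\tau_p^i)\|\le\delta_{\tilde{x}}(\mathcal{T}^i)\|\tilde{x}_i\|+c\mathrm{e}^{-ct}$, which is genuinely multiplicative-plus-decaying. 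This is then inserted into a Lyapunov function $V_{\tilde{x}}=\tfrac{1}{2}\tilde{x}_i^TQ_i\tilde{x}_i$ to obtain the explicit smallness condition (\ref{eq:16-1}) and exponential decay of $\tilde{x}_i$; $\bar{x}_i$ follows by cascade with no sampling issue since here $u_i=\omega_i$ continuously. Your cascade structure and your treatment of the decaying thresholds are fine; the repair is to bound the sampling increment in the \emph{error} variable $\tilde{x}_i$ (whose derivative is small with $z_i$), not in $\hat{x}_i$ (whose derivative is merely bounded).
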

\begin{proof}
	The proof is presented in Appendix C.
\end{proof}
\begin{rem}
	\label{rem:From-Appendix-C,}From Appendix C, we know one possible
	selection of sampling period $\mathcal{T}^{i}$ is 
	\begin{equation}
	0<\frac{||Q_{i}L_{i}C_{mi}||\delta_{i3}(\mathcal{T}^{i})}{1-\delta_{i3}(\mathcal{T}^{i})}<1\label{eq:16-1}
	\end{equation}
	where $\delta_{i3}(\mathcal{T}^{i})$ is a $\mathcal{K}$-class function
	such that $\delta_{i3}(\mathcal{T}^{i})=\mathcal{T}^{i}||A_{i}+L_{i}C_{mi}||\mathrm{e}^{||A_{i}||\mathcal{T}^{i}}$,
	$Q_{i}$ is a given matrix such that $Q_{i}(A_{i}+L_{i}C_{mi})+(A_{i}+L_{i}C_{mi})^{\mathrm{\mathit{T}}}Q_{i}=-2I$. 
	
	\textcolor{black}{Note that since $\delta_{i3}(\mathcal{T}^{i})$ is
		a $\mathcal{K}$-class function, there must exist a positive $\mathcal{T}^{i}$
		such that (\ref{eq:16-1}) holds. In fact, one can compute the Maximum
		Allowable Sampling Period for $\mathcal{T}^{i}$ from (\ref{eq:16-1}).
		Meanwhile, similar to Remark \ref{rem:According-to-Appendix}, the
		minimum inter-event time for PET control can be determined by solving
		(\ref{eq:16-1}) numerically.}
\end{rem}
\begin{rem}
	\textcolor{black}{\label{rem:For-the-parameters}For the design parameters
		in the event-triggered mechanisms (\ref{eq:9}), (\ref{eq:10-3})
		and (\ref{eq:17-2}), larger $\iota_{S},\iota_{v}\iota_{\psi}$ and
		smaller $\gamma_{S},\gamma_{v}\gamma_{\psi}$ indicate a larger threshold
		for the event-triggered mechanism. Hence, more communication burden
		could be reduced. However, for larger $\iota_{S},\iota_{v}\iota_{\psi}$
		and smaller $\gamma_{S},\gamma_{v}\gamma_{\psi}$, the threshold values
		will take more time to converge to zero. This indicates the convergence
		speed for the multi-agent systems may become slower. }
	
	The selection of the design parameters $\sigma_{i}$ and $\rho_{i}$
	in (\ref{eq:11}) is flexible. Basically they can be any positive
	real numbers. However, different values of $\sigma_{i}$ and $\rho_{i}$
	can result in different control performance and communication burden.
	Specifically, $\sigma_{i}$ and $\rho_{i}$ should be selected such that
	when $t\rightarrow+\infty$, $\psi_{i}(t)$ can be as small as possible.
	For typical example, if $y_{mi}$ can be expressed as $y_{mi}=C_{i}x_{i}+D_{i}u_{i}+F_{mi}v$,
	then $\sigma_{i}=\rho_{i}=1$. In this case, by (\ref{eq:2}), we
	have
	\begin{align*}
	\psi_{i}(t)= & F_{mi}\overline{v}_{i}(t,\overline{t}_{l}^{i})-F_{i}\overline{v}_{i}(t,\overline{t}_{l}^{i})-y_{mi}(t)\\
	= & -(C_{i}x_{i}+D_{i}u_{i}+F_{i}v)\\
	& +\left(F_{mi}-F_{i}\right)\left(\overline{v}_{i}(t,\overline{t}_{l}^{i})-v\right)\\
	= & -e_{i}+\left(F_{mi}-F_{i}\right)\left(\overline{v}_{i}(t,\overline{t}_{l}^{i})-v\right).
	\end{align*}
	According to Theorems \ref{thm:1}-\ref{thm:1-1}, we know $\left(\overline{v}_{i}(t,\overline{t}_{l}^{i})-v\right)\rightarrow0$,
	$e_{i}\rightarrow0$ as $t\rightarrow+\infty$. Therefore, we can
	conclude that $\psi_{i}\rightarrow0$, as $t\rightarrow+\infty$.
	Hence, $\psi_{i}$ is minimized as $t\rightarrow+\infty$. 
\end{rem}
\begin{rem}
	\textcolor{black}{From Fig. \ref{fig:3-1} and (\ref{eq:12}), we know
		PETM-B may need some information about the leader, $i.e.$, $\hat{v}_{i}(\overline{t}_{l}^{i})$ and $\hat{S}_{i}(\overline{t}_{l}^{i})$.
		This information can be transmitted by the controller. Note that the
		information $\hat{v}_{i}(\overline{t}_{l}^{i})$ and $\hat{S}_{i}(\overline{t}_{l}^{i})$
		is not necessary for  PETM-B since one can set $\sigma_{i}=\rho=0$
		(the single switch in Fig. \ref{fig:3-1} is off). However, as stated in
		Remark \ref{rem:For-the-parameters}, using this information, $i.e.,$
		set $\sigma_{i}=\rho=1$, the communication burden can be reduced
		considerably. ,}
	
	\textcolor{black}{One may wonder whether the communication burden
		between the controller and sensor may increase if the controller sends
		some information to  PETM-B. As described in Remark \ref{rem:According-to-Appendix},
		we know the communication burden for transmitting $\hat{v}_{i}(\overline{t}_{l}^{i})$ and $\hat{S}_{i}(\overline{t}_{l}^{i})$
		can be very small because the sampling time $T$ can be made arbitrary
		long by tuning the control parameters $\mu_{1}$ and $\mu_{2}$. Therefore,
		the overall communication burden for the sensor-to-controller channel
		can still be reduced. In addition, there are several alternative ways
		to remove the communication from the controller to the sensor. Please
		see Appendix E for details.}
\end{rem}

\subsection{PET control law when PETM-C is invoked}

Let us consider the case when the PET mechanism is used in the controller-to-actuator
channel, $i.e.$,  PETM-C is invoked in Fig. \ref{fig:3-1}. It
is noted that since we consider a regulation problem, the tracking
error may not converge to exact zero because of the discrete transmission.
This is a common case for a regulation or tracking problem (see \cite{key-6a,key-7a}).
In fact, the error will be regulated to an arbitrary small neighborhood
around the origin similar to \cite{key-6a}. 

Consider the following control law
\begin{align}
u_{i}(t)= & \omega_{i}(\overline{\varsigma}_{m}^{i}),\thinspace t\in[\overline{\varsigma}_{m}^{i},\overline{\varsigma}_{m+1}^{i})\label{eq:10-2}
\end{align}
where $\omega_{i}$ is described by (\ref{eq:12-2})-(\ref{eq:26-1-1})
except the triggered function $f_{\psi}^{i}(\tau,\overline{\tau}_{q}^{i})$
is modified as:
\begin{equation}
\overline{f}_{\psi}^{i}(\tau,\overline{\tau}_{q}^{i})=||\psi_{i}(\tau)-\psi_{i}(\overline{\tau}_{q}^{i})||-\left(\iota_{\psi}\mathrm{e}^{-\gamma_{\psi}\tau}+\overline{\iota}_{\psi}\right)\label{eq:17-2-1}
\end{equation}
with constants $\iota_{\psi},\overline{\iota}_{\psi},\gamma_{\psi}\geq0$. 

$0=\overline{\varsigma}_{0}^{i}<\overline{\varsigma}_{1}^{i}<\cdots<\overline{\varsigma}_{m}^{i}<\cdots$
is the the PET instants for the controller-to-actuator channel. On
time instant $\overline{\varsigma}_{m}^{i}$, the controller will
transmit $\omega_{i}(\overline{\varsigma}_{m}^{i})$ to the actuator.
They are described as:
\begin{equation}
\overline{\varsigma}_{m+1}^{i}=\mathrm{inf}\{\tau>\overline{\varsigma}_{m}^{i}|\tau\in\Omega_{\mathcal{T}^{i}},f_{\omega}^{i}(\cdot)>0\}\label{eq:26-1-1-1}
\end{equation}
where 
\begin{equation}
f_{\omega}^{i}(\tau,\overline{\varsigma}_{m}^{i})=||\omega_{i}(\tau)-\omega_{i}(\overline{\varsigma}_{m}^{i})||-\left(\iota_{\omega}\mathrm{e}^{-\gamma_{\omega}\tau}+\overline{\iota}_{\omega}\right)\label{eq:26-1-1-1-1}
\end{equation}
with constants $\iota_{\omega},\overline{\iota}_{\omega},\gamma_{\omega}\geq0$. 

Then, we have the following result.
\begin{thm}
	\label{thm:2}Given a multi-agent system (\ref{eq:1})-(\ref{eq:2}),
	then there exists a PET output feedback control law in the form of
	(\ref{eq:10-2})-(\ref{eq:26-1-1-1}) with PET distributed observer
	(\ref{eq:5-1-1})-(\ref{eq:26}) such that Problem \ref{prob:Given-a-multi-agent}
	is solved with \textcolor{blue}{$\underset{t\rightarrow+\infty}{\lim}||e_{i}(t)||\leq\varLambda_i$. }
\end{thm}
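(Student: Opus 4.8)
The plan is to treat Theorem~\ref{thm:2} as a bounded-perturbation version of Theorem~\ref{thm:1-1}: the only structural change is that PETM-C holds the control input constant on each interval $[\overline{\varsigma}_{m}^{i},\overline{\varsigma}_{m+1}^{i})$ and that the thresholds in (\ref{eq:17-2-1}) and (\ref{eq:26-1-1-1-1}) now carry the non-vanishing constants $\overline{\iota}_{\psi}$ and $\overline{\iota}_{\omega}$. First I would read off the perturbation bounds directly from the triggering rules. Since $\overline{\varsigma}_{m+1}^{i}$ is by (\ref{eq:26-1-1-1-1}) the first sampling instant at which $f_{\omega}^{i}>0$, on the whole interval one has $\|\omega_{i}(t)-\omega_{i}(\overline{\varsigma}_{m}^{i})\|\le\iota_{\omega}\mathrm{e}^{-\gamma_{\omega}t}+\overline{\iota}_{\omega}$, and analogously from (\ref{eq:17-2-1}) the sensor-hold error obeys $\|\psi_{i}(t)-\psi_{i}(\overline{\tau}_{q}^{i})\|\le\iota_{\psi}\mathrm{e}^{-\gamma_{\psi}t}+\overline{\iota}_{\psi}$. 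Writing $u_{i}(t)=\omega_{i}(t)+\Delta\omega_{i}(t)$ with $\Delta\omega_{i}(t)=\omega_{i}(\overline{\varsigma}_{m}^{i})-\omega_{i}(t)$, the actuator-hold error $\Delta\omega_{i}$ is thus a decaying exponential plus the constant $\overline{\iota}_{\omega}$; the same dichotomy holds for the sensor-hold error entering the observer through $\psi_{i}(\overline{\tau}_{q}^{i})$.

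Next I would reuse the coordinate transformation and closed-loop assembly from the proof of Theorem~\ref{thm:1-1} in Appendix~C. Introducing the tracking error $z_{i}=x_{i}-X_{i}v$ and the estimation error $\tilde{x}_{i}=x_{i}-\hat{x}_{i}$, and invoking Theorem~\ref{thm:1} together with Lemma~\ref{lem:If--converges} to replace $\hat{S}_{i},\hat{v}_{i},\hat{X}_{i},\hat{U}_{i}$ by $S,v,X_{i},U_{i}$ up to exponentially decaying remainders, the regulator equations (\ref{eq:4-1}) collapse the aggregate error $\xi_{i}=\mathrm{col}(z_{i},\tilde{x}_{i})$ into a linear system whose drift is block-triangular with Hurwitz diagonal blocks $A_{i}+B_{i}K_{i}$ and $A_{i}+L_{i}C_{mi}$. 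This is precisely the nominal system of Theorem~\ref{thm:1-1}; the difference now is that $\Delta\omega_{i}$ and the sensor-hold error survive as additive forcing terms $B_{i}\Delta\omega_{i}$ and $L_{i}(\psi_{i}(\overline{\tau}_{q}^{i})-\psi_{i}(t))$ on the right-hand side instead of vanishing.

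Because the drift is Hurwitz, the system is input-to-state stable with respect to these forcing terms. Using the Lyapunov matrix $Q_{i}$ of Remark~\ref{rem:From-Appendix-C,}, the matrix-exponential estimates of Lemma~\ref{lem:(Matrix-norm-inequalities)} to bound the intersample growth, and Gronwall's inequality (Lemma~\ref{lem:1)-(Gronwall's-inequality)}) exactly as in Appendix~C, I would obtain an estimate of the form $\|\xi_{i}(t)\|\le c_{1}\mathrm{e}^{-\lambda t}+c_{2}(\overline{\iota}_{\psi}+\overline{\iota}_{\omega})$ for positive constants $c_{1},c_{2},\lambda$. Letting $t\to+\infty$, the decaying part vanishes and $\limsup_{t\to+\infty}\|\xi_{i}(t)\|\le c_{2}(\overline{\iota}_{\psi}+\overline{\iota}_{\omega})$. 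Since (\ref{eq:4-1}) makes $e_{i}$ a linear function of $\xi_{i}$ together with the actuator-hold term $D_{i}\Delta\omega_{i}$ and a remainder $r_{i}(t)\to0$, the regulation error inherits the same ultimate bound, yielding $\limsup_{t\to+\infty}\|e_{i}(t)\|\le\varLambda_{i}$ with $\varLambda_{i}$ proportional to $\overline{\iota}_{\psi}+\overline{\iota}_{\omega}$ and hence tunable to be arbitrarily small.

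I expect the main obstacle to be the bookkeeping of the sampled-data/hold interplay. Because the distributed observer, the sensor-to-controller channel, and the actuator channel are all updated only at the discrete instants of $\Omega_{T}$ and $\Omega_{\mathcal{T}^{i}}$, every forcing term is a piecewise-constant surrogate of a continuous signal, so the intersample Gronwall estimate must simultaneously absorb (i)~the observer errors $\tilde{S}_{i},\tilde{v}_{i}$ decaying by Theorem~\ref{thm:1}, (ii)~the two hold errors bounded above, and (iii)~the growth of $\mathrm{e}^{A_{i}(t-\tau_{p}^{i})}$ across one sampling period. Keeping these three contributions mutually consistent, and verifying that the sampling-period conditions (\ref{eq:10-1}) and (\ref{eq:16-1}) force the associated intersample map to remain a contraction so that $c_{1},c_{2},\lambda$ are well-defined, is the delicate step; the existence of an admissible $\mathcal{T}^{i}$ itself is immediate because $\delta_{i3}(\mathcal{T}^{i})$ is a $\mathcal{K}$-class function, exactly as argued in Remark~\ref{rem:From-Appendix-C,}.
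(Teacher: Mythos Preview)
Your overall strategy---treat Theorem~\ref{thm:2} as Theorem~\ref{thm:1-1} plus bounded perturbations from the two non-vanishing thresholds---matches the paper, but there is a genuine gap in how you handle the actuator hold.

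The claim that ``on the whole interval one has $\|\omega_{i}(t)-\omega_{i}(\overline{\varsigma}_{m}^{i})\|\le\iota_{\omega}\mathrm{e}^{-\gamma_{\omega}t}+\overline{\iota}_{\omega}$'' is not justified. The PET rule (\ref{eq:26-1-1-1}) only tests $f_{\omega}^{i}$ at the discrete instants $\tau\in\Omega_{\mathcal{T}^{i}}$; between two consecutive sampling instants $\omega_{i}$ keeps evolving (in steady state $\omega_{i}\to U_{i}v$, which is genuinely time-varying when $S\ne0$), so $\Delta\omega_{i}(t)$ can exceed the threshold by an amount of order $\mathcal{T}^{i}$. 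This is why the paper does \emph{not} write $u_{i}=\omega_{i}(t)+\Delta\omega_{i}(t)$ and invoke ISS; instead it rewrites $\overline{u}_{i}=K_{i}\overline{x}_{i}(\tau_{p}^{i})+\mathcal{U}_{i}$ with $\mathcal{U}_{i}$ containing $\omega_{i}(\overline{\varsigma}_{m}^{i})-\omega_{i}(\tau_{p}^{i})$ (a difference of two \emph{sampling} instants, hence legitimately bounded by the threshold) together with the extra term $U_{i}(v(\tau_{p}^{i})-v(t))$, and then runs a \emph{second} intersample Gronwall argument for $\overline{x}_{i}(t)-\overline{x}_{i}(\tau_{p}^{i})$. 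This second Gronwall step is absent from Appendix~C (there PETM-C is off, so no sampling enters the $\overline{x}_{i}$ dynamics) and produces the \emph{additional} sampling-period restriction (\ref{eq:16-1-2-1}) on $\mathcal{T}^{i}$ via the new $\mathcal{K}$-function $\delta_{i4}$ and the Lyapunov matrix $R_{i}$ for $A_{i}+B_{i}K_{i}$; your proposal only cites (\ref{eq:16-1}).

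As a consequence, your stated ultimate bound $\varLambda_{i}\propto\overline{\iota}_{\psi}+\overline{\iota}_{\omega}$ is too optimistic. The paper's bound $\varphi_{i8}(\overline{\iota}_{\psi},\overline{\iota}_{\omega},\mathcal{T}^{i})$ carries an irreducible $\mathcal{T}^{i}$-contribution coming from $\|U_{i}(v(\tau_{p}^{i})-v(t))\|\le\|U_{i}\|\zeta_{2}\mathcal{T}^{i}$; in particular, for time-varying $v$ the regulation error does \emph{not} vanish even when $\overline{\iota}_{\psi}=\overline{\iota}_{\omega}=0$ (cf.\ Remark~\ref{rem:11}). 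To repair the argument you must (i) restrict the triggering bound to sampling instants, (ii) add the Gronwall/Lyapunov pair for $\overline{x}_{i}$ leading to (\ref{eq:16-1-2-1}), and (iii) track the resulting $\mathcal{T}^{i}$-dependence into $\varLambda_{i}$.
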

\begin{proof}
	The proof is presented in Appendix D.
\end{proof}
\begin{rem}
	\textcolor{black}{\label{rem:11}\textcolor{blue}{From Appendix D, we know for agent
			$i\in\{1,2,...,N\}$, there exists a non-negative increasing function
			$\varphi_{i8}(\overline{\iota}_{\psi},\overline{\iota}_{\omega},\mathcal{T}^{i})$
			with $\varphi_{i8}(0,0,0)=0$ such that $\underset{t\rightarrow+\infty}{\lim}||e_{i}(t)||\leq\varLambda_{i}\leq\varphi_{i8}(\overline{\iota}_{\psi},\overline{\iota}_{\omega},\mathcal{T}^{i}).$
			Also we have $\varLambda_{i}\leq\underset{i\in\{1,2,...,N\}}{\max}\varphi_{i8}(\overline{\iota}_{\psi},\overline{\iota}_{\omega},\mathcal{T}^{i})$.
		}This means that by decreasing the design parameters $\overline{\iota}_{\psi},\overline{\iota}_{\omega},\mathcal{T}^{i}$,
		the regulation error can be made arbitrary small.  Meanwhile, from
		the event-triggered condition (\ref{eq:26-1-1-1}), we can see when
		$\overline{\iota}_{\psi},\overline{\iota}_{\omega}$ are small, the
		condition (\ref{eq:26-1-1-1}) may be easier to be triggered, thus
		resulting in more frequent transmissions and higher communication
		burden. The detailed expression of $\varphi_{i8}(\overline{\iota}_{\psi},\overline{\iota}_{\omega},\mathcal{T}^{i})$
		is given by (D.29) in Appendix D. Note that this estimation
		may be a little conservative in some situations. This is a common
		phenomenon when using the Lyapunov function method (see the discussion
		in \cite{key-20}). However, the property of $\varphi_{i8}(\overline{\iota}_{\psi},\overline{\iota}_{\omega},\mathcal{T}^{i})$
		can give some insights on how the control parameters will influence
		the control accuracy. }
	
	\textcolor{black}{In addition, from (D.29) in Appendix D,
		we know when $\overline{\iota}_{\psi}=0,\overline{\iota}_{\omega}=0$
		and the signal $v$ is a constant, $\varphi_{i8}(\overline{\iota}_{\psi},\overline{\iota}_{\omega},\mathcal{T}^{i})=0$.
		This means we can make the regulation error converge to exact zero
		for constant $v$.}
\end{rem}
\begin{rem}
	According to Appendix D, the sampling period $\mathcal{T}^{i}$ should
	simultaneously satisfy (\ref{eq:16-1}) and
	\begin{equation}
	0<\frac{||R_{i}B_{i}K_{i}||\delta_{i4}(\mathcal{T}^{i})}{1-\delta_{i4}(\mathcal{T}^{i})}<1\label{eq:16-1-2-1}
	\end{equation}
	where $\delta_{i4}(\mathcal{T}^{i})$ is a $\mathcal{K}$-class function
	such that $\delta_{i4}(\mathcal{T}^{i})=\mathcal{T}^{i}||A_{i}+B_{i}K_{i}||\mathrm{e}^{||A_{i}||\mathcal{T}^{i}}$,
	$R_{i}$ is a given matrix such that $R_{i}(A_{i}+B_{i}K_{i})+(A_{i}+B_{i}K_{i})^{\mathrm{\mathit{T}}}R_{i}=-2I$.
	\textcolor{black}{Similar to Remarks \ref{rem:According-to-Appendix}
		and \ref{rem:From-Appendix-C,}, the minimum inter-event time can
		be determined by solving (\ref{eq:16-1}) and (\ref{eq:16-1-2-1})
		numerically.}
\end{rem}
\begin{rem}
	It is noted that we have assumed the sampling time $t_{k}$ for the
	communication between different agents have been synchronized as in
	\cite{key-3a-1,key-9-1}. This can be achieved by using some time
	synchronization methods such as \cite{key-b1,key-b4}. This is a common
	assumption in continuous time cooperative control. It should be also
	emphasized that the PET transmission between various agents, in sensor-to-controller
	and controller-to-actuator channels are all asynchronous. Also note
	that the selection of the sampling period $T$ could rely on some
	global graph information. One can choose a small enough sampling period
	by considering all the possible situations of the graph, or use some
	methods, $\emph{e.g.,}$ \cite{key-8a}, to estimate the graph information
	distributedly. In addition, from the simulations, we can see that
	the proposed method is robust to the variations of the sampling period.
\end{rem}

\section{\textcolor{black}{Simulations}}

Consider a linear multi-agent system described by (\ref{eq:1})-(\ref{eq:2})
with $4$ followers. The system matrix of the leader is $S=\mathrm{col}([0\thinspace1],[-1\thinspace0])$.
The dynamics of the followers are described as:
\begin{align*}
\dot{x}_{i} & =\left[\begin{array}{cc}
0 & 1\\
0 & \delta_{i}
\end{array}\right]x_{i}+\left[\begin{array}{c}
0\\
1
\end{array}\right]u_{i},\\
e_{i} & =[1\thinspace0]x_{i}+[-1\thinspace0]v,\\
y_{mi} & =[1\thinspace0]x_{i},i=1,2,3,4
\end{align*}
where $\delta_{1}=-0.3$,  $\delta_{2}=-0.4$, $\delta_{i3}=-0.5$,
$\delta_{i3}=-0.4$. The underlying communication graph is depicted
in Fig. \ref{fig:3-1-1}. The considered system can describe a class
of robotic systems \citep{key-14aa}. The initial states are $v(0)=[0.9\thinspace-0.5]^{\mathrm T}$,
$x_{1}(0)=[0.2\thinspace0.3]^{\mathrm T}$, $x_{2}(0)=[0.1\thinspace0.3]^{\mathrm T}$,
$x_{3}(0)=[0.5\thinspace0.6]^{\mathrm T}$, $x_{4}(0)=[0.8\thinspace0.8]^{\mathrm T}$.
The control purpose is to regulate $e_{i}$ to zero or a small neighborhood
around the origin, $i.e.$, make the output $y_{mi}$ of each follower
track the output $y_{0}\triangleq[-1\thinspace0]v$ of the leader.
The simulations are divided into two parts.

\subsection{PETM-A and PETM-B are invoked, PETM-C is not invoked}

We first consider the case when the controller-to-actuator channel
is continuous. 

\subsubsection{Effectiveness of the proposed method}

The PET distributed observer and output feedback controller are respectively
given by (\ref{eq:5-1-1})-(\ref{eq:26}) and (\ref{eq:10})-(\ref{eq:26-1-1})
with $\mu_{1}=\mu_{2}=3$, $\iota_{S}=\iota_{v}=2$, $\gamma_{S}=\gamma_{v}=1$;
$K_{i}=[-5\thinspace-5]$, $L_{i}=[-5\thinspace-5]$, $\sigma_{i}=\rho_{i}=1(i=1,2,3,4)$,
$\kappa=30$, $\iota_{\psi}=2$, $\gamma_{\psi}=1$. Based on (\ref{eq:10-1})
and (\ref{eq:16-1}), the sampling period is selected as $T=\mathcal{T}^{i}=0.01s(i=1,2,3,4)$.
The control performance of the multi-agent systems is shown in Fig.
\ref{fig:1}. It can be seen that the output of each follower quickly
follows the output of the leader. Meanwhile, the regulation errors
of the four followers all converge to zero. This indicates that cooperative
output regulation has been achieved. 

Fig. \ref{fig:2}(a) and Fig. \ref{fig:2}(b) show the event-triggered
instants for the communication between each agent pair, and the sensor-to-controller
transmission in each agent respectively. We observe many time intervals
which do not have data communication. This implies that the communication
burden has been reduced considerably. Fig. \ref{fig:3} shows the
inter-event time for the communication from agent 2 to agent 3, and
the sensor-to-controller transmission in agent 2. We can see that
a minimum inter-event time has been guaranteed. Moreover, all the
inter-event times are multiples of sampling period. These results
highlight the advantages of the PET output regulation.

\subsubsection{Discussions on the control parameters}

We will give some discussions on how the control parameters will influence
the control performance of the multi-agent systems. 

Fig. \ref{fig:1-2} shows the control performance of the multi-agent
systems for different $\mu_{1},\mu_{2}$. We can see that the convergence
speed increases a lot when the values of $\mu_{1},\mu_{2}$ change
from 1 to 2. However, the convergence rate for $\mu_{1}=\mu_{2}=3$
is slightly faster than that for $\mu_{1}=\mu_{2}=2$. This verifies
Remark \ref{rem:Note-that-theoretically,} and shows that when $\mu_{1},\mu_{2}$
are too large, the convergence speed does not increase considerably.
Therefore, $\mu_{1},\mu_{2}$ are recommended to be set between 2
and 3 for the simulation to save energy.

Tables \ref{tab:1}-\ref{tab:2} respectively show the event-triggered
times for different values of $\iota_{S},\iota_{v}\iota_{\psi}$ and
$\gamma_{S},\gamma_{v}\gamma_{\psi}$ in PETM-A and PETM-B. It can
be clearly seen that the communication load is reduced when increasing
$\iota_{S},\iota_{v}\iota_{\psi}$ and decreasing $\gamma_{S},\gamma_{v}\gamma_{\psi}$.
Fig. \ref{fig:1-2-2} demonstrates the control performance when $\iota_{S},\iota_{v}\iota_{\psi}$
and $\gamma_{S},\gamma_{v}\gamma_{\psi}$ take different values. We
can see that a longer convergence time is needed for larger $\iota_{S},\iota_{v}\iota_{\psi}$
and smaller $\gamma_{S},\gamma_{v}\gamma_{\psi}$. This verifies Remark
\ref{rem:For-the-parameters}.

Let us see how the parameters $\sigma_{i}$ and $\rho_{i}$ in (\ref{eq:11})
influence the performance of the multi-agent systems. Fig. \ref{fig:8}
shows the outputs of the four followers for different $\sigma_{i},\rho_{i}$.
We can see that the proposed controller is robust to the variations
of $\sigma_{i},\rho_{i}$. The control performance is almost the same
for different $\sigma_{i},\rho_{i}$. This implies that the selection
of $\sigma_{i},\rho_{i}$ can be very flexible. Table \ref{tab:3}
demonstrates the event-triggered times for different $\sigma_{i},\rho_{i}$.
We can see that as stated in Remark \ref{rem:For-the-parameters},
the communication burden is reduced considerably when $\sigma_{i}=\rho_{i}=1$.
This also shows the advantages of the control scheme in Fig. \ref{fig:3-1}
such that the controller should send some information to PETM-B to
further reduce the communication burden.

Finally, let us see how the selections of the sampling periods $T$
and $\mathcal{T}^{i}$ will influence the control performance. As
stated in Section 3.2, the sampling periods can be selected independently
and be different from one another. Fig. \ref{fig:9} shows the control
performance of the multi-agent systems under different $T,\mathcal{T}^{i}$.
We can see that the multi-agent systems are robust to the variations
of $T,\mathcal{T}^{i}$. 

\subsection{PETM-A, PETM-B and PETM-C are invoked}

We assume that PETM-C is invoked in Fig. \ref{fig:3-1}. The PET distributed
observer and output feedback controller are given by (\ref{eq:5-1-1})-(\ref{eq:26})
and (\ref{eq:10-2})-(\ref{eq:26-1-1-1}) with $\overline{\iota}_{\psi}=\overline{\iota}_{\omega}=0.001$,
$\iota_{\omega}=2$, $\gamma_{\omega}=1$. The other control parameters
are the same as before. The regulation errors and control efforts
are shown in Figs. \ref{fig:10}-\ref{fig:11}. It can be seen that
the regulation error has converged to a small neighborhood around
the origin. \textcolor{black}{According to (\ref{eq:phi8}) in Appendix
	D, we know for $\forall i\in\{1,2,...,N\}$,  $\varLambda_{i}\leq\underset{i\in\{1,2,...,N\}}{\max}\varphi_{i8}(\overline{\iota}_{\psi},\overline{\iota}_{\omega},\mathcal{T}^{i})\leq0.43$
	with $\varepsilon=0.01$, $\zeta_{1}=0.4$, $\zeta_{2}=1$, $\zeta_{3}=0.4$.} This is accord with Fig. \ref{fig:11}.
Note that as stated in Remark \ref{rem:11} the estimation of $\varLambda$
by (\ref{eq:phi8}) is somewhat conservative. Nevertheless, it can
provide guidelines for the selections of the control parameters. 

Fig. \ref{fig:10} shows the control performance under different $\overline{\iota}_{\psi},\overline{\iota}_{\omega}$.
We observe that the regulation errors all converge to a small neighborhood
of the origin for all the cases. This verifies the robustness of the
proposed method. Meanwhile, Table \ref{tab:5} shows the regulation
errors and event-triggered times for PETM-B and PETM-C with different
$\overline{\iota}_{\psi},\overline{\iota}_{\omega}$. We can see that
larger $\overline{\iota}_{\psi},\overline{\iota}_{\omega}$ can result
in larger regulation errors but small communication burden. This verifies
Remark \ref{rem:11}.

In addition, Fig. \ref{fig:11-1} demonstrates the control performance
when $S=\overline{\iota}_{\psi}=\overline{\iota}_{\omega}=0$. In
this case, $v$ is a constant signal. It can be seen that the regulation
error converges to exact zero even though PETM-C is invoked. This
also verifies Remark \ref{rem:11}.
\begin{figure}
	\begin{centering}
		\includegraphics[scale=0.8]{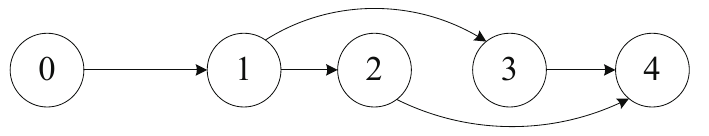}
		\par\end{centering}
	\caption{\label{fig:3-1-1}Communication graph.}
\end{figure}
\begin{figure}
	\begin{centering}
		\includegraphics[scale=0.55]{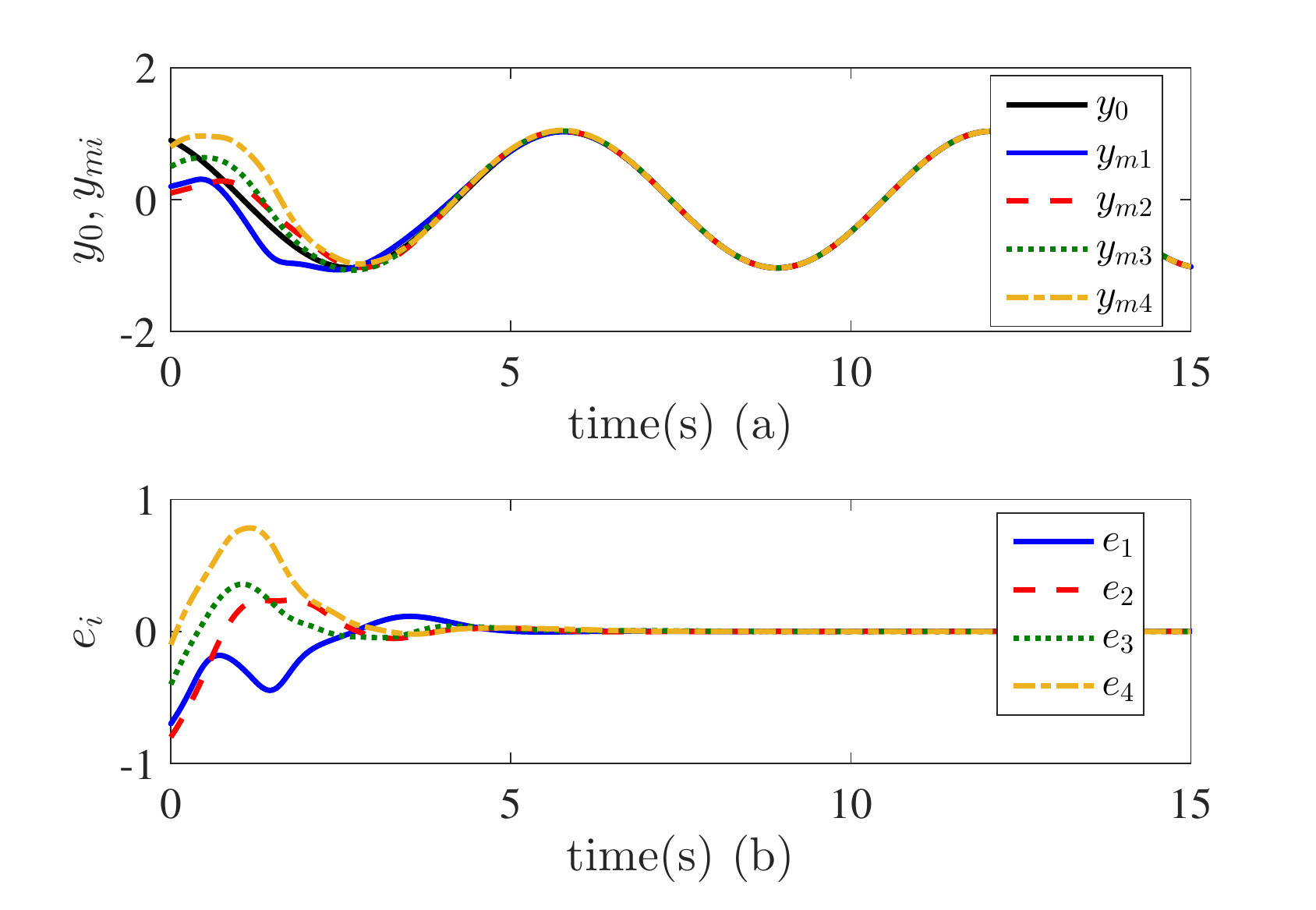}
		\par\end{centering}
	\caption{\label{fig:1}Control performance when PETM-C is not invoked. (a)
		Measurement outputs of leader and 4 followers; (b) Output regulation
		errors of 4 followers.}
\end{figure}
\begin{figure}
	\begin{centering}
		\includegraphics[scale=0.55]{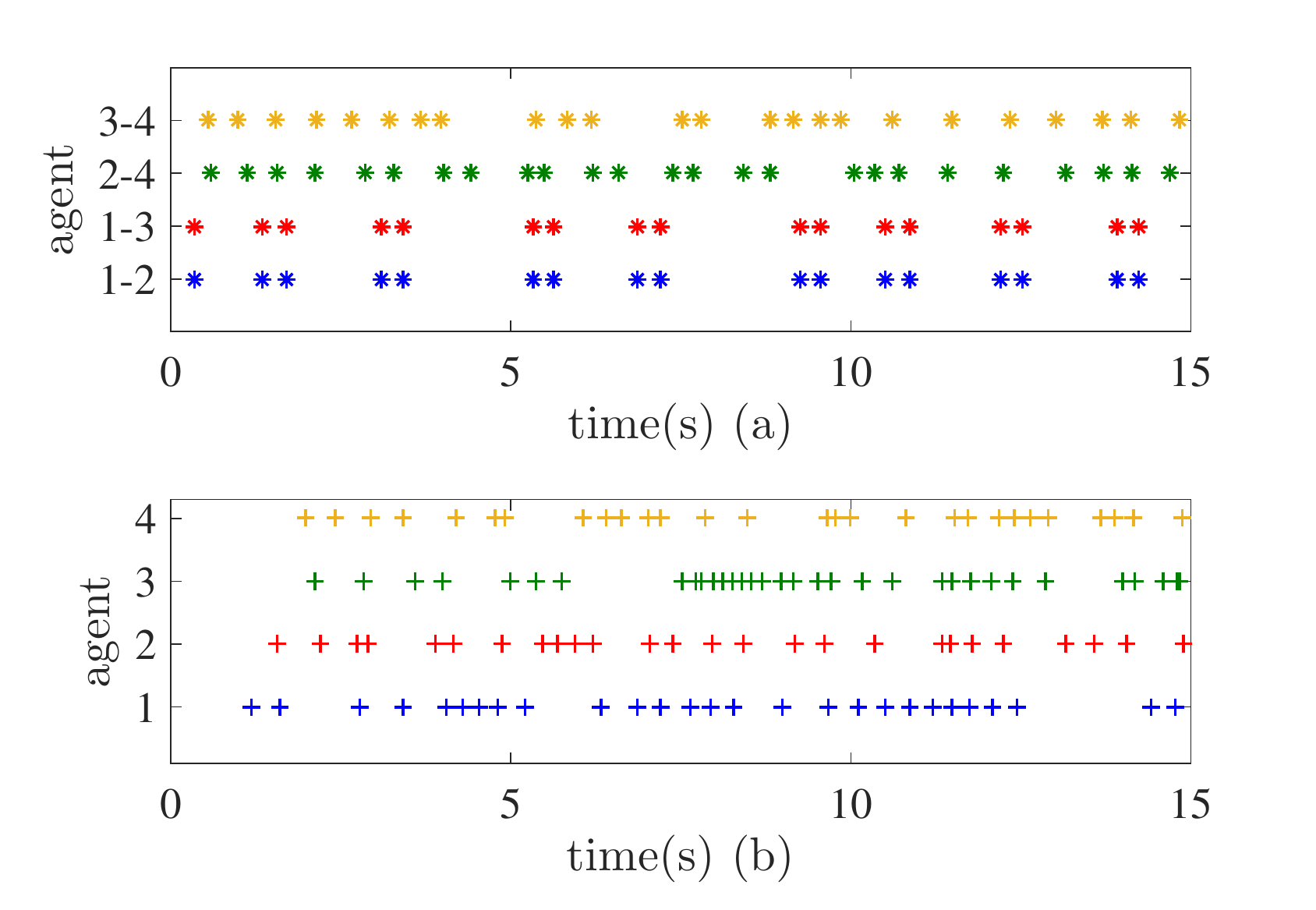}
		\par\end{centering}
	\caption{\label{fig:2}Event-triggered time instants. (a) Event-triggered time
		instants between each agent pair for PETM-A; (b) Event-triggered time
		instants of each agent for PETM-B. }
\end{figure}
\begin{figure}
	\begin{centering}
		\includegraphics[scale=0.55]{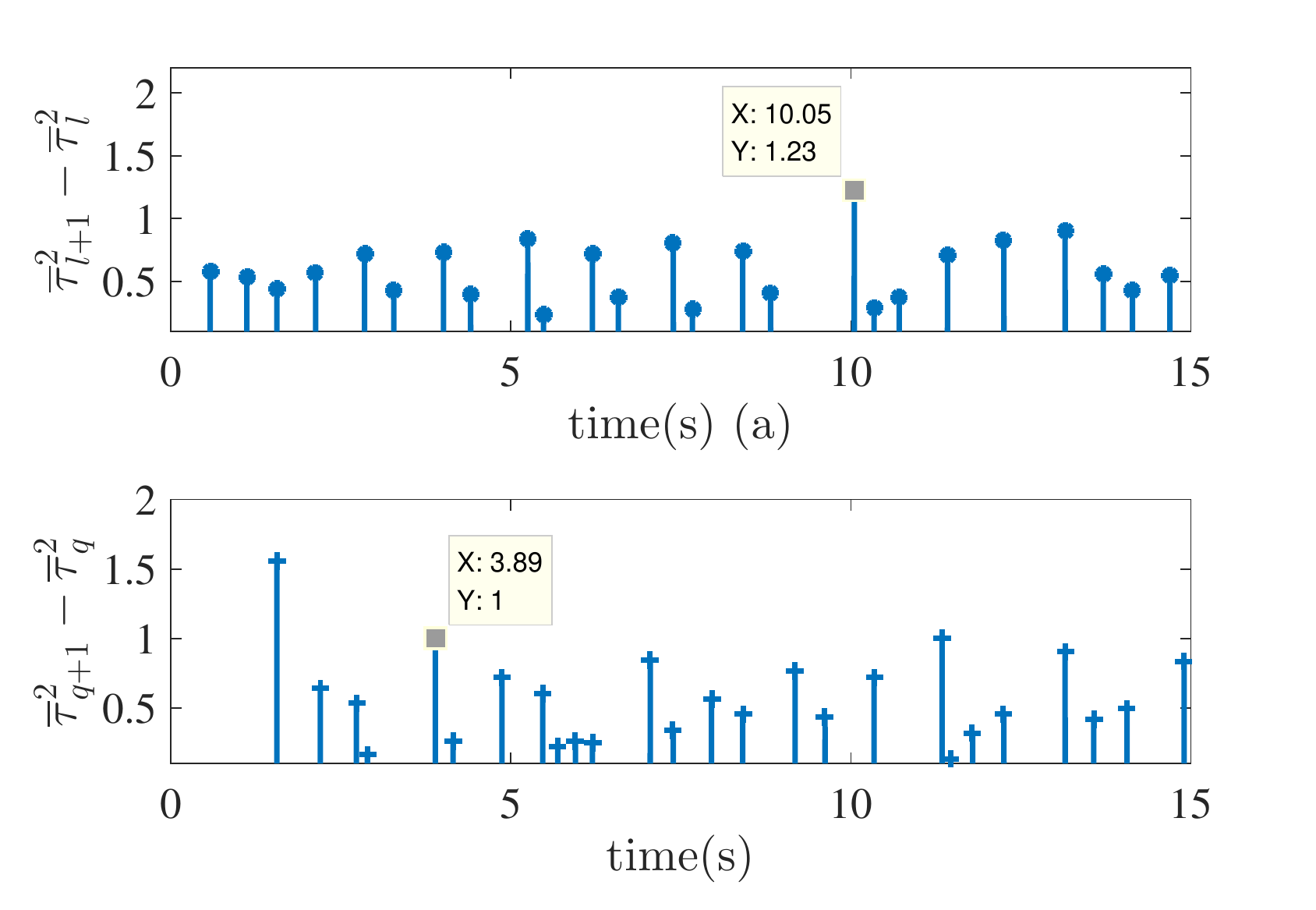}
		\par\end{centering}
	\caption{\label{fig:3}Inter-event time. (a) Inter-event time for the communication
		between agent 2 and 3; (b) Inter-event time between the sensor and
		controller in agent 2.}
\end{figure}
\begin{figure}
	\begin{centering}
		\includegraphics[scale=0.55]{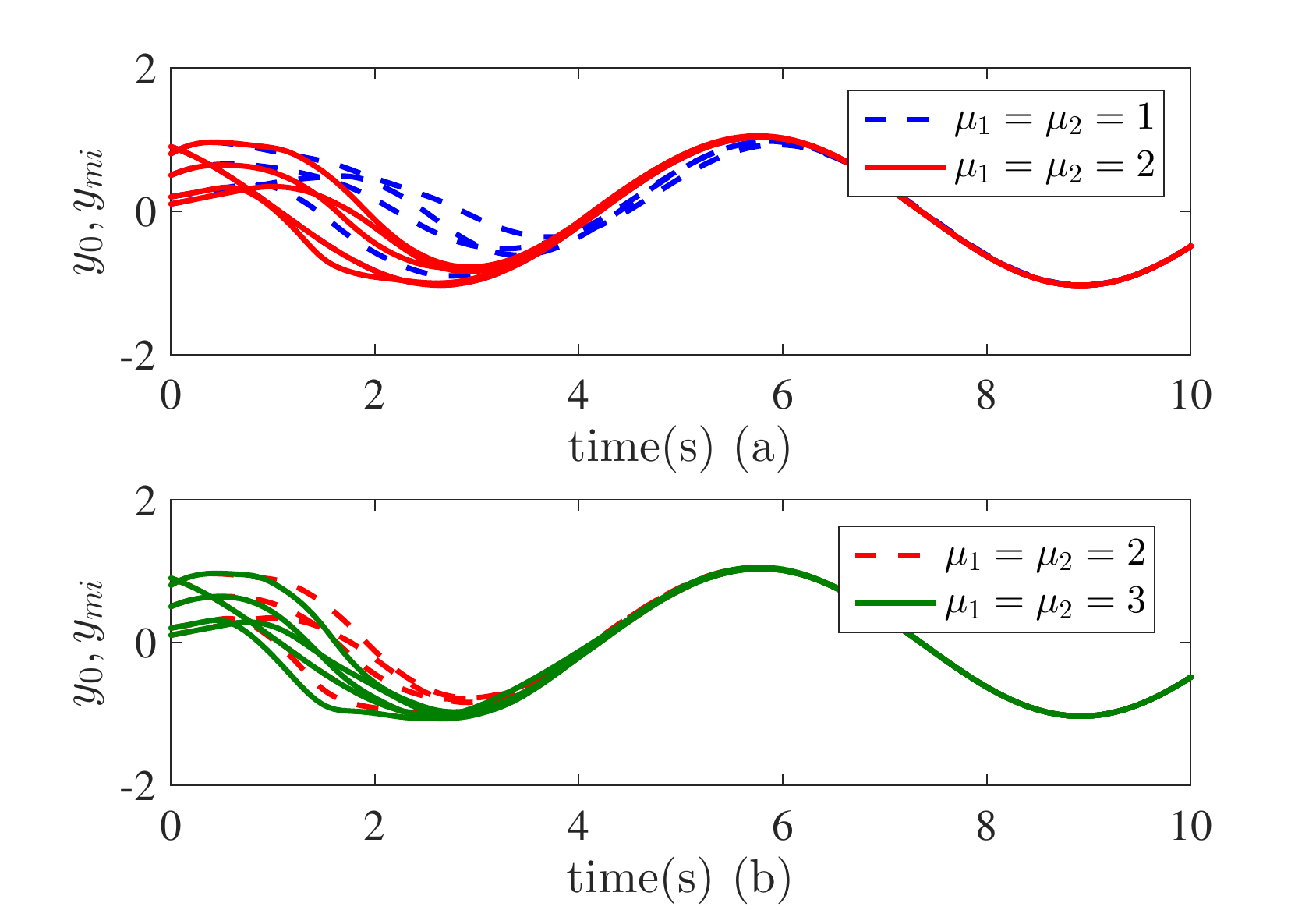}
		\par\end{centering}
	\caption{\label{fig:1-2}Control performance for different $\mu_{1},\mu_{2}$.}
\end{figure}
\begin{table}
	\centering{}\caption{\label{tab:1}Event-triggered times for PETM-A under different $\iota_{S},\iota_{v},\gamma_{S},\gamma_{v}$.}
	\begin{tabular}{ccccc}
		\hline 
		agent & 1-2 & 1-3 & 2-4 & 3-4\tabularnewline
		\hline 
		$\iota_{S}=\iota_{v}=2,\gamma_{S}=\gamma_{v}=1$ & 18 & 18 & 26 & 25\tabularnewline
		$\iota_{S}=\iota_{v}=1,\gamma_{S}=\gamma_{v}=2$ & 30 & 30 & 37 & 37\tabularnewline
		$\iota_{S}=\iota_{v}=0.5,\gamma_{S}=\gamma_{v}=2.5$ & 79 & 79 & 120 & 94\tabularnewline
		\hline 
	\end{tabular}
\end{table}
\begin{table}
	\centering{}\caption{\label{tab:2}Event-triggered times for PETM-B under different $\iota_{\psi},\gamma_{\psi}$.}
	\begin{tabular}{ccccc}
		\hline 
		agent & 1 & 2 & 3 & 4\tabularnewline
		\hline 
		$\iota_{\psi}=3,\gamma_{\psi}=0.5$ & 14 & 14 & 10 & 11\tabularnewline
		$\iota_{\psi}=2,\gamma_{\psi}=1$ & 27 & 26 & 20 & 24\tabularnewline
		$\iota_{\psi}=0.5,\gamma_{\psi}=1$ & 32 & 39 & 41 & 60\tabularnewline
		\hline 
	\end{tabular}
\end{table}
\begin{figure}
	\begin{centering}
		\includegraphics[scale=0.55]{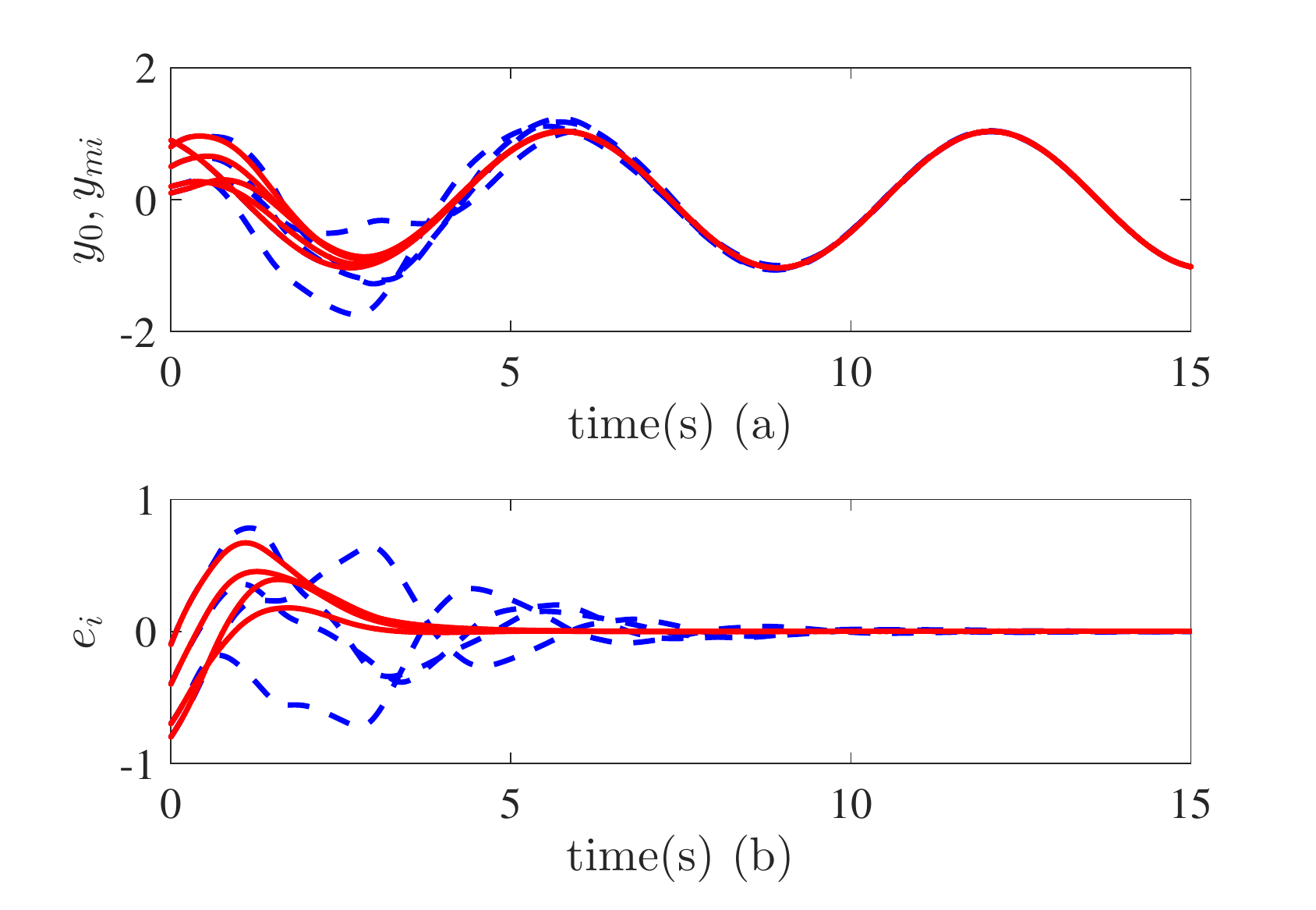}
		\par\end{centering}
	\caption{\label{fig:1-2-2}Control performance for different event-triggered
		conditions. (a) Measurement outputs of leader and 4 followers; (b)
		Output regulation errors of 4 followers. Blue dashed line: $\iota_{S}=\iota_{v}=2,\gamma_{S}=\gamma_{v}=1,\iota_{\psi}=3,\gamma_{\psi}=0.5$;
		red solid line: $\iota_{S}=\iota_{v}=0.5,\gamma_{S}=\gamma_{v}=2.5,\iota_{\psi}=0.5,\gamma_{\psi}=1$.}
\end{figure}
\begin{figure}
	\begin{centering}
		\includegraphics[scale=0.55]{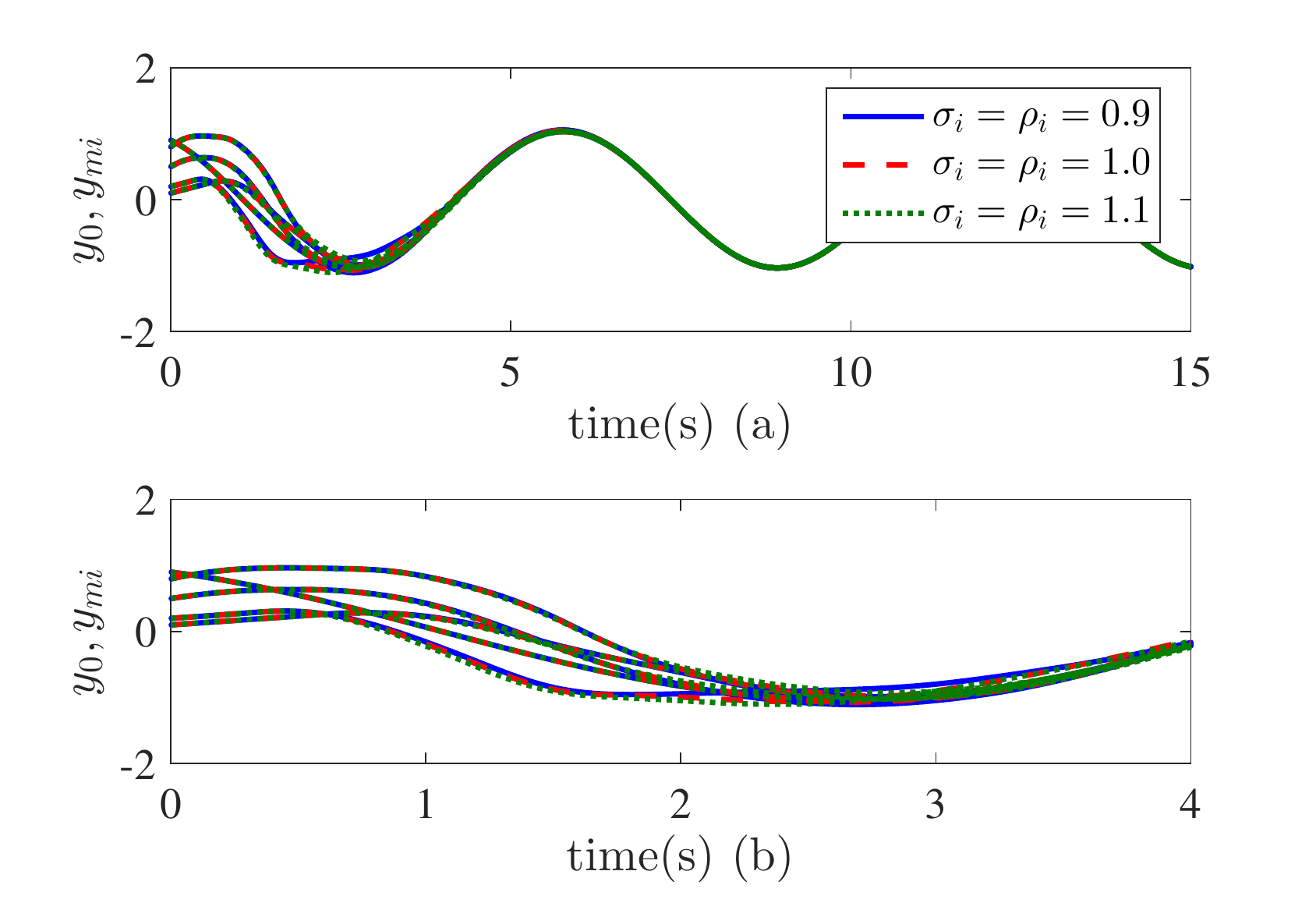}
		\par\end{centering}
	\caption{\label{fig:8}Control performance for different $\sigma_{i},\rho_{i}$.}
\end{figure}
\begin{table}
	\centering{}\caption{\label{tab:3}Event-triggered times for PETM-B under different $\sigma_{i},\rho_{i}$.}
	\begin{tabular}{ccccccc}
		\hline 
		$\sigma_{i},\rho_{i}$ & 0 & $0.8$ & $0.9$ & $1.0$ & $1.1$ & $1.2$\tabularnewline
		\hline 
		PETM-B & 971 & 845 &  769 & 34 & 765 & 843\tabularnewline
		\hline 
	\end{tabular}
\end{table}
\begin{figure}
	\begin{centering}
		\includegraphics[scale=0.55]{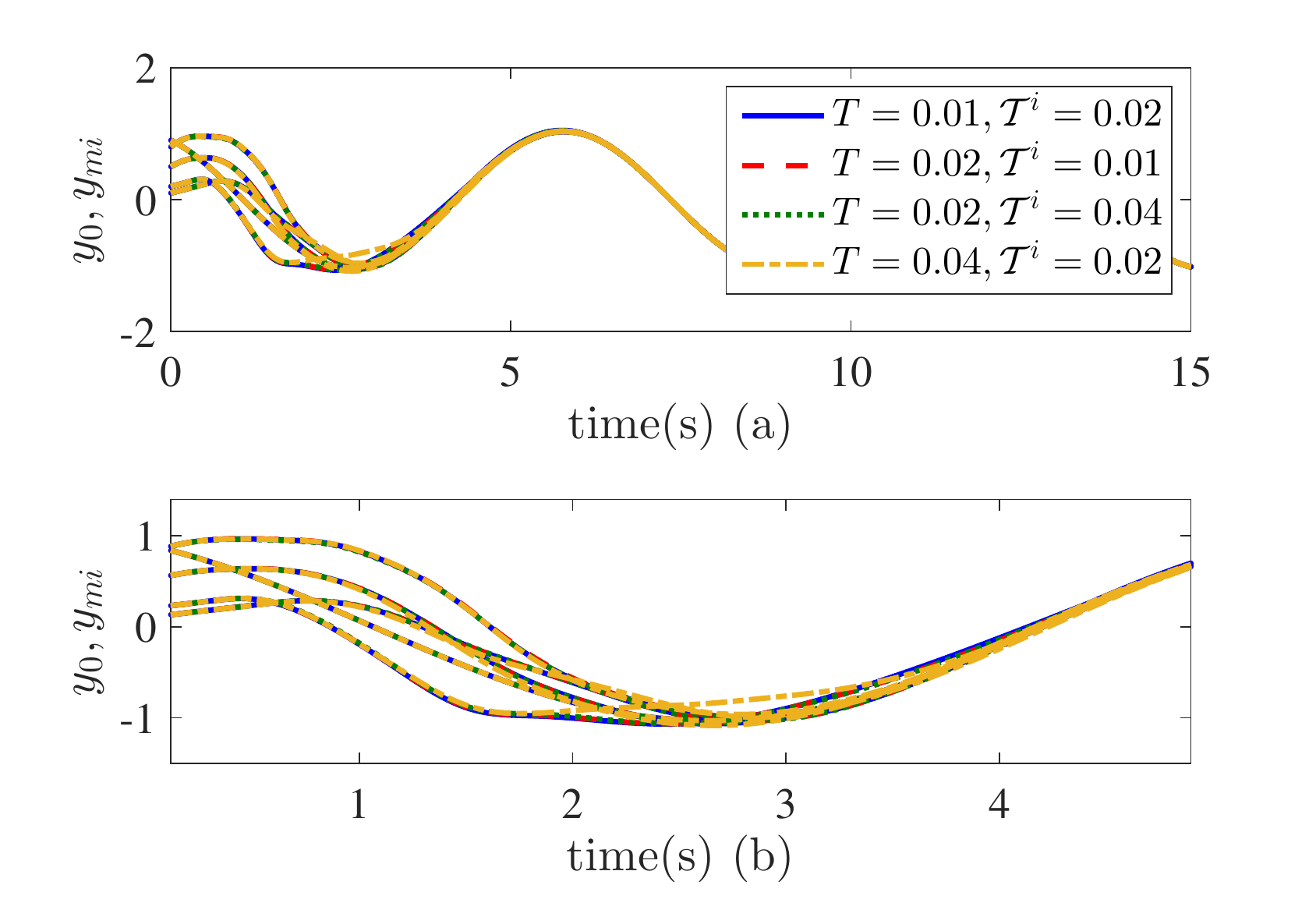}
		\par\end{centering}
	\caption{\label{fig:9}Control performance for different sampling periods $T,\mathcal{T}^{i}$.
		(a) Measurement outputs of leader and 4 followers; (b) Partial enlarged
		view of subfigure (a). }
\end{figure}
\begin{figure}
	\begin{centering}
		\includegraphics[scale=0.55]{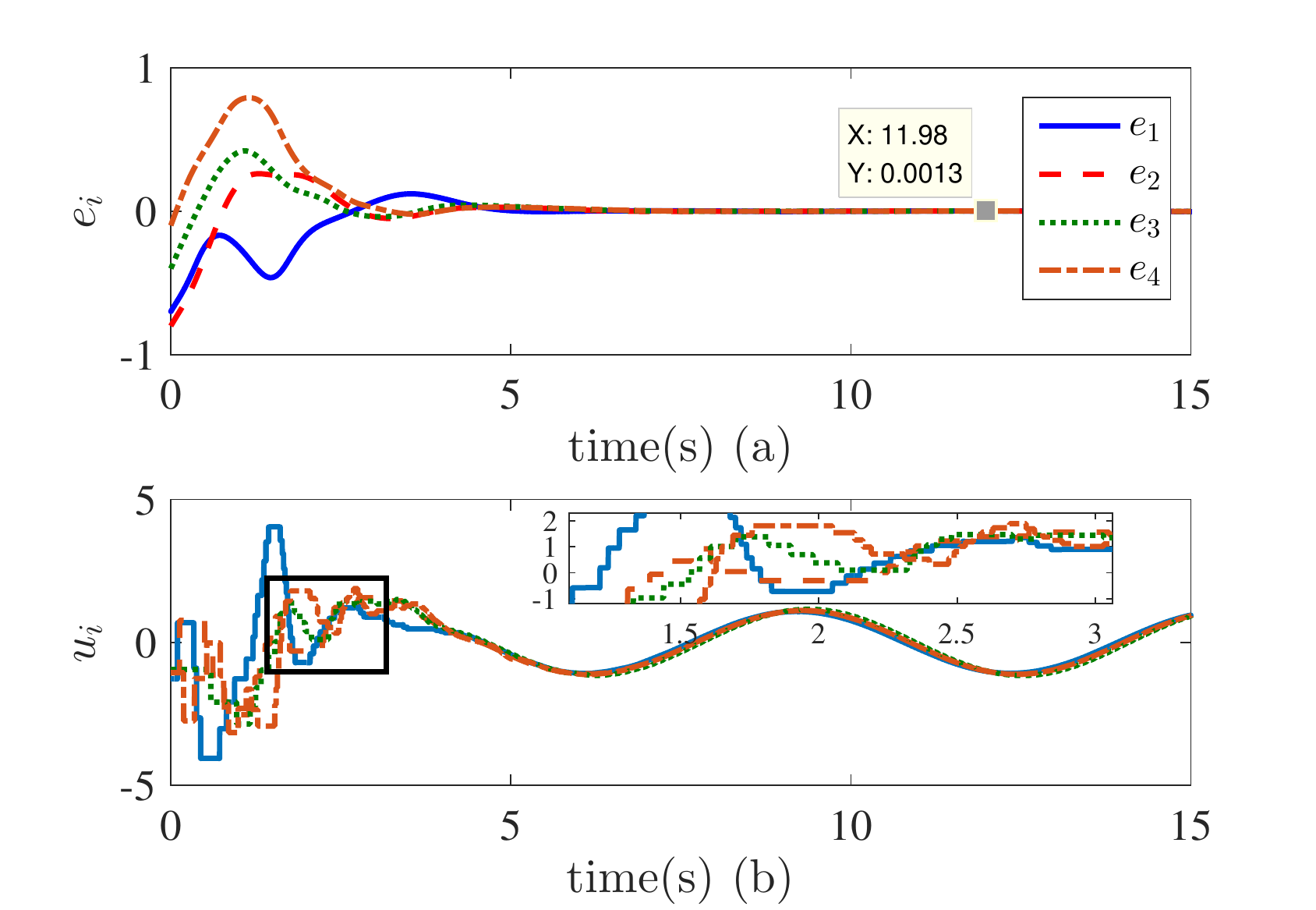}
		\par\end{centering}
	\caption{\label{fig:11}Control performance when PETM-C is invoked. (a) Output
		regulation errors of 4 followers; (b) Control efforts of 4 followers.}
\end{figure}
\begin{figure}
	\begin{centering}
		\includegraphics[scale=0.55]{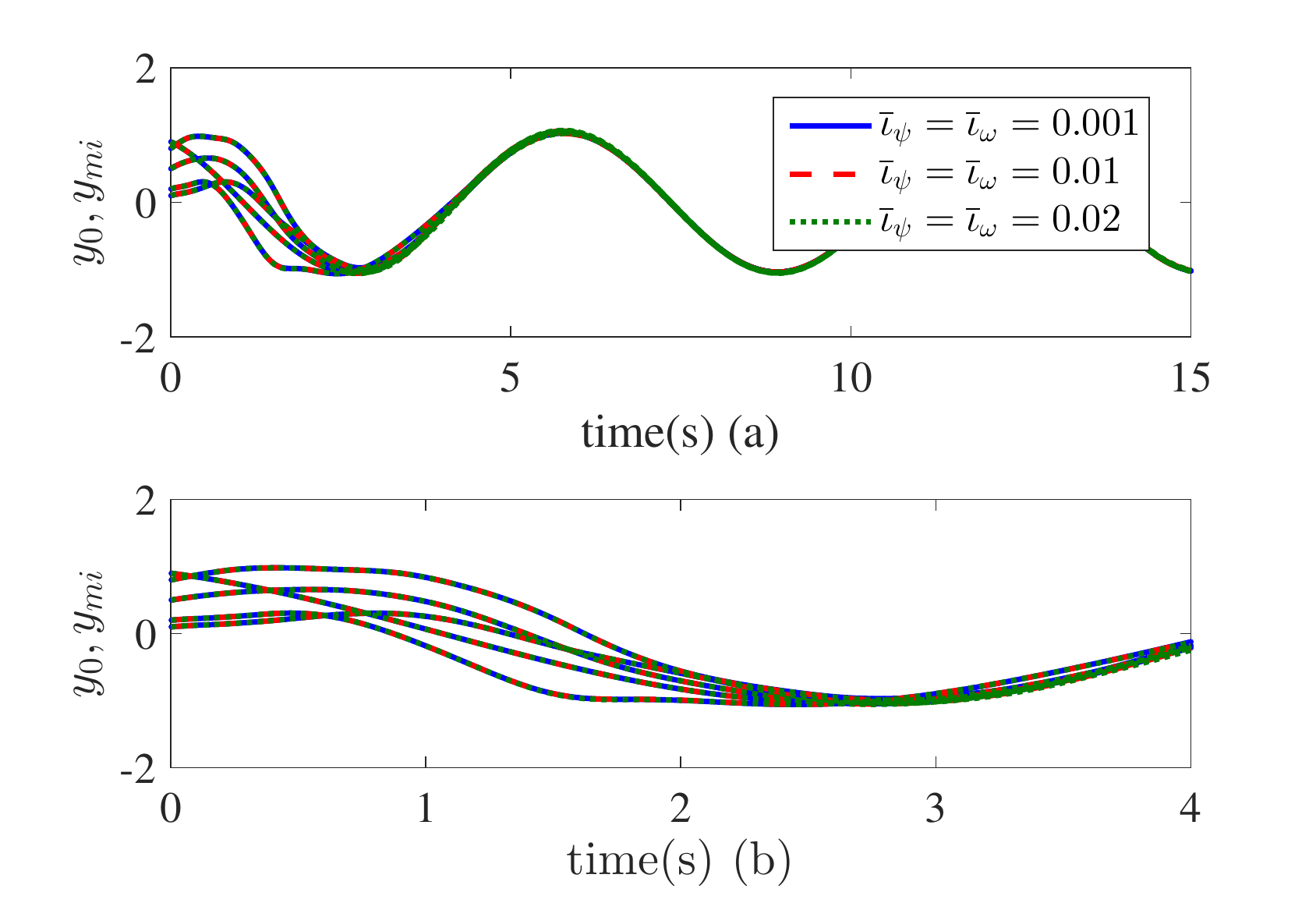}
		\par\end{centering}
	\caption{\label{fig:10}Control performance for different $\overline{\iota}_{\psi},\overline{\iota}_{\omega}$
		when PETM-C is invoked. (a) Measurement outputs of leader and 4 followers;
		(b) Partial enlarged view of subfigure (a).}
\end{figure}
\begin{table}
	\begin{centering}
		\caption{\label{tab:5}Event-triggered times for PETM-B under different $\overline{\iota}_{\psi},\overline{\iota}_{\omega}$.}
		\par\end{centering}
	\centering{}%
	\begin{tabular}{cccc}
		\hline 
		& error & PETM-B & PETM-C\tabularnewline
		\hline 
		$\overline{\iota}_{\psi}=\overline{\iota}_{\omega}=0$ & 0.0008 & 286 & 1015\tabularnewline
		$\overline{\iota}_{\psi}=\overline{\iota}_{\omega}=0.001$ & 0.0013 & 17 & 984\tabularnewline
		$\overline{\iota}_{\psi}=\overline{\iota}_{\omega}=0.01$ & 0.0070 & 13 & 597\tabularnewline
		$\overline{\iota}_{\psi}=\overline{\iota}_{\omega}=0.02$ & 0.0132 & 12 & 366\tabularnewline
		\hline 
	\end{tabular}
\end{table}
\begin{figure}
	\begin{centering}
		\includegraphics[scale=0.55]{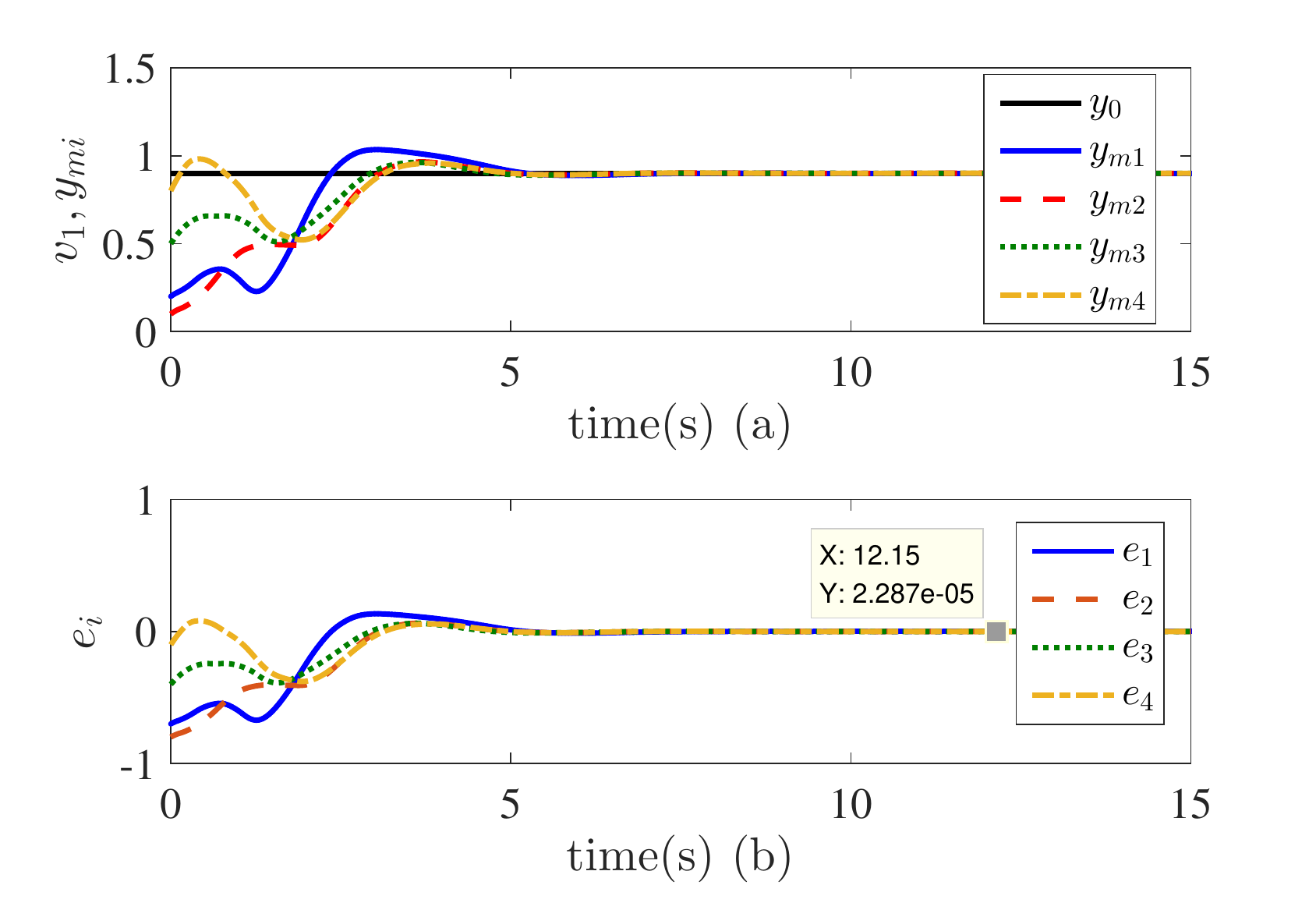}
		\par\end{centering}
	\caption{\label{fig:11-1}Control performance when PETM-C is invoked and $S=0,\overline{\iota}_{\psi}=\overline{\iota}_{\omega}=0$.
		(a) Measurement outputs of leader and 4 followers; (b) Output regulation
		errors of 4 followers.}
\end{figure}

\section{Conclusions}

In this paper, a new PET distributed observer and dynamic output feedback
controller are proposed to solve the cooperative output regulation
problem. PET mechanisms are simultaneously considered
in the communication among different agents, sensor-to-controller
and controller-to-actuator channels. Future works include considering
fully distributed PET output regulation problem under asynchronous
sampled data.


\appendix{}

\section{Proof of Theorem \ref{thm:1}}
\begin{proof}
	In the following let $c_{i}(i=1,2,...,16)$ denote some proper positive
	constants. $\varepsilon$ represents an arbitrary small positive constant.
	The proof is then divided into the following steps.
	
	\textit{Step 1}. We will find the relation between $||\hat{S}_{i}(t)-\hat{S}_{i}(t_{k})||$
	and $||\hat{S}_{i}(t)-S||(i=1,2,...,N)$. 
	
	Integrating (\ref{eq:5-1-1}) on time interval $[t_{k},t)\subseteq[t_{k},t_{k+1})$,
	we get
	\begin{equation}
	\hat{S}_{i}(t)-\hat{S}_{i}(t_{k})=(t-t_{k})\mu_{1}\sum_{j=0}^{N}a_{ij}(\hat{S}_{j}(\overline{t}_{l'}^{j})-\hat{S}_{j}(\overline{t}_{l}^{i})).\label{eq:9-1-1}
	\end{equation}
	It follows that
	\begin{align}
	\hat{S}-\hat{S}(t_{k})= & -\mu_{1}(t-t_{k})(\mathcal{H}\varotimes I)(\hat{S}(\overline{t}_{l})-\hat{S}(t_{k}))\nonumber \\
	& -\mu_{1}(t-t_{k})(\mathcal{H}\varotimes I)(\hat{S}(t_{k})-\hat{S})\nonumber \\
	& -\mu_{1}(t-t_{k})(\mathcal{H}\varotimes I)\tilde{S}\label{eq:32}
	\end{align}
	where $\hat{S}\triangleq\mathrm{col}(\hat{S}_{1},\hat{S}_{2},...,\hat{S}_{N})$,
	$\check{S}\triangleq\mathrm{col}(S,S,...,S)$, $\tilde{S}\triangleq\hat{S}-\check{S}$,
	$\hat{S}(\overline{t}_{l})\triangleq\mathrm{col}(\hat{S}_{1}(\overline{t}_{l'}^{1}),\hat{S}_{2}(\overline{t}_{l'}^{2}),...,\hat{S}_{N}(\overline{t}_{l'}^{N}))$.
	
	Next, we transform (\ref{eq:32}) into a vector form. Denote $\tilde{S}=[\tilde{s}_{1}\thinspace\tilde{s}_{2}\thinspace...\thinspace\tilde{s}_{n_{v}}]$,
	$\hat{S}(t_{k})-\hat{S}=[\Delta\hat{s}_{1}(t_{k})\thinspace\Delta\hat{s}_{2}(t_{k})\thinspace...\thinspace\Delta\hat{s}_{n_{v}}(t_{k})]$
	and $\hat{S}(\overline{t}_{l})-\hat{S}(t_{k})=[\Delta\hat{s}_{1}(\overline{t}_{l})\thinspace\Delta\hat{s}_{2}(\overline{t}_{l})\thinspace...\thinspace\Delta\hat{s}_{n_{v}}(\overline{t}_{l})]$
	where $\tilde{s}_{m},\Delta\hat{s}_{m}(t_{k}),\Delta\hat{s}_{m}(\overline{t}_{l})(m=1,2,...,n_{v})\in\mathbb{R}^{n_{v}N\times1}$
	are all column vectors. Then we have
	\begin{align*}
	\Delta\hat{s}_{m}(t_{k}) & =-\mu_{1}(t-t_{k})(\mathcal{H}\varotimes I)(\Delta\hat{s}_{m}(\overline{t}_{l})+\Delta\hat{s}_{m}(t_{k})-\tilde{s}_{m}).
	\end{align*}
	
	It follows that
	\begin{align*}
	||\Delta\hat{s}_{m}(t_{k})|| & \leq\delta_{1}(T)(||\Delta\hat{s}_{m}(\overline{t}_{l})||+||\Delta\hat{s}_{m}(t_{k})||+||\tilde{s}_{m}||)
	\end{align*}
	where $\delta_{1}(T)$ is a $\mathcal{K}$-class function such that
	\begin{equation}
	\delta_{1}(T)=\mu_{1}T||\mathcal{H}\varotimes I||.\label{eq:17}
	\end{equation}
	
	Therefore, there exists a small enough $T$ such that $\delta_{1}(T)<1$.
	Then we obtain
	\begin{align*}
	||\Delta\hat{s}_{m}(t_{k})|| & \leq\delta_{S}(T)||\tilde{s}_{m}||+\delta_{S}(T)||\Delta\hat{s}_{m}(\overline{t}_{l})||
	\end{align*}
	where 
	\begin{equation}
	\delta_{S}(T)=\frac{\delta_{1}(T)}{1-\delta_{1}(T)}.\label{eq:18}
	\end{equation}
	Finally, according to the event-triggered condition (\ref{eq:26}),
	we know
	\begin{align}
	||\Delta\hat{s}_{m}(t_{k})|| & \leq\delta_{S}(T)||\tilde{s}_{m}||+c_{1}\mathrm{e}^{-\gamma_{S}t_{k}}.\label{eq:31-1}
	\end{align}
	
	\textit{Step 2}. We will show $\tilde{S}_{i}$ converges to zero exponentially. 
	
	Note that (\ref{eq:5-1-1}) can be transformed into
	\begin{align*}
	\dot{\tilde{s}}_{m} & =-\mu_{1}(\mathcal{H}\varotimes I)(\tilde{s}_{m}+\Delta\hat{s}_{m}(\overline{t}_{l})+\Delta\hat{s}_{m}(t_{k}))
	\end{align*}
	for $m=1,2,...,n_{v}$.
	
	Take the following Lyapunov function 
	\[
	V_{m}=\frac{1}{2}\tilde{s}_{m}^{T}(P\varotimes I)\tilde{s}_{m}
	\]
	where $P$ is a positive matrix such that $P\mathcal{H}+\mathcal{H}^{T}P=2I$.
	Note that $P$ exists due to $-\mathcal{H}$ is Hurwitz from Lemma
	\ref{lem:Under-Assumption-,}.
	
	Then, we get
	\begin{align*}
	\dot{V}_{m}= & -\mu_{1}||\tilde{s}_{m}||^{2}-\mu_{1}\tilde{s}_{m}^{T}(P\mathcal{H}\varotimes I)(\Delta\hat{s}_{m}(\overline{t}_{l})+\Delta\hat{s}_{m}(t_{k})).
	\end{align*}
	
	Using Young's inequality and (\ref{eq:26}), (\ref{eq:31-1}), we
	obtain
	\begin{align}
	\dot{V}_{m}\leq & -\left(\mu_{1}-\mu_{1}||P\mathcal{H}\varotimes I||\delta_{S}(T)-\varepsilon\right)||\tilde{s}_{m}||^{2}\nonumber \\
	& +c_{2}\mathrm{e}^{-2\gamma_{S}t_{k}}.\label{eq:34}
	\end{align}
	
	Therefore, when $\delta_{S}(T)$ satisfies 
	\begin{equation}
	\delta_{S}(T)<\frac{1}{||P\mathcal{H}\varotimes I||},\label{eq:21-2}
	\end{equation}
	we have $\mu_{1}-\mu_{1}||P\mathcal{H}\varotimes I||\delta_{S}(T)-\varepsilon>0$.
	Then solving equation (\ref{eq:34}), we can obtain that $V_{m},\tilde{s}_{m},\tilde{S}$
	will all converge to zero exponentially. 
	
	Combining (\ref{eq:17}), (\ref{eq:18}) and (\ref{eq:21-2}), note
	$||P\mathcal{H}\varotimes I||=||P\mathcal{H}||$, $||\mathcal{H}\varotimes I||=||\mathcal{H}||$,
	we know $T$ should satisfy
	\begin{equation}
	T<\frac{1}{\mu_{1}\left(||P\mathcal{H}||+1\right)||\mathcal{H}||}.\label{eq:22-1}
	\end{equation}
	
	\textit{Step 3}. We will show $\hat{v}$ does not exhibit finite time
	escape, $i.e.$, $\hat{v}$ is bounded on a finite time interval $[0,t)$. 
	
	Let $\overline{v}_{i}^{*}(t,t_{k},\overline{t}_{l}^{i})=\mathrm{e}^{\hat{S}_{i}(\overline{t}_{l}^{i})(t-t_{k})}\hat{v}_{i}(t_{k})(i=1,2,...,N)$,
	then (\ref{eq:6-1}) can be written as:
	\begin{align}
	\dot{\hat{v}}_{i}= & \hat{S}_{i}(\overline{t}_{l}^{i})\hat{v}_{i}(t)+\mu_{2}\sum_{j=0}^{N}a_{ij}(\overline{v}_{j}^{*}(t,t_{k},\overline{t}_{l'}^{j})-\overline{v}_{i}^{*}(t,t_{k},\overline{t}_{l}^{i}))\nonumber \\
	& +\mu_{2}\sum_{j=0}^{N}a_{ij}(\Delta\upsilon_{j}(t,t_{k},\overline{t}_{l'}^{j})-\Delta\upsilon_{i}(t,t_{k},\overline{t}_{l}^{i}))\label{eq:32-1}
	\end{align}
	where $\Delta\upsilon_{i}(t,t_{k},\overline{t}_{l}^{i})=\overline{v}_{i}(t,\overline{t}_{l}^{i})-\overline{v}_{i}^{*}(t,t_{k},\overline{t}_{l}^{i}).$
	
	Solving (\ref{eq:32-1}) on time interval $[t_{k},t_{k+1})$, we have
	\begin{align}
	\hat{v}_{i}(t)= & \overline{v}_{i}^{*}(t,t_{k},\overline{t}_{l}^{i})-\mu_{2}(t-t_{k})\sum_{j=0}^{N}a_{ij}\overline{v}_{i}^{*}(t,t_{k},\overline{t}_{l}^{i})\nonumber \\
	& +\sum_{j=0}^{N}\mu_{2}\int_{t_{k}}^{t}a_{ij}\mathrm{e}^{-\hat{S}_{i}(\overline{t}_{l}^{i})(\tau-t)}\overline{v}_{j}^{*}(\tau,t_{k},\overline{t}_{l'}^{j})d\tau\label{eq:22}\\
	& +\Delta\overline{\upsilon}_{i}\nonumber 
	\end{align}
	where 
	\begin{align*}
	\Delta\overline{\upsilon}_{i}= & \int_{t_{k}}^{t}\mathrm{e}^{-\hat{S}_{i}(\overline{t}_{l}^{i})(\tau-t)}\left(\mu_{2}\sum_{j=0}^{N}a_{ij}(\Delta\upsilon_{j}(\tau)-\Delta\upsilon_{i}(\tau))\right)d\tau.
	\end{align*}
	
	For the terms in (\ref{eq:22}), we have
	\begin{align}
	& \mu_{2}\int_{t_{k}}^{t}a_{ij}\mathrm{e}^{-\hat{S}_{i}(\overline{t}_{l}^{i})(\tau-t)}\overline{v}_{j}^{*}(\tau,t_{k},\overline{t}_{l'}^{j})d\tau\nonumber \\
	= & \mu_{2}(t-t_{k})a_{ij}\overline{v}_{j}^{*}(t,t_{k},\overline{t}_{l'}^{j})+\mu_{2}a_{ij}E_{ij}(t)\overline{v}_{j}^{*}(t,t_{k},\overline{t}_{l'}^{j})\label{eq:11-1-1}
	\end{align}
	where $E_{ij}(t)$ is a time-varying matrix such that
	\begin{equation}
	E_{ij}(t)\triangleq\int_{t_{k}}^{t}(\mathrm{e}^{-\hat{S}_{i}(\overline{t}_{l}^{i})(\tau-t)}-\mathrm{e}^{-\hat{S}_{j}(\overline{t}_{l'}^{j})(\tau-t)})\mathrm{e}^{\hat{S}_{j}(\overline{t}_{l'}^{j})(\tau-t)}d\tau.\label{eq:12-1}
	\end{equation}
	
	Substituting (\ref{eq:11-1-1}) into (\ref{eq:22}), we have
	\begin{align*}
	\hat{v}_{i}(t)= & \overline{v}_{i}^{*}(t,t_{k},\overline{t}_{l}^{i})\\
	& +\mu_{2}(t-t_{k})\sum_{j=0}^{N}a_{ij}(\overline{v}_{j}^{*}(t,t_{k},\overline{t}_{l'}^{j})-\overline{v}_{i}^{*}(t,t_{k},\overline{t}_{l}^{i}))\\
	& +\mu_{2}\sum_{j=0}^{N}a_{ij}E_{ij}(t)\overline{v}_{j}^{*}(t,t_{k},\overline{t}_{l'}^{j})\\
	& +\Delta\overline{\upsilon}_{i}.
	\end{align*}
	For $i=1,2,...,N$, we have
	\begin{align}
	& \hat{v}-\overline{v}^{*}(t,t_{k},\overline{t}_{l})\nonumber \\
	= & \mu_{2}(t-t_{k})(\mathcal{H}\varotimes I)(\overline{v}^{*}(t,t_{k},\overline{t}_{l})-\check{v})+\mu_{2}\epsilon+\Delta\overline{\upsilon}\label{eq:38}
	\end{align}
	where $\hat{v}\triangleq\mathrm{col}(\hat{v}_{1},\hat{v}_{2},...,\hat{v}_{N})$,
	$\check{v}\triangleq\mathrm{col}(v,v,...,v)$, 
	\[
	\overline{v}^{*}(t,t_{k},\overline{t}_{l})\triangleq\mathrm{col}(\overline{v}_{1}^{*}(t,t_{k},\overline{t}_{l'}^{1}),\overline{v}_{2}^{*}(t,t_{k},\overline{t}_{l'}^{2}),...,\overline{v}_{N}^{*}(t,t_{k},\overline{t}_{l'}^{N})),
	\]
	\[
	\epsilon\triangleq\mathrm{col}\left(\epsilon_{1},\epsilon_{2}...,\epsilon_{N}\right),
	\]
	\[
	\Delta\overline{\upsilon}\triangleq\mathrm{col}\left(\Delta\overline{\upsilon}_{1},\Delta\overline{\upsilon}_{2},...,\Delta\overline{\upsilon}_{N}\right)
	\]
	with $\epsilon_{i}=\sum_{j=0}^{N}a_{ij}E_{ij}(t)\overline{v}_{j}^{*}(t,t_{k},\overline{t}_{l'}^{j})(i=1,2,...,N)$.
	
	Note that for the elements in $\epsilon$ and using Lemma \ref{lem:(Matrix-norm-inequalities)},
	\begin{align}
	||a_{ij}E_{ij}(t)\overline{v}_{j}^{*}(t,t_{k},\overline{t}_{l'}^{j})||\leq a_{ij}||\overline{v}_{j}^{*}||\cdot||E_{ij}(t)||.\label{eq:16-2}
	\end{align}
	
	From (\ref{eq:12-1}) and Appendix B, we know $||E_{ij}(t)||$ converges
	to zero exponentially. Hence, using (\ref{eq:16-2}) for $\epsilon$,
	there exist positive constants $c_{3},c_{4}$ such that
	\begin{align}
	||\epsilon||\leq & c_{3}\mathrm{e}^{-c_{4}t}||\overline{v}^{*}(t,t_{k},\overline{t}_{l})||.\label{eq:21}
	\end{align}
	Meanwhile, according to Appendix B, we know there exist positive constants
	$c_{5},c_{6}$ such that
	\begin{align}
	||\Delta\overline{\upsilon}||\leq & c_{5}\mathrm{e}^{-c_{6}t}.\label{eq:21-1}
	\end{align}
	
	Then using (\ref{eq:21}) and (\ref{eq:21-1}) for (\ref{eq:38}),
	we obtain:
	\begin{align}
	||\hat{v}-\overline{v}^{*}(t,t_{k},\overline{t}_{l})||\leq & (\delta_{2}(T)+\mu_{2}c_{3}\mathrm{e}^{-c_{4}t})||\overline{v}^{*}(t,t_{k},\overline{t}_{l})||\nonumber \\
	& +\delta_{2}(T)||\check{v}||+c_{5}\mathrm{e}^{-c_{6}t}\label{eq:40-1}
	\end{align}
	where $\delta_{2}(T)$ is a $\mathcal{K}$-class function such that
	\begin{equation}
	\delta_{2}(T)=\mu_{2}T||\mathcal{H}\varotimes I||.\label{eq:31-2}
	\end{equation}
	It follows that
	\begin{align*}
	||\hat{v}|| & \leq c_{7}||\overline{v}^{*}(t,t_{k},\overline{t}_{l})||+c_{8}\mathrm{e}^{-c_{9}t}+c_{10}\\
	& \mathrm{\leq\mathit{c}_{7}e}^{||\hat{S}_{i}(\overline{t}_{l}^{i})(t-t_{k})||}\hat{v}_{i}(t_{k})+c_{8}\mathrm{e}^{-c_{9}t}+c_{10}
	\end{align*}
	It can be seen that $\hat{v}$ will not exhibit finite time escape.
	
	\textit{Step 4}. We will find the relation between $||\hat{v}-\overline{v}^{*}(t,t_{k},\overline{t}_{l})||$
	and $||\tilde{v}||\triangleq||\hat{v}-\check{v}||$. Note that (\ref{eq:38})
	can be expressed as:
	\begin{align*}
	\hat{v}-\overline{v}^{*}(t,t_{k},\overline{t}_{l})= & \mu_{2}(t-t_{k})(\mathcal{H}\varotimes I)(\overline{v}^{*}(t,t_{k},\overline{t}_{l})-\hat{v})\\
	& +\mu_{2}(t-t_{k})(\mathcal{H}\varotimes I)\tilde{v}+\mu_{2}\epsilon+\Delta\overline{\upsilon}
	\end{align*}
	where $\tilde{v}\triangleq\hat{v}-\check{v}$.
	
	Using (\ref{eq:21}) and (\ref{eq:21-1}), we have
	\begin{align}
	& ||\hat{v}-\overline{v}^{*}(t,t_{k},\overline{t}_{l})||\nonumber \\
	\leq & (\delta_{2}(T)+\mu_{2}c_{3}\mathrm{e}^{-\mu_{1}c_{4}t})||\overline{v}^{*}(t,t_{k},\overline{t}_{l})-\hat{v}||\nonumber \\
	& +(\delta_{2}(T)+\mu_{2}c_{3}\mathrm{e}^{-c_{4}t})||\tilde{v}||\nonumber \\
	& +\mu_{2}c_{3}\mathrm{e}^{-c_{4}t}||\check{v}||+c_{5}\mathrm{e}^{-c_{6}t}.\label{eq:27}
	\end{align}
	
	Next, define a set $\Omega_{L}$ for the sampling instant $t_{k}$
	which will be used in the following analysis.
	
	By (\ref{eq:40-1}), we know if $T$ is small enough, $\delta_{2}(T)<1$
	is true. Thus, it can be concluded that there exists a finite integer
	$k_{1}$ such that for $\forall t\in[t_{k},t_{k+1})$ with $k\geq k_{1}$,
	we have $\delta_{2}(T)+\mu_{2}c_{3}\mathrm{e}^{-c_{4}t}<1$ and $\mu_{2}c_{3}\mathrm{e}^{-c_{4}t}<\varepsilon$
	where $\varepsilon$ can be an arbitrary small constant. 
	
	Also note that according to Step 2, we know $\hat{S}$ converges to
	$S$ exponentially. Therefore, there exist positive constants $c_{11},c_{12}$
	such that $||S_{d}||\leq c_{11}\mathrm{e}^{-c_{12}t}$ where $S_{d}(\overline{t}_{l})=\mathrm{diag}(\hat{S}_{1}(\overline{t}_{l'}^{1})-S,\hat{S}_{1}(\overline{t}_{l'}^{2})-S,...,\hat{S}_{N}(\overline{t}_{l'}^{N})-S)$.
	This means that there exists a finite integer $k_{2}$ such that for
	$\forall t\in[t_{k},t_{k+1})$ with $k\geq k_{2}$, we have $||S_{d}||\leq c_{11}\mathrm{e}^{-c_{12}t}<\varepsilon$
	where $\varepsilon$ is an arbitrary small constant. 
	
	Then define the set $\Omega_{L}$ as 
	\[
	\Omega_{L}\triangleq\{t_{k}|k\in\mathbb{N},k\geq\mathrm{max}(k_{1},k_{2})\}.
	\]
	Therefore, for $\forall t\in[t_{k},t_{k+1})$ with $t_{k},t_{k+1}\in\Omega_{L}$,
	(\ref{eq:27}) can be written as:
	\begin{align}
	||\hat{v}-\overline{v}^{*}(t,t_{k},\overline{t}_{l})||\leq\delta_{v}(T)||\tilde{v}||+c_{13}\mathrm{e}^{-c_{14}t}\label{eq:26-2}
	\end{align}
	where 
	\begin{equation}
	\delta_{v}(T)=\frac{(\delta_{2}(T)+\mu_{2}c_{3}\mathrm{e}^{-\mu_{1}c_{4}t})}{1-(\delta_{2}(T)+\mu_{2}c_{3}\mathrm{e}^{-\mu_{1}c_{4}t})}.\label{eq:34-1}
	\end{equation}
	
	\textit{Step 5}. We will show $\tilde{v}_{i}$ converges to zero exponentially. 
	
	We consider our analysis for $t\in[t_{k},t_{k+1})$ with $t_{k},t_{k+1}\in\Omega_{L}$.
	Note that from Step 3), we know $\hat{v}_{i}$ will not exhibit finite
	time escape. Hence, for any finite $k$, $\hat{v}_{i}$ is bounded
	on $[t_{k},t_{k+1})$ .
	
	By (\ref{eq:6-1}), we have
	\begin{align}
	\dot{\tilde{v}}= & (I\varotimes S-\mu_{2}\mathcal{H}\varotimes I)\tilde{v}+S_{d}\tilde{v}+S_{d}\check{v}\nonumber \\
	& -\mu_{2}(\mathcal{H}\varotimes I)(\overline{v}^{*}(t,t_{k},\overline{t}_{l})-\hat{v}).\label{eq:17-1}
	\end{align}
	
	Take the following Lyapunov function 
	\[
	V_{v}=\frac{1}{2}\tilde{v}^{T}(P\varotimes I)\tilde{v}
	\]
	where $P$ is a positive matrix such that $P\mathcal{H}+\mathcal{H}^{T}P=2I$.
	
	The derivative of $V_{v}$ is computed as:
	\begin{align*}
	\dot{V}_{v}= & \tilde{v}^{T}(P\varotimes I)(I\varotimes S-\mu_{2}\mathcal{H}\varotimes I)\tilde{v}\\
	& +\tilde{v}^{T}(P\varotimes I)S_{d}\tilde{v}+\tilde{v}^{T}(D\varotimes I)S_{d}\check{v}\\
	& -\tilde{v}^{T}(P\varotimes I)\mu_{2}(\mathcal{H}\varotimes I)(\overline{v}^{*}(t,t_{k},\overline{t}_{l})-\hat{v}).
	\end{align*}
	Using (\ref{eq:26-2}), we have
	\begin{align*}
	\dot{V}_{v}\leq & \tilde{v}^{T}(P\varotimes S)\tilde{v}-\mu_{2}\lambda||\tilde{v}||^{2}\\
	& +||\tilde{v}||^{2}||P\varotimes I||\cdot||S_{d}||+||\tilde{v}||\cdot||(P\varotimes I)S_{d}\check{v}||\\
	& +\mu_{2}||P\mathcal{H}\varotimes I||\cdot||\tilde{v}||\left(\delta_{v}(T)||\tilde{v}||+c_{13}\mathrm{e}^{-c_{14}t}\right).
	\end{align*}
	Note that due to $S$ is a skew-symmetric matrix, $(P\varotimes S)^{\rm T}=P\varotimes S^{\rm T}=-(P\varotimes S)$.
	This means that $P\varotimes S$ is also a skew-symmetric matrix. Hence,
	$\tilde{v}^{\rm T}(P\varotimes S)\tilde{v}=0$. Also note that $||S_{d}||\leq c_{11}\mathrm{e}^{-c_{12}t}<\varepsilon$
	for $t\in[t_{k},t_{k+1})$ with $t_{k},t_{k+1}\in\Omega_{L}$. Thus,
	by Young's inequality, we obtain:
	\begin{align*}
	\dot{V}_{v}\leq & -\left(\mu_{2}-||P\varotimes I||\varepsilon-\mu_{2}||P\mathcal{H}\varotimes I||\delta_{v}(T)-\varepsilon\right)||\tilde{v}||^{2}\\
	& +c_{15}\mathrm{e}^{-2c_{12}t}+c_{16}\mathrm{e}^{-2c_{14}t}.
	\end{align*}
	
	Hence, when 
	\begin{equation}
	\delta_{v}(T)<\mu_{2}\label{eq:36}
	\end{equation}
	we can obtain $\mu_{2}-||D\varotimes I||\varepsilon-\mu_{2}||D\mathcal{H}\varotimes I||\delta_{v}(T)-\varepsilon<0$.
	Then by solving the above equation, we can show $V_{v},\tilde{v}$
	converge to zero exponentially. The proof is completed.
	
	For the selection of $T$. Based on (\ref{eq:31-2}), (\ref{eq:34-1}),
	(\ref{eq:36}), (\ref{eq:22-1}) and note $\mu_{2}c_{3}\mathrm{e}^{-c_{4}t}<\varepsilon$,
	$c_{11}\mathrm{e}^{-c_{12}t}<\varepsilon$, we know $T$ should satisfy
	(\ref{eq:10-1}). 
\end{proof}

\section{Proposition \ref{prop:-converge-to} }
\begin{prop}
	\label{prop:-converge-to} $||E_{ij}(t)||_{F}$ in (\ref{eq:12-1})
	and $||\Delta\overline{\upsilon}||$ in (\ref{eq:38}) both converge
	to zero exponentially.
\end{prop}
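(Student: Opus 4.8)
The plan is to treat both quantities as integrals over the interval $[t_k,t]\subseteq[t_k,t_{k+1})$, whose length is bounded by the sampling period $T$, and to extract from each integrand an exponentially decaying factor multiplied by a uniformly bounded factor. Since integrating a bounded-times-exponential integrand over an interval of length at most $T$ preserves exponential decay, both norms then decay exponentially. The two ingredients I would use throughout are, first, that Step~2 of the proof of Theorem~\ref{thm:1} already establishes $\tilde S_i=\hat S_i-S\to0$ exponentially (so the held matrices $\hat S_i(\overline t_l^i)$ are uniformly bounded, say by $M$, and $\hat S_i(\overline t_l^i)-S$ decays exponentially), and second, the matrix-norm inequalities of Lemma~\ref{lem:(Matrix-norm-inequalities)}.

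For $E_{ij}(t)$ in (\ref{eq:12-1}), I would write the exponent difference as $-\hat S_i(\overline t_l^i)(\tau-t)=-\hat S_j(\overline t_{l'}^j)(\tau-t)+(\hat S_j(\overline t_{l'}^j)-\hat S_i(\overline t_l^i))(\tau-t)$ and apply part~1) of Lemma~\ref{lem:(Matrix-norm-inequalities)} with $A=-\hat S_j(\overline t_{l'}^j)(\tau-t)$ and $B=(\hat S_j(\overline t_{l'}^j)-\hat S_i(\overline t_l^i))(\tau-t)$, together with part~3) for the trailing factor $\mathrm{e}^{\hat S_j(\overline t_{l'}^j)(\tau-t)}$. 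This bounds the integrand by $\mathrm{e}^{(2M+||\hat S_j-\hat S_i||)|\tau-t|}\,||\hat S_j(\overline t_{l'}^j)-\hat S_i(\overline t_l^i)||\,|\tau-t|$. The decisive factor is $||\hat S_j(\overline t_{l'}^j)-\hat S_i(\overline t_l^i)||$, which I would split as $||\hat S_j(\overline t_{l'}^j)-S||+||S-\hat S_i(\overline t_l^i)||$; each held estimation error decays exponentially, since $||\hat S_i(\overline t_l^i)-S||\le||\hat S_i(\overline t_l^i)-\hat S_i(t_k)||+||\hat S_i(t_k)-S||$, where the first term is controlled by the triggering threshold in (\ref{eq:9}) and the second by Step~2. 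Using $|\tau-t|\le T$ to bound the remaining exponentials by constants, integrating over an interval of length at most $T$, and finally invoking part~2) of Lemma~\ref{lem:(Matrix-norm-inequalities)} ($||E_{ij}||_F\le\sqrt{n_v}\,||E_{ij}||$) yields that $||E_{ij}(t)||_F$ decays exponentially.

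For $\Delta\overline\upsilon$, I would first simplify $\Delta\upsilon_i(\tau)=\overline v_i(\tau,\overline t_l^i)-\overline v_i^{*}(\tau,t_k,\overline t_l^i)$ by factoring out the common semigroup, $\Delta\upsilon_i(\tau)=\mathrm{e}^{\hat S_i(\overline t_l^i)(\tau-t_k)}\bigl(\overline v_i(t_k,\overline t_l^i)-\hat v_i(t_k)\bigr)$. The crucial observation is that the hold error $||\overline v_i(t_k,\overline t_l^i)-\hat v_i(t_k)||$ is exactly the quantity controlled by the triggering threshold $f_v^i$ in (\ref{eq:10-3}) evaluated at the sampling instant $t_k$, hence is at most $\iota_v\mathrm{e}^{-\gamma_v t_k}$ (and is zero if an event fires at $t_k$), so it decays exponentially. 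Bounding $||\mathrm{e}^{\hat S_i(\overline t_l^i)(\tau-t_k)}||\le\mathrm{e}^{MT}$ and $||\mathrm{e}^{-\hat S_i(\overline t_l^i)(\tau-t)}||\le\mathrm{e}^{MT}$ by part~3) of Lemma~\ref{lem:(Matrix-norm-inequalities)} and $|\tau-t_k|,|\tau-t|\le T$, summing the finitely many neighbor terms, and integrating over $[t_k,t]$ shows that each $||\Delta\overline\upsilon_i||$, and therefore the stacked $||\Delta\overline\upsilon||$, decays exponentially.

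The step I expect to require the most care is avoiding a circular dependence: the boundedness of $\hat v$ is only established later in Step~3 of the proof of Theorem~\ref{thm:1}, and that argument itself invokes this proposition. I would therefore stress that the bound on $\Delta\upsilon_i$ is obtained \emph{solely} from the PET hold-error threshold (\ref{eq:10-3}) and the exponential convergence of $\hat S$, and never from any a~priori global bound on $\hat v_i$; this keeps the argument self-contained. The secondary technical point is the transfer of exponential decay from $t_k$ to the running time $t$, which is harmless because $0\le t-t_k<T$, so that $\mathrm{e}^{-\gamma t_k}\le\mathrm{e}^{\gamma T}\mathrm{e}^{-\gamma t}$ for any rate $\gamma>0$.
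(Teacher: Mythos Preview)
Your proposal is correct and follows essentially the same route as the paper: for $E_{ij}$, apply part~1) of Lemma~\ref{lem:(Matrix-norm-inequalities)} to extract the factor $\|\hat S_j(\overline t_{l'}^j)-\hat S_i(\overline t_l^i)\|$ (the paper's $\|X\|$), which decays exponentially by Step~2 and the PETM-A threshold; for $\Delta\overline\upsilon$, factor out $\mathrm e^{\hat S_i(\overline t_l^i)(\tau-t_k)}$ and invoke the PETM-A threshold (\ref{eq:10-3}) on the residual $\overline v_i(t_k,\overline t_l^i)-\hat v_i(t_k)$. Your explicit remarks on avoiding circularity with Step~3 and on transferring decay from $t_k$ to $t$ via $t-t_k<T$ are points the paper leaves implicit, so your write-up is in fact slightly more careful than the original.
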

\begin{proof}
	We first show $||E_{ij}(t)||_{F}$ converge to zero exponentially.
	\textcolor{black}{Note that from Steps 1-2 in Appendix A, we know $\tilde{S}_{i}$ converges to zero exponentially for $\forall i=1,2,...,N$. Hence, $\hat{S}_{i}(t_{k}),\hat{S}_{j}(t_{k})$ are both bounded.}
	Then by Lemma \ref{lem:(Matrix-norm-inequalities)}, we have
	\begin{align*}
	& ||E_{ij}(t)||\\
	\leq & \int_{t_{k}}^{t}||(\mathrm{e}^{-\hat{S}_{j}(t_{k})(\tau-t)}-\mathrm{e}^{-\hat{S}_{i}(t_{k})(\tau-t)})||\cdot||\mathrm{e}^{\hat{S}_{j}(t_{k})(\tau-t_{k})}||d\tau\\
	\leq & g_{1}\int_{t_{k}}^{t}||X(t_{k})(\tau-t)||\mathrm{e}^{||X(t_{k})(\tau-t)||+||Y(t_{k})(\tau-t)||}d\tau
	\end{align*}
	where $X(t)=\hat{S}_{j}(t)-\hat{S}_{i}(t)$, $Y(t)=\hat{S}_{i}(t)$,
	$g_{1}>0$ is a positive constant.
	
	Meanwhile, due to $||\tilde{S}||$ converge to zero exponentially,
	there exist constants $g_{2},g_{3}$ such that $||X||\leq g_{2}\mathrm{e}^{-g_{3}t}$.
	Hence, we have $||E_{ij}(t)||\leq g_{4}\mathrm{e}^{-g_{5}t}$where
	$g_{4},g_{5}>0$ are positive constants.
	
	Next, we shall show $||\Delta\overline{\upsilon}||$ converge to zero
	exponentially. Recalling (\ref{eq:26}) and the definitions of $\overline{v}_{i}(t,\overline{t}_{l}^{i})$,
	$\overline{v}_{i}^{*}(t,t_{k},\overline{t}_{l}^{i})$, we have
	\begin{align}
	||\Delta\upsilon_{i}(t,t_{k},\overline{t}_{l}^{i})|| & =||\overline{v}_{i}(t,\overline{t}_{l}^{i})-\overline{v}_{i}^{*}(t,t_{k},\overline{t}_{l}^{i})||\nonumber \\
	& =||\mathrm{e}^{\hat{S}_{i}(\overline{t}_{l}^{i})(t-\overline{t}_{l}^{i})}\hat{v}_{i}(\overline{t}_{l}^{i})-\mathrm{e}^{\hat{S}_{i}(\overline{t}_{l}^{i})(t-t_{k})}\hat{v}_{i}(t_{k})||\nonumber \\
	& \leq||\mathrm{e}^{\hat{S}_{i}(\overline{t}_{l}^{i})(t-t_{k})}||\cdot||\mathrm{e}^{\hat{S}_{i}(\overline{t}_{l}^{i})(t_{k}-\overline{t}_{l}^{i})}-\hat{v}_{i}(t_{k})||\nonumber \\
	& \leq g_{6}\mathrm{e}^{-g_{7}t}\label{eq:31}
	\end{align}
	where $g_{6},g_{7}>0$ are positive constants. Using the above equation
	for each element in $||\Delta\overline{\upsilon}||,$ the proof is
	completed.
\end{proof}

\section{Proof of Theorem \ref{thm:1-1}}
\begin{proof}
	The proof is divided into the following steps.
	
	\textit{Step 1).} We will find the relation between $\tilde{x}_{i}(t)$
	and $\tilde{x}_{i}(\tau_{p}^{i})$ where $\tilde{x}_{i}\triangleq\hat{x}_{i}-x_{i}$.
	
	Subtracting (\ref{eq:2}) from (\ref{eq:11}), we have
	\begin{align}
	\dot{\tilde{x}}_{i}= & A_{i}\tilde{x}_{i}+E_{i}\tilde{v}_{i}\nonumber \\
	& +L_{i}(C_{mi}\hat{x}_{i}(\tau_{p}^{i})+D_{mi}u_{i}(\tau_{p}^{i}))\nonumber \\
	& +L_{i}(\rho_{i}F_{i}\overline{v}_{i}(t,\overline{t}_{l}^{i})+(1-\sigma_{i})F_{mi}\overline{v}_{i}(t,\overline{t}_{l}^{i})+\psi_{i}(\tau_{p}^{i}))\nonumber \\
	& +L_{i}(\psi_{i}(\overline{\tau}_{q}^{i})-\psi_{i}(\tau_{p}^{i})).\label{eq:46}
	\end{align}
	Then using (\ref{eq:12}) for $\psi_{i}(\tau_{p}^{i})$, we obtain
	\begin{align}
	\dot{\tilde{x}}_{i}= & A_{i}\tilde{x}_{i}+E_{i}\tilde{v}_{i}+L_{i}(C_{mi}\hat{x}_{i}(\tau_{p}^{i})+D_{mi}u_{i}(\tau_{p}^{i}))\nonumber \\
	& +L_{i}(F_{mi}\overline{v}_{i}(\tau_{p}^{i},\overline{t}_{l}^{i})-y_{m}(\tau_{p}^{i}))\nonumber \\
	& +L_{i}(\psi_{i}(\overline{\tau}_{q}^{i})-\psi_{i}(\tau_{p}^{i})).\label{eq:47}
	\end{align}
	Adopting (\ref{eq:2}) for $y_{m}(\tau_{p}^{i})$, we have
	\begin{align}
	\dot{\tilde{x}}_{i}= & (A_{i}+L_{i}C_{mi})\tilde{x}_{i}+L_{i}C_{mi}\left(\tilde{x}_{i}(\tau_{p}^{i})-\tilde{x}_{i}\right)\nonumber \\
	& +E_{i}\tilde{v}_{i}+L_{i}F_{mi}\left(\overline{v}_{i}(\tau_{p}^{i},\overline{t}_{l}^{i})-v(\tau_{p}^{i})\right)\nonumber \\
	& +L_{i}\left(\psi_{i}(\overline{\tau}_{q}^{i})-\psi_{i}(\tau_{p}^{i})\right).\label{eq:14}
	\end{align}
	Note that
	\begin{align*}
	\overline{v}_{i}(\tau_{p}^{i},\overline{t}_{l}^{i})-v(\tau_{p}^{i})= & (\overline{v}_{i}(\tau_{p}^{i},\overline{t}_{l}^{i})-\hat{v}_{i}(\tau_{p}^{i})+\hat{v}_{i}(\tau_{p}^{i})-v(\tau_{p}^{i})).
	\end{align*}
	From Theorem \ref{thm:1}, we know $\overline{v}_{i}(\tau_{p}^{i},\overline{t}_{l}^{i})-\hat{v}_{i}(\tau_{p}^{i})$
	and $\hat{v}_{i}(\tau_{p}^{i})-v(\tau_{p}^{i})$ will both converge
	to zero exponentially. Hence, $\overline{v}_{i}(\tau_{p}^{i},\overline{t}_{l}^{i})-v(\tau_{p}^{i})$
	will converge to zero exponentially. Meanwhile, from Theorem \ref{thm:1}
	and (\ref{eq:26-1-1}) we know $\tilde{v}_{i}$ and $\psi_{i}(\overline{\tau}_{q}^{i})-\psi_{i}(\tau_{p}^{i})$
	will also converge to zero exponentially. Hence, we can conclude that
	there exist positive constants $\mathrm{c}_{1},\mathrm{c}_{2}$ such
	that 
	\begin{align}
	||E_{i}\tilde{v}_{i}+L_{i}F_{mi}\left(\overline{v}_{i}(\tau_{p}^{i},\overline{t}_{l}^{i})-v(\tau_{p}^{i})\right)\nonumber \\
	+L_{i}\left(\psi_{i}(\overline{\tau}_{q}^{i})-\psi_{i}(\tau_{p}^{i})\right)|| & <\mathrm{c}_{1}\mathrm{e}^{-\mathrm{c}_{2}t}.\label{eq:15}
	\end{align}
	Hence, integrating (\ref{eq:14}) on $[\tau_{p}^{i},\tau_{p+1}^{i})$
	we get
	\begin{align}
	\tilde{x}_{i}(t)-\tilde{x}_{i}(\tau_{p}^{i})= & \int_{\tau_{p}^{i}}^{t}A_{i}\left(\tilde{x}_{i}(\tau)-\tilde{x}_{i}(\tau_{p}^{i})\right)d\tau\nonumber \\
	& +(t-\tau_{p}^{i})(A_{i}+L_{i}C_{mi})\tilde{x}_{i}(\tau_{p}^{i})\nonumber \\
	& +\int_{\tau_{p}^{i}}^{t}\mathrm{c}_{1}\mathrm{e}^{-\mathrm{c}_{2}\tau}d\tau.\label{eq:50}
	\end{align}
	It follows that
	\begin{align}
	||\tilde{x}_{i}(t)-\tilde{x}_{i}(\tau_{p}^{i})||\leq & \int_{\tau_{p}^{i}}^{t}||A_{i}||\cdot||\tilde{x}_{i}(\tau)-\tilde{x}_{i}(\tau_{p}^{i})||d\tau\nonumber \\
	& +\mathcal{T}^{i}||A_{i}+L_{i}C_{mi}||\cdot||\tilde{x}_{i}(\tau_{p}^{i})||\nonumber \\
	& +\mathcal{T}^{i}\mathrm{c}_{1}\mathrm{e}^{-\mathrm{c}_{2}\tau_{p}^{i}}.\label{eq:51}
	\end{align}
	By Gronwall's inequality, we obtain
	\begin{align}
	||\tilde{x}_{i}(t)-\tilde{x}_{i}(\tau_{p}^{i})||\leq & \delta_{i3}(\mathcal{T}^{i})||\tilde{x}_{i}(\tau_{p}^{i})||\nonumber \\
	& +\mathcal{T}^{i}\mathrm{e}^{||A_{i}||\mathcal{T}^{i}}\mathrm{c}_{1}\mathrm{e}^{-\mathrm{c}_{2}\tau_{p}^{i}}\label{eq:42}
	\end{align}
	where $\delta_{i3}(\mathcal{T}^{i})$ is a $\mathcal{K}$-class function
	such that 
	\[
	\delta_{i3}(\mathcal{T}^{i})=\mathcal{T}^{i}||A_{i}+L_{i}C_{mi}||\mathrm{e}^{||A_{i}||\mathcal{T}^{i}}.
	\]
	When $\mathcal{T}^{i}$ is small enough such that $\delta_{i3}(\mathcal{T}^{i})<1$,
	then we have
	\begin{align}
	||\tilde{x}_{i}-\tilde{x}_{i}(\tau_{p}^{i})|| & \leq\delta_{\tilde{x}}(\mathcal{T}^{i})||\tilde{x}_{i}||+\mathrm{c}_{3}\mathrm{e}^{-\mathrm{c}_{2}t}\label{eq:16-1-1}
	\end{align}
	where $\mathrm{c}_{3}>0$ is a constant,
	\begin{equation}
	\delta_{\tilde{x}}(\mathcal{T}^{i})=\frac{\delta_{i3}(\mathcal{T}^{i})}{1-\delta_{i3}(\mathcal{T}^{i})}.\label{eq:34-1-1}
	\end{equation}
	
	\textit{Step 2). }We will show $\tilde{x}_{i},e_{i}$ converge to
	zero exponentially. 
	
	First, we demonstrate $\tilde{x}_{i}$ converge to zero exponentially.
	Note that $A_{i}+L_{i}C_{mi}$ is a Hurwitz matrix, then there exists
	a positive matrix $Q_{i}$ such that $Q_{i}(A_{i}+L_{i}C_{mi})+(A_{i}+L_{i}C_{mi})^{\mathrm{\mathit{T}}}Q_{i}=-2I$.
	Then consider the following Lyapunov function
	\[
	V_{\tilde{x}}=\frac{1}{2}\tilde{x}_{i}^{T}Q_{i}\tilde{x}_{i}.
	\]
	From (\ref{eq:14}), the derivative of $V_{\tilde{x}}$ is given by:
	\begin{align}
	\dot{V}_{\tilde{x}}= & -||\tilde{x}_{i}||^{2}+\tilde{x}_{i}^{T}Q_{i}L_{i}C_{mi}\left(\tilde{x}_{i}(\tau_{p}^{i})-\tilde{x}_{i}\right)\nonumber \\
	& +\tilde{x}_{i}^{T}Q_{i}\left(E_{i}\tilde{v}_{i}+L_{i}F_{mi}\left(\overline{v}_{i}(\tau_{p}^{i},\overline{t}_{l}^{i})-v(\tau_{p}^{i})\right)\right)\nonumber \\
	& +\tilde{x}_{i}^{T}Q_{i}L_{i}(\psi_{i}(\overline{\tau}_{q}^{i})-\psi_{i}(\tau_{p}^{i})).\label{eq:55}
	\end{align}
	Using (\ref{eq:42}) and (\ref{eq:15}), we get
	\begin{align*}
	\dot{V}_{\tilde{x}}\leq & -(1-||Q_{i}L_{i}C_{mi}||\delta_{\tilde{x}}(\mathcal{T}^{i})-\varepsilon)||\tilde{x}_{i}||^{2}+\mathrm{c}_{4}\mathrm{e}^{-2\mathrm{c}_{2}t}
	\end{align*}
	where $\varepsilon$ is an arbitrary small constant, $\mathrm{c}_{4}$
	is a positive constant.
	
	Therefore, when 
	\[
	||Q_{i}L_{i}C_{mi}||\delta_{\tilde{x}}(\mathcal{T}^{i})<1
	\]
	then we can conclude that $\tilde{x}_{i}$ will converge to zero exponentially.
	
	Next, we will prove $e_{i}$ converge to zero exponentially. Consider
	the following coordinate transformation $\overline{x}_{i}=x_{i}-X_{i}v$
	and $\overline{u}_{i}=u_{i}-U_{i}v$. Then based on (\ref{eq:4-1}),
	(\ref{eq:2}) is expressed as:
	\begin{align}
	\dot{\overline{x}}_{i}= & A_{i}(\overline{x}_{i}+X_{i}v)+B_{i}(\overline{u}_{i}+U_{i}v)+E_{i}v-X_{i}Sv\nonumber \\
	= & A\overline{x}_{i}+B\overline{u}_{i},\label{eq:16}\\
	e_{i}= & C_{i}(\overline{x}_{i}+X_{i}v)+D_{i}(\overline{u}_{i}+U_{i}v)+F_{i}v\nonumber \\
	= & C_{i}\overline{x}_{i}+D_{i}\overline{u}_{i}.\label{eq:58}
	\end{align}
	By (\ref{eq:10}), $\overline{u}_{i}$ is expressed as: 
	\begin{align*}
	\overline{u}_{i}= & K_{i}\hat{x}_{i}+(\hat{U}_{i}-K_{i}\hat{X}_{i})\hat{v}_{i}-U_{i}v\\
	= & K_{i}\overline{x}_{i}+K_{i}\tilde{x}_{i}+(\tilde{U}_{i}-K_{i}\tilde{X}_{i})\hat{v}_{i}+(U_{i}-K_{i}X_{i})\tilde{v}_{i}.
	\end{align*}
	Then (\ref{eq:16}) is written as:
	\begin{align*}
	\dot{\overline{x}}_{i}= & (A_{i}+B_{i}K_{i})\overline{x}_{i}\\
	& +K_{i}\tilde{x}_{i}+(\tilde{U}_{i}-K_{i}\tilde{X}_{i})\hat{v}_{i}+(U_{i}-K_{i}X_{i})\tilde{v}_{i}.
	\end{align*}
	Note that $\tilde{x}_{i},\tilde{v}_{i},\tilde{X}_{i},\tilde{U}_{i}$
	all converge to zero exponentially, and $A_{i}+B_{i}K_{i}$ is a Hurwitz
	matrix. Thus, $\overline{x}_{i}$ will converge to zero exponentially.
	Then $e_{i}$ will converge to zero exponentially. The proof is completed.
\end{proof}

\section{\textcolor{black}{Proof of Theorem \ref{thm:2}}}
\begin{proof}
In the following let $\mathrm{c}_{j}(j=1,2,...,18)$ denote some proper
positive constants. $\mathrm{\varphi}_{ij}(\overline{\iota}_{\psi},\overline{\iota}_{\omega},\mathcal{T}^{i})(j=1,2,...,8)$
denote some non-negative increasing functions for agent $i$ with
$\mathrm{\varphi}_{ij}(0,0,0)=0$. $\zeta_{1},\zeta_{2},\zeta_{3},\varepsilon$
are some positive design parameters. The proof is then divided into
the following steps.
	
	\textit{Step 1).} We will find the relation between $\tilde{x}_{i}(t)$
	and $\tilde{x}_{i}(\tau_{p}^{i})$.
	
	Using (\ref{eq:17-2-1}) and Theorem \ref{thm:1}, we can obtain that
	\begin{align}
	||E_{i}\tilde{v}_{i}+L_{i}F_{mi}\left(\overline{v}_{i}(\tau_{p}^{i},\overline{t}_{l}^{i})-v(\tau_{p}^{i})\right)\nonumber \\
	+L_{i}\left(\psi_{i}(\overline{\tau}_{q}^{i})-\psi_{i}(\tau_{p}^{i})\right)|| & <\mathrm{c}_{1}\mathrm{e}^{-\mathrm{c}_{2}t}+||L_{i}||\overline{\iota}_{\psi}.\label{eq:15-1-1}
	\end{align}
	
	Then by following the line of Step 1) in the proof of Theorem \ref{thm:1-1}
	and using the above inequality, we get
	\begin{align}
	||\tilde{x}_{i}(t)-\tilde{x}_{i}(\tau_{p}^{i})||\leq & \delta_{i3}(\mathcal{T}^{i})||\tilde{x}_{i}(\tau_{p}^{i})||+\mathcal{T}^{i}\mathrm{e}^{||A_{i}||\mathcal{T}^{i}}\mathrm{c}_{1}\mathrm{e}^{-\mathrm{c}_{2}\tau_{p}^{i}}\nonumber \\
	& +\mathcal{T}^{i}||L_{i}||\overline{\iota}_{\psi}\mathrm{e}^{||A_{i}||\mathcal{T}^{i}}.
	\end{align}
	Then when $\mathcal{T}^{i}$ is small enough, we have
	\begin{align}
	||\tilde{x}_{i}-\tilde{x}_{i}(\tau_{p}^{i})||\leq & \delta_{\tilde{x}}(\mathcal{T}^{i})||\tilde{x}_{i}||+\mathrm{c}_{3}\mathrm{e}^{-\mathrm{c}_{2}t}+\mathrm{\varphi}_{i1}(\cdot)\label{eq:60}
	\end{align}
	where
	\begin{equation}
	\mathrm{\varphi}_{i1}(\overline{\iota}_{\psi},\overline{\iota}_{\omega},\mathcal{T}^{i})=\frac{\mathcal{T}^{i}||L_{i}||\overline{\iota}_{\psi}\mathrm{e}^{||A_{i}||\mathcal{T}^{i}}}{1-\delta_{i3}(\mathcal{T}^{i})}.\label{eq:phi1}
	\end{equation}
	
	\textit{Step 2).} We will show $\tilde{x}_{i}$ will converge to a
	small neighborhood of origin.
	
	Consider the Lyapunov function $V_{\tilde{x}}=\frac{1}{2}\tilde{x}_{i}^{T}Q_{i}\tilde{x}_{i}.$
	Based on (\ref{eq:55}), (\ref{eq:60}), (\ref{eq:17-2-1}) and Theorem
	\ref{thm:1}, we obtain
	\begin{align*}
	\dot{V}_{\tilde{x}}\leq & -||\tilde{x}_{i}||^{2}+||Q_{i}L_{i}C_{mi}||\delta_{\tilde{x}}(\mathcal{T}^{i})||\tilde{x}_{i}||^{2}\\
	& +||\tilde{x}_{i}||\cdot||Q_{i}L_{i}C_{mi}||\varphi_{i1}(\mathcal{T}^{i},\overline{\iota}_{\psi})+||\tilde{x}_{i}||\cdot||Q_{i}L_{i}||\overline{\iota}_{\psi}\\
	& +||\tilde{x}_{i}||\mathrm{c}_{4}\mathrm{e}^{-\mathrm{c}_{5}t}.
	\end{align*}
	By Young's inequality, we have
	\begin{align}
	\dot{V}_{\tilde{x}}\leq & -(1-||Q_{i}L_{i}C_{mi}||\delta_{\tilde{x}}(\mathcal{T}^{i})-\zeta_{1}-\varepsilon)||\tilde{x}_{i}||^{2}\nonumber \\
	& +\mathrm{c}_{6}\mathrm{e}^{-2\mathrm{c}_{5}t}+\varphi_{i2}\label{eq:62-2}
	\end{align}
	where $\varepsilon>0$ is a small design parameter such that $||Q_{i}L_{i}C_{mi}||\delta_{\tilde{x}}(\mathcal{T}^{i})+\zeta_{1}+\varepsilon<1$,
	\begin{equation}
	\mathrm{\varphi}_{i2}(\overline{\iota}_{\psi},\overline{\iota}_{\omega},\mathcal{T}^{i})=\frac{\left(||Q_{i}L_{i}C_{mi}||\varphi_{i1}(\mathcal{T}^{i},\overline{\iota}_{\psi})+||Q_{i}L_{i}||\overline{\iota}_{\psi}\right)^{2}}{4\zeta_{1}}.\label{eq:phi2}
	\end{equation}
	Note that 
	\begin{equation}
	\frac{\lambda_{\min}(Q)||\tilde{x}_{i}||^{2}}{2}\leq V_{\tilde{x}}\leq\frac{\lambda_{\max}(Q)||\tilde{x}_{i}||^{2}}{2}.\label{eq:63-2}
	\end{equation}
	Therefore, (\ref{eq:62-2}) can be expressed as:
	\begin{align}
	\dot{V}_{\tilde{x}}\leq & -(1-||Q_{i}L_{i}C_{mi}||\delta_{\tilde{x}}(\mathcal{T}^{i})-\zeta_{1}-\varepsilon)\frac{2V_{\tilde{x}}}{\lambda_{\max}(Q)}\nonumber \\
	& +\mathrm{c}_{6}\mathrm{e}^{-2\mathrm{c}_{5}t}+\varphi_{i2}.\label{eq:62-2-1}
	\end{align}
	
	By solving the above inequality, we can conclude that 
	\begin{align}
	V_{\tilde{x}}\leq\mathrm{c}_{7}\mathrm{e}^{-\mathrm{c}_{8}t}+ & \frac{\lambda_{\max}(Q)\varphi_{i2}}{2(1-||Q_{i}L_{i}C_{mi}||\delta_{\tilde{x}}(\mathcal{T}^{i})-\zeta_{1}-\varepsilon)}.\label{eq:61-2}
	\end{align}
	Using (\ref{eq:63-2}), we have
	\begin{equation}
	||\tilde{x}_{i}||\leq\frac{\mathrm{2c}_{7}\mathrm{e}^{-\mathrm{c}_{8}t}}{\lambda_{\min}(Q)}+\varphi_{i3}\label{eq:61}
	\end{equation}
	where
	\begin{equation}
	\varphi_{i3}=\frac{\lambda_{\max}(Q)\varphi_{i2}}{\lambda_{\min}(Q)\left(1-||Q_{i}L_{i}C_{mi}||\delta_{\tilde{x}}(\mathcal{T}^{i})-\zeta_{1}-\varepsilon\right)}.\label{eq:phi3}
	\end{equation}
	This shows that $\tilde{x}_{i}$ will converge to a small neighborhood
	of origin.
	
	\textit{Step 3).} We will find the relation between $\overline{x}_{i}$
	and $\overline{x}_{i}(\tau_{p}^{i})$.
	
	Using (\ref{eq:10-2}) and (\ref{eq:12-2}), $\overline{u}_{i}$ is
	expressed as:
	\begin{align}
	\overline{u}_{i}= & \omega_{i}(\tau_{p}^{i})+\omega_{i}(\overline{\varsigma}_{m}^{i})-\omega_{i}(\tau_{p}^{i})\nonumber \\
	= & K_{i}\hat{x}_{i}(\tau_{p}^{i})+(\hat{U}_{i}(\tau_{p}^{i})-K_{i}\hat{X}_{i}(\tau_{p}^{i}))\hat{v}_{i}(\tau_{p}^{i})-U_{i}v\nonumber \\
	& +\omega_{i}(\overline{\varsigma}_{m}^{i})-\omega_{i}(\tau_{p}^{i})\nonumber \\
	= & K_{i}\overline{x}_{i}(\tau_{p}^{i})+\mathcal{U}_{i}\label{eq:63-1}
	\end{align}
	where
	\begin{align*}
	\mathcal{U}_{i}= & K_{i}\tilde{x}_{i}(\tau_{p}^{i})+(\tilde{U}_{i}(\tau_{p}^{i})-K_{i}\tilde{X}_{i}(\tau_{p}^{i}))\hat{v}_{i}(\tau_{p}^{i})\\
	+ & (U_{i}-K_{i}X_{i})\tilde{v}_{i}(\tau_{p}^{i})+U_{i}(v(\tau_{p}^{i})-v(t))\\
	+ & (\omega_{i}(\overline{\varsigma}_{m}^{i})-\omega_{i}(\tau_{p}^{i})).
	\end{align*}
	
	Substituting the above equation into (\ref{eq:16}), we obtain
	\begin{align}
	\dot{\overline{x}}_{i}= & A_{i}\overline{x}_{i}+B_{i}K_{i}\overline{x}_{i}(\tau_{p}^{i})+B_{i}\mathcal{U}_{i}\label{eq:62}
	\end{align}
	Note that there exists a non-negative constant $\zeta_{2}$ such that
	\begin{equation}
	||v(\tau_{p}^{i})-v(t)||\leq\zeta_{2}\mathcal{T}^{i}\label{eq:aa}
	\end{equation}
	where $\zeta_{2}=0$ when $v(t)$ is a constant signal.
	
	Meanwhile, $\tilde{v}_{i},\tilde{X}_{i},\tilde{U}_{i}$ all converge
	to zero exponentially. Then by (\ref{eq:61}) and (\ref{eq:26-1-1-1}),
	we conclude that 
	\begin{equation}
	||\mathcal{U}_{i}||\leq\mathrm{c}_{9}\mathrm{e}^{-\mathrm{c}_{10}t}+\varphi_{i4}(\overline{\iota}_{\psi},\overline{\iota}_{\omega},\mathcal{T}^{i})\label{eq:63}
	\end{equation}
	where 
	\begin{equation}
	\varphi_{i4}(\overline{\iota}_{\psi},\overline{\iota}_{\omega},\mathcal{T}^{i})=||K_{i}\varphi_{i3}||+\overline{\iota}_{\omega}+||U_{i}\zeta_{2}\mathcal{T}^{i}||.\label{eq:phi4}
	\end{equation}
	
	Using this for (\ref{eq:62}) and by Gronwall's inequality, we obtain
	\begin{align}
	||\overline{x}_{i}-\overline{x}_{i}(\tau_{p}^{i})||\leq & \delta_{i4}(\mathcal{T}^{i})||\overline{x}_{i}(\tau_{p}^{i})||+\mathcal{T}^{i}\mathrm{e}^{||A_{i}||\mathcal{T}^{i}}\mathrm{c}_{9}\mathrm{e}^{-\mathrm{c}_{10}\tau_{p}^{i}}\nonumber \\
	& +\mathcal{T}^{i}\mathrm{e}^{||A_{i}||\mathcal{T}^{i}}\varphi_{i4}(\overline{\iota}_{\psi},\overline{\iota}_{\omega},\mathcal{T}^{i})
	\end{align}
	where 
	\begin{equation}
	\delta_{i4}(\mathcal{T}^{i})=\mathcal{T}^{i}||A_{i}+B_{i}K_{i}||\mathrm{e}^{||A_{i}||\mathcal{T}^{i}}.\label{eq:64-1}
	\end{equation}
	Then when $\mathcal{T}^{i}$ is small enough, we have
	\begin{align}
	||\overline{x}_{i}-\overline{x}_{i}(\tau_{p}^{i})|| & \leq\delta_{\overline{x}}(\mathcal{T}^{i})||\overline{x}_{i}||+\mathrm{c}_{11}\mathrm{e}^{-\mathrm{c}_{12}t}+\varphi_{i5}\label{eq:64}
	\end{align}
	where 
	\begin{equation}
	\delta_{\overline{x}}(\mathcal{T}^{i})=\frac{\delta_{i4}(\mathcal{T}^{i})}{1-\delta_{i4}(\mathcal{T}^{i})},\label{eq:34-1-1-1}
	\end{equation}
	\begin{equation}
	\varphi_{i5}=\frac{\mathcal{T}^{i}\mathrm{e}^{||A_{i}||\mathcal{T}^{i}}\varphi_{i4}(\overline{\iota}_{\psi},\overline{\iota}_{\omega},\mathcal{T}^{i})}{1-\delta_{i4}(\mathcal{T}^{i})}.\label{eq:phi5}
	\end{equation}
\textit{Step 4).} We will show $\overline{x}_{i},e_{i}$ converge
to a small neighborhood of origin.

Consider 
\[
V_{\overline{x}}=\frac{1}{2}\overline{x}_{i}^{T}R_{i}\overline{x}_{i}
\]
where $R_{i}$ is a positive matrix such that $R_{i}(A_{i}+B_{i}K_{i})+(A_{i}+B_{i}K_{i})^{\mathrm{\mathit{T}}}R_{i}=2I$. 

Then using (\ref{eq:62}),
\begin{align*}
\dot{V}_{\overline{x}}= & -||\overline{x}_{i}||^{2}+\overline{x}_{i}^{T}R_{i}B_{i}K_{i}(\overline{x}_{i}(\tau_{p}^{i})-\overline{x}_{i})+\overline{x}_{i}^{T}R_{i}\mathcal{U}_{i}.
\end{align*}
By (\ref{eq:64}), (\ref{eq:63}) and Young's inequality, we get
\begin{align}
\dot{V}_{\overline{x}}\leq & -(1-||R_{i}B_{i}K_{i}||\delta_{\overline{x}}(\mathcal{T}^{i})-\zeta_{3}-\varepsilon)||\overline{x}_{i}||^{2}\nonumber \\
& +\mathrm{c}_{13}\mathrm{e}^{-\mathrm{c}_{14}t}+\varphi_{i6}(\overline{\iota}_{\psi},\overline{\iota}_{\omega},\mathcal{T}^{i})\label{eq:74}
\end{align}
\textcolor{blue}{where $\varepsilon>0$ is a small design parameter
	such that $||R_{i}B_{i}K_{i}||\delta_{\overline{x}}(\mathcal{T}^{i})+\zeta_{3}+\varepsilon<1$,}
\begin{equation}
\varphi_{i6}(\overline{\iota}_{\psi},\overline{\iota}_{\omega},\mathcal{T}^{i})=\frac{\left(||R_{i}B_{i}K_{i}\varphi_{i5}||+||R_{i}\varphi_{i4}||\right)^{2}}{4\zeta_{3}}.\label{eq:phi6}
\end{equation}

Note that 
\begin{equation}
\frac{\lambda_{\min}(R_{i})||\overline{x}_{i}||^{2}}{2}\leq V_{\overline{x}}\leq\frac{\lambda_{\max}(R_{i})||\overline{x}_{i}||^{2}}{2}.\label{eq:63-2-1}
\end{equation}
Therefore, (\ref{eq:74}) can be expressed as:
\begin{align}
\dot{V}_{\overline{x}}\leq & -(1-||R_{i}B_{i}K_{i}||\delta_{\overline{x}}(\mathcal{T}^{i})-\zeta_{3}-\varepsilon)\frac{2V_{\overline{x}}}{\lambda_{\max}(R_{i})}\nonumber \\
& +\mathrm{c}_{13}\mathrm{e}^{-\mathrm{c}_{14}t}+\varphi_{i6}(\overline{\iota}_{\psi},\overline{\iota}_{\omega},\mathcal{T}^{i}).\label{eq:74-1}
\end{align}

By solving the above equation, we obtain 
\begin{align}
||\overline{x}_{i}||\leq & \mathrm{c}_{15}\mathrm{e}^{-\mathrm{c}_{16}t}+\varphi_{i7}(\overline{\iota}_{\psi},\overline{\iota}_{\omega},\mathcal{T}^{i})\label{eq:61-1-2}
\end{align}
where\textcolor{black}{
	\begin{equation}
	\varphi_{i7}=\sqrt{\frac{\lambda_{\max}(R_{i})\varphi_{i6}}{\lambda_{\min}(R_{i})\left(1-||R_{i}B_{i}K_{i}||\delta_{\overline{x}}(\mathcal{T}^{i})-\zeta_{3}-\varepsilon\right)}}.\label{eq:phi7}
	\end{equation}
}This means that $||\overline{x}_{i}||$ will converge to a small
neighborhood around origin.

Finally, noting (\ref{eq:58}), (\ref{eq:63-1}) and (\ref{eq:63}),
we can conclude that
\begin{align}
||e_{i}||\leq & \mathrm{c}_{17}\mathrm{e}^{-\mathrm{c}_{18}t}+\varphi_{i8}(\overline{\iota}_{\psi},\overline{\iota}_{\omega},\mathcal{T}^{i})\label{eq:61-1-1}
\end{align}
where\textcolor{blue}{
	\begin{equation}
	\varphi_{i8}(\overline{\iota}_{\psi},\overline{\iota}_{\omega},\mathcal{T}^{i})=||C_{i}+D_{i}K_{i}||\varphi_{i7}+||D_{i}K_{i}||\varphi_{i3}.\label{eq:phi8}
	\end{equation}
}Therefore, we know $\underset{t\rightarrow+\infty}{\lim}||e_{i}(t)|| \leq\varLambda_{i}\leq\varphi_{i8}(\overline{\iota}_{\psi},\overline{\iota}_{\omega},\mathcal{T}^{i})$
where $\varphi_{i8}(\overline{\iota}_{\psi},\overline{\iota}_{\omega},\mathcal{T}^{i})$
is an increasing function with $\varphi_{i8}(0,0,0)=0$. $\varphi_{i8}(\overline{\iota}_{\psi},\overline{\iota}_{\omega},\mathcal{T}^{i})$
can be computed by using the values of $\varphi_{ij}(\overline{\iota}_{\psi},\overline{\iota}_{\omega},\mathcal{T}^{i})(j=1,2,...,7)$
in (\ref{eq:phi1}), (\ref{eq:phi2}), (\ref{eq:phi3}), (\ref{eq:phi4}),
(\ref{eq:phi5}), (\ref{eq:phi6}), (\ref{eq:phi7}). This completes
the proof. 
\end{proof}
\begin{rem}
\textcolor{blue}{Note that in order to compute $\varphi_{i7}$ and
	$\varphi_{i8}$ in (\ref{eq:phi7}) and (\ref{eq:phi8}), one needs
	to select the design parameter $\varepsilon$, which is a very
	small constant. A smaller $\varepsilon$ will result in  smaller
	$\varphi_{i7}$ and $\varphi_{i8}$. This will have a tighter bound
	for  $\varLambda_{i}$.}
\end{rem}
\begin{rem}
From (\ref{eq:aa}), we know $\zeta_{2}=0$ when $v(t)$ is a constant
signal. Using this, we can conclude that $\varphi_{i8}(\overline{\iota}_{\psi},\overline{\iota}_{\omega},\mathcal{T}^{i})=0$
if $\overline{\iota}_{\psi}=0,\overline{\iota}_{\omega}=0$ and the
signal $v$ is a constant. This implies that we can make the regulation
error converge to exact zero for constant $v$ even if PETM-C is invoked.
\end{rem}

\section{Additional discussions}

There are several ways to remove the communication from controller
to sensor. One simple way is to modify the event-triggered condition
(\ref{eq:26-1-1}) into 
\begin{equation}
\overline{\tau}_{p+1}^{i}=\mathrm{inf}\{\tau>\overline{\tau}_{q}^{i}|\tau\in\Omega_{T},f_{\psi}^{i}(\cdot)>0\}\label{eq:26-1-1-2}
\end{equation}
where 
\begin{equation}
f_{\psi}^{i}(\tau,\overline{\tau}_{q}^{i})=||y_{mi}(\tau)-y_{mi}(\overline{\tau}_{q}^{i})||-\iota_{\psi}\mathrm{e}^{-\gamma_{\psi}\tau}-\overline{\iota}_{\psi}\label{eq:17-2-2}
\end{equation}
with positive constants $\iota_{\psi},\gamma_{\psi},\overline{\iota}_{\psi}>0$.
Then, the communication burden can be reduced be increasing the constant
$\overline{\iota}_{\psi}$ with a sacrifice of the control accuracy.
That is the regulation error $e_{i}(t)$ converges to an arbitrary
small neighborhood around origin.

Another way it to utilize the event-triggered control scheme shown
in Fig. \ref{fig:3-1-1-1} instead of Fig. \ref{fig:3-1}. We can
see that the sensor in each agent sends/receives the information to/from
its neighbors. Then the information $\hat{v}_{i}(\overline{t}_{l}^{i}),\hat{S}_{i}(\overline{t}_{l}^{i})$
can be directly used for PETM-B. Thus, the controller does not need
to send information to the sensor. However, the computational burden
may increase for the sensor side since the distributed observer should
be implemented in the sensor side to generate $\hat{v}_{i}(\overline{t}_{l}^{i}),\hat{S}_{i}(\overline{t}_{l}^{i})$. 

\begin{figure}
	\begin{centering}
		\includegraphics[scale=0.7]{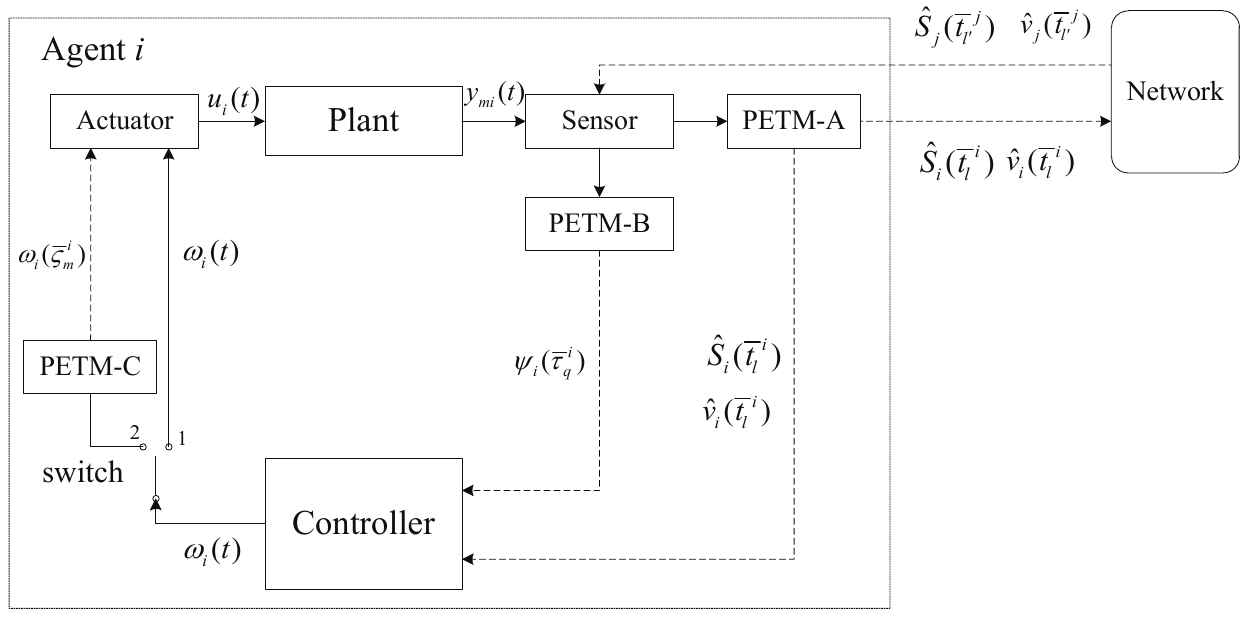}
		\par\end{centering}
	\caption{\label{fig:3-1-1-1}Modified control scheme.}
\end{figure}

\section*{References}

\ \\
%


\begin{thebibliography}{10}
	\bibitem[Behera, Bandyopadhyay, \& Yu(2018)]{key-8}   Behera, A., Bandyopadhyay, B., \& Yu, X. (2018).
	Periodic event-triggered sliding mode control. \textit{Automatica},
	\textit{96}, 61-72.
	
	\bibitem[Bernuau, Moulay, Coirault, \& Isfoula(2018)]{key-3a-1} Bernuau, E., Moulay, E., Coirault, P., \& Isfoula,
	F. (2019). Practical consensus of homogeneous sampled-data multi-agent
	systems. \textit{IEEE Transactions on Automatic Control}, DOI: 10.1109/TAC.2019.2904442.
	
	\bibitem[Bin, Marconi, \& Teel(2019)]{key-13} Bin, M., Marconi, L., \& Teel, A. (2019). Adaptive
	output regulation for linear systems via discrete-time identifiers.
	\textit{Automatica}, \textit{105}, 422-432.
	
	\bibitem[Cai, \& Hu(2019)]{key-16} Cai, H., \& Hu, G. (2019). Dynamic consensus tracking
	of uncertain Lagrangian systems with a switched command generator.
	\textit{IEEE Transactions on Automatic Control}, DOI: 10.1109/TAC.2019.2893874.
	
	\bibitem[Cai, \& Huang(2016)]{key-12} Cai, H., \& Huang, J. (2016). The leader-following
	consensus for multiple uncertain Euler-Lagrange systems with an adaptive
	distributed observer. \textit{IEEE Transactions on Automatic Control},
	\textit{61}(10), 3152-3157. 
	
	\bibitem[Cai, Lewis, Hu, \& Huang(2017)]{key-14aa} Cai, H., Lewis, F., Hu, G., \& Huang, J. (2017).
	The adaptive distributed observer approach to the cooperative output
	regulation of linear multi-agent systems. \textit{Automatica}, \textit{61},
	299-305.
	
	\bibitem[Chen, \& Chen(2017)]{key-ad1} Chen, X., \& Chen, Z. (2017). Robust sampled-data
	output synchronization of nonlinear heterogeneous multi-agents. \textit{IEEE
		Transactions on Automatic Control}, \textit{62}(3), 1458-1464. 
	
	\bibitem[Chen, \& Huang(2015)]{key-14a} Chen, Z., \& Huang, J. (2015). \textit{Stabilization
		and regulation of nonlinear systems: A robust and adaptive approach}.
	Springer International Publishing Switzerland. 
	
	\bibitem[Chen, \& Sun(2020)]{key-9-2}  Chen, C., \& Sun, Z. (2020). A unified approach
	to finite-time stabilization of high-order nonlinear systems with
	an asymmetric output constraint. \textit{Automatica}, \textit{111},
	108581. 
	
	\bibitem[Cheng, \& Ugrinovskii(2016)]{key-14aa-1} Cheng, Y., \& Ugrinovskii, V. (2016). Event-triggered
	leader-following tracking control for multivariable multi-agent systems.
	\textit{Automatica}, \textit{70}, 204-210.
	
	\bibitem[Cuenca, Antunes, Castillo, Garcia, Khashooei, \& Heemels(2019)]{key-10-1}  Cuenca, A., Antunes, D., Castillo, A., Garcia,
	P., Khashooei, B., \& Heemels, W. (2019). Periodic event-triggered
	sampling and dual-rate control for a wireless networked control system
	with applications to UAVs. \textit{IEEE Transactions on Industrial
		Electronics}, \textit{66}(4), 3157-3166.
	
	\bibitem[Deng, \& Yang(2019)]{key-9} Deng, C., \& Yang, G. (2019). Distributed adaptive
	fault-tolerant control approach to cooperative output regulation for
	linear multi-agent systems. \textit{Automatica}, \textit{103}, 62-68.
	
	\bibitem[Franceschelli, Gasparri, Giua, \& Seatzu(2013)]{key-8a} Franceschelli, M., Gasparri, A., Giua, A., \& Seatzu,
	C. (2013). Decentralized estimation of Laplacian eigenvalues in multi-agent
	systems. \textit{Automatica}, 49(4), 1031-1036.
	
	\bibitem[Garcia, Cao, \& Casbeer(2017)]{key-9-1} Garcia, E., Cao, Y., \& Casbeer, D. (2017). Periodic
	event-triggered synchronization of linear multi-agent systems with
	communication delays. \textit{IEEE Transactions on Automatic Control},
	62(1), 366-371.
	
	\bibitem[Hu, Liu, \& Feng(2018)]{key-5} Hu, W., Liu, L., \& Feng, G. (2018). Cooperative
	output regulation of linear multi-agent systems by intermittent communication:
	A unified framework of time and event-triggering strategies. \textit{IEEE
		Transactions on Automatic Control}, \textit{63}(2), 548-555. 
	
	\bibitem[Hu, Liu, \& Feng(2019)]{key-6} Hu, W., Liu, L., \& Feng, G. (2019). Event-triggered
	cooperative output regulation of linear multi-agent systems under
	jointly connected topologies. \textit{IEEE Transactions on Automatic
		Control}, \textit{64}(3), 1317-1322. 
	
	\bibitem[Li, Li, \& Tong(2019)]{key-li-1} Li, Y., Li, K., \& Tong, S. (2019). Finite-time
	adaptive fuzzy output feedback dynamic surface control for MIMO non-strict
	feedback systems. \textit{IEEE Transactions on Fuzzy Systems}, vol.27,
	no.1, pp.96-110, 2019. 
	
	\bibitem[Li, Qu, \& Tong(2019)]{key-li-2}  Li, Y., Qu, F., \& Tong, S. (2019). Observer-based
	fuzzy adaptive finite time containment control of nonlinear multi-agent
	systems with input-delay. \textit{IEEE Transactions on Cybernetics,
	}DOI: 10.1109/TCYB.2020.2970454. 

\bibitem[Li, Xing, Zhao, \& Shi(2017)]{key-icl} Li, L., Xing, W., Zhao, Y.,  \& Shi, P. (2017). Stability analysis of multi-agent systems with multiple leaders of variable velocities based on consensus protocols. \textit{ ICIC Express Letters}, \textit{11}(11), 1599-1609.

	
	\bibitem[Liu, \& Huang(2018)]{key-6a} Liu, W., \& Huang, J. (2018). Cooperative global
	robust output regulation for a class of nonlinear multi-agent systems
	by distributed event-triggered control. \textit{Automatica}, \textit{93},
	138-148.
	
	\bibitem[Liu, \& Huang(2019)]{key-6aa} Liu, W., \& Huang, J. (2019). Leader-following
	consensus for linear multi-agent systems via asynchronous sampled-data
	control. \textit{IEEE Transactions on Automatic Control}, DOI: 10.1109/TAC.2019.2948256. 
	
	\bibitem[Liu, \& Huang(2020)]{key-6aaa} Liu, W., \& Huang, J. (2020). Output regulation
	of linear systems via sampled-data control. \textit{Automatica}, \textit{113},
	108684.
	
	\bibitem[Liu, \& Huang(2019)]{key-10} Liu, T., \& Huang, J. (2019). A distributed observer
	for a class of nonlinear systems and its application to a leader-following
	consensus problem. \textit{IEEE Transactions on Automatic Control},
	\textit{64}(3), 1221-1227. 
	
	\bibitem[Liu, Ma, Lewis, \& Wan(2019)]{key-10aa}  Liu, H., Ma, T., Lewis, F., \& Wan, Y. (2019).
	Robust formation trajectory tracking control for multiple quadrotors
	with communication delays. \textit{IEEE Transactions on Control Systems
		Technology}, DOI: 10.1109/TCST.2019.2942277.
	
	\bibitem[Menard, Moulay, Coirault, \& Defoort(2019)]{key-20} Menard, T., Moulay, E., Coirault, P., \& Defoort,
	M. (2019). Observer-based consensus for second-order multi-agent systems
	with arbitrary asynchronous and aperiodic sampling periods. \textit{Automatica},
	\textit{99}, 237-245.
	
	\bibitem[Meng, Xie, \& Soh(2017)]{key-11} Meng, X., Xie, L., \& Soh Y. (2017). Asynchronous
	periodic event-triggered consensus for multi-agent systems\textit{.
		Automatica}, \textit{84}, 214-220.
	
	\bibitem[Nowzari, Garcia, \& Cortés(2019)]{key-2a-1} Nowzari, C., Garcia E., \& Cortés, J. (2019).
	Event-triggered communication and control of networked systems for
	multi-agent consensus\textit{. Automatica}, \textit{105}, 1-27.
	
	\bibitem[Poveda, \& Teel(2019)]{key-b1} Poveda, J., \& Teel, A. (2019). Hybrid mechanisms
	for robust synchronization and coordination of multi-agent networked
	sampled-data systems. \textit{Automatica}, 99, 41-53.
	
	\bibitem[Shi,  \& Shen(2017)]{key-b2} Shi, P., \& Shen, Q. (2017). Observer-based leader-following
	consensus of uncertain nonlinear multi-agent systems. \textit{International
		Journal of Robust and Nonlinear Control}, 27(17), 3794-3811.
	

   \bibitem[Singh, Tiwari, Garg(2018)]{key-b33} Singh, A., Tiwari, V., \& Garg, P. (2018). Anonymous decision logic for secure multi-agent computing. 	\textit{ICIC Express Letters Part B Applications}, \textit{9}(12), 1217-1222.

	\bibitem[Stankovic, Stankovic, \& Johansson(2018)]{key-b4} Stankovic, M., Stankovic, S., \& Johansson, K. (2018).
	Distributed time synchronization for networks with random delays and
	measurement noise. \textit{Automatica}, 93, 126-137.
	
	\bibitem[Su(2019)]{key-2} Su, Y. (2019). Semi-global output feedback cooperative
	control for nonlinear multi-agent systems via internal model approach.
	\textit{Automatica}, \textit{103}, 200-207.
	
	\bibitem[Sun, Hu, Xie, \& Egerstedt(2018)]{key-2a} Sun, C., Hu, G., Xie, L., \& Egerstedt, M. (2018).
	Robust finite-time connectivity preserving coordination of second-order
	multi-agent systems. \textit{Automatica}, \textit{89}, 21-27.
	
	\bibitem[Wang, Postoyan, Nesic, \& Heemels(2019)]{key-7} Wang, W., Postoyan, R., Nesic, D., \& Heemels, W.
	(2019). Periodic event-triggered control for nonlinear networked control
	systems. \textit{IEEE Transactions on Automatic Control}, DOI: 10.1109/TAC.2019.2914255.
	
	\bibitem[Xing, Wen, Liu, Su, \& Cai(2017)]{key-7a} Xing, L., Wen, C, Liu, Z., Su, H., \& Cai, J. (2017).
	Event-triggered adaptive control for a class of uncertain nonlinear
	systems. \textit{IEEE Transactions on Automatic Control}, \textit{62}(4),
	2107-2076. 
	
		\bibitem[Xu, Yang, Wang, \& Shu(2019)]{key-inn} Xu, M., Yang, P., Wang, Y. \& Shu, Q. (2019). Observer-based multi-agent system fault upper bound estimation and fault-tolerant consensus control. \textit{International Journal of Innovative Computing, Information and Control}, \textit{15}(2), 519-534.
	
	\bibitem[Yang, \& Liberzon(2018)]{key-14} Yang, G., \& Liberzon, D. (2018). Feedback stabilization
	of switched linear systems with unknown disturbances under data-rate
	constraints. \textit{IEEE Transactions on Automatic Control}, \textit{63}(7),
	2107-2122. 
	
	\bibitem[Yang, Sun, Zheng, \& Li(2018)]{key-4} Yang, J., Sun, J., Zheng, W., \& Li, S. (2018). Periodic
	event-triggered robust output feedback control for nonlinear uncertain
	systems with time-varying disturbance. \textit{Automatica}, \textit{94},
	324-333.
	
	\bibitem[Yang, Zhang, Feng, Yan, \& Wang(2019)]{key-3} Yang, R., Zhang, H., Feng, G., Yan, H., \& Wang,
	Z. (2019). Robust cooperative output regulation of multi-agent systems
	via adaptive event-triggered control. \textit{Automatica}, \textit{102},
	129-136.
	
	\bibitem[Zhang, Gao, \& Kaynak(2013)]{key-14-1}  Zhang, L., Gao, H., \& Kaynak, O. (2013). Network-induced
	constraints in networked control systems-A survey. \textit{IEEE Transactions
		on Industrial Informatics}, \textit{9}(1), 403-416.
	
	\bibitem[Zheng, Shi, Wang, \& Shi(2019)]{key-15} Zheng, S., Shi, P., Wang, S., \& Shi, Y. (2019).
	Event-triggered adaptive fuzzy consensus for interconnected switched
	multiagent systems. \textit{IEEE Transactions on Fuzzy Systems}, \textit{27}(1),
	144-158.
	
		\bibitem[Zhu, \& Zheng(2019)]{key-16a} Zhu, Y., \& Zheng, W. (2020).
	Multiple Lyapunov functions analysis approach for discrete-time switched piecewise-affine systems under dwell-time constraints. \textit{IEEE Transactions on Automatic Control}, \textit{65}(5),
	2177-2184.
	
			\bibitem[Zhu, Zheng, \& Zhou(2019)]{key-17} Zhu, Y.,  Zheng, W., \& Zhou, D. (2020).
	Quasi-synchronization of discrete-time Lur'e-type switched systems with parameter mismatches and relaxed PDT constraints. \textit{IEEE Transactions on Cybernetics}, \textit{50}(5),
	 2026-2037.
\end{thebibliography}
\end{document}